\newtheorem{lem}{Lemma}
\newtheorem{prop}{Proposition}
\newtheorem{cor}{Corollary}
\theoremstyle{definition}
\theoremstyle{plain}
\newtheorem{rem}{Remark}
\def\BibTeX{{\rm B\kern-.05em{\sc i\kern-.025em b}\kern-.08em
		T\kern-.1667em\lower.7ex\hbox{E}\kern-.125emX}}
\begin{document}
\title{Half-Duplex APs with Dynamic TDD vs. Full-Duplex APs in Cell-Free Systems}
\author{Anubhab Chowdhury and Chandra R. Murthy, \IEEEmembership{Fellow, IEEE}	
\thanks{Financial support for this work from the Qualcomm Innovation Fellowship is gratefully acknowledged. A part of this work will be presented at the IEEE ICASSP in Apr. 2024. (\textit{Corresponding author: Chandra R. Murthy}.)}
\thanks{The authors are with the ECE Dept., Indian Institute of Science, Bangalore, 560012, India. (e-mails:\{anubhabc, cmurthy\}@iisc.ac.in.)}
}

\maketitle
\IEEEpeerreviewmaketitle
\begin{abstract}
In this paper, we present a comparative study of half-duplex (HD) access points (APs) with dynamic time-division duplex (DTDD) and full-duplex (FD) APs in cell-free (CF) systems. Although both DTDD and FD CF systems support concurrent downlink~(DL) transmission and uplink~(UL) reception capability, the sum spectral efficiency (SE) is limited by various cross-link interferences. We first present a novel pilot allocation scheme that minimizes the pilot length required to ensure no pilot contamination among the user equipments~(UEs) served by at least one common AP. Then, we derive the sum SE in closed form, considering zero-forcing combining and precoding along with the signal-to-interference plus noise ratio optimal weighting at the central processing unit. We also present a provably convergent algorithm for joint UL-DL power allocation and UL/DL mode scheduling of the APs (for DTDD) to maximize the sum SE. Further, the proposed algorithms are precoder and combiner agnostic and come with closed-form update equations for the UL and DL power control coefficients. Our numerical results illustrate the superiority of the proposed pilot allocation and power control algorithms over several benchmark schemes and show that the sum SE with DTDD can outperform an FD CF system with similar antenna density. Thus, DTDD combined with CF is a promising alternative to FD that attains the same performance using HD APs, while obviating the burden of intra-AP interference cancellation.
\end{abstract}
\begin{IEEEkeywords}
	Cell-free MIMO, full-duplex, dynamic TDD, pilot allocation, fractional programming
\end{IEEEkeywords}

\section{Introduction}
Wireless systems for $5$G and beyond are required to serve an increasingly large number of user-equipments~(UEs) while supporting uniformly good quality-of-service~(QoS) and high spectral efficiency~(SE) requirements. With this in mind, two potential physical layer solutions have been envisioned: $(i)$ the capability to serve both uplink~(UL) and the downlink~(DL) UEs using the same time-frequency resources, thereby potentially doubling the sum UL-DL SE over conventional time/frequency division duplexing~(T/FDD) systems, and $(ii)$ distributed deployments of the remote radio units or access points~(APs) for ubiquitous connectivity and high macro-diversity as opposed to a centralized MIMO cellular system. The key enablers of the above are: $(i)$ the use of full-duplex~(FD) APs~\cite{Ashutosh_1} or the use of dynamic time-division duplexing~(DTDD) with half-duplex (HD) APs~\cite{DTDD_CF_Giuseppe,DTDD_TCoM,Hyejin_CST_2020}; and $(ii)$ cell-free multiple-input multiple-output~(CF-MIMO)~\cite{zheng2023mobile, J_Zheng_Asynch_CF, making_cellfree,Ubiquitous}; respectively. This paper presents a comparative performance analysis of FD and DTDD CF-MIMO\footnote{DTDD CF and FD CF are also referred to as network-assisted full-duplex~(NAFD) CF MIMO  in the literature~\cite{NAFD_Xia, Joint_UE_NAFD}.} systems. 

We note that, in CF systems, both FD and DTDD can enable simultaneous transmission (reception) to (from) HD UEs over the same time-frequency resources. However, cross-link interferences~(CLIs), i.e., the inter-AP interference~(InAI) and inter-UE interference~(InUI), limit the achievable SE in both systems. Additionally, in the FD system, the received signal at each AP is contaminated by its own transmitted signal, called \emph{self-interference~(SI)}~\cite{Ashutosh_1},  which we refer to as intra-AP interference~(IrAI) in the sequel.  The cancellation of  IrAI demands power-hungry and expensive hardware in addition to baseband signal processing overheads. However, if IrAI can be effectively canceled, in an FD-CF system, \emph{all} the APs in the vicinity can assist in a given UE's transmission/reception, while in the DTDD system, only the subset of APs operating in UL (DL) can assist in decoding (precoding) the UE's data signal.
On the other hand, as we will see, InAI is higher in an FD system, as \emph{all} the APs interfere with the received signal at any AP. However, with DTDD, only the subset of APs operating in DL mode cause InAI at the APs operating in UL mode. 
Thus, which duplexing scheme is better and under what conditions is not clear. Answering this question via a careful theoretical analysis is the main aim of this paper. 

\subsubsection{Literature review}
CF, DTDD, and FD are all technologies that have received intense research attention over the past few years. The focus of this literature review is on studies that consider either DTDD or FD systems in the CF architecture.
The analysis of the sum SE of a DTDD CF system with fully centralized processing was presented in~\cite{Pilot_Cont_DTDD}; however, with a pre-scheduled set of UL and DL APs. Further, in~\cite{NAFD_Xia, xia2021joint},
	the authors addressed the AP-mode selection for DTDD-based CF-systems to maximize the UL/DL SEs, with power constraints at the UEs and APs. In~\cite{D_Wang_System_J}, 
	the authors considered a hybrid-duplex~(both FD and HD APs) architecture, where antenna mode assignment at each multi-antenna AP is solved under the goal of secrecy SE maximization~\cite{D_Wang_System_J}. Here, we note that~\cite{D_Wang_System_J, xia2021joint,  NAFD_Xia}
	assume the availability of perfect channel state information~(CSI) \emph{at the central processing unit~(CPU),} rather than the statistical/estimated CSI. Therefore, all resource allocation needs to be executed in the time scale of fast-fading coefficients. Moreover, the authors assumed a fully centralized CPU-based CF system. Thus, the APs must send the full CSI to the CPU, incurring high front-haul overhead, especially when the numbers of APs and UEs are large, which is not scalable~\cite{Emil_CF_Book, making_cellfree}. In contrast, consideration of distributed processing, where precoding/combining is performed locally at the APs, is critical for scalable CF systems.

In~\cite{DTDD_TCoM}, we presented a distributed processing-based sum SE analysis with a low complexity algorithm for AP scheduling and fixed UL and DL power allocation. This work significantly extends our previous work \cite{DTDD_TCoM} by considering minimum mean squared error (MMSE)/zero-forcing (ZF) precoding and combining at the APs and signal-to-interference plus noise ratio (SINR)-optimal combining at the CPU, pilot allocation via graph coloring, joint UL-DL power allocation for sum SE maximization, and comparing the performance against an FD system.
Recently, the authors in~\cite{SPAWC_Virtual_Duplex} investigated the joint UL-DL power allocation problem, assuming perfect CSI, in a DTDD CF system. The power allocation policy of~\cite{SPAWC_Virtual_Duplex} is derived for a maximal ratio combiner and precoder~(MRC and MFP), with equal weighting-based combining of the APs' signals at the CPU in the UL, which is suboptimal for CF~\cite{making_cellfree}. Recently, the authors extended their work in~\cite{SPAWC_Virtual_Duplex} incorporating large-scale-fading-decoding~(LSFD) for the UL combining at the CPU~\cite{Mohammadi_JSAC}. However, the analyses in~\cite{Mohammadi_JSAC} assume the availability of orthogonal pilots for channel estimation and MRC/MFP for UL and DL data detection. Former can lead to inordinately high pilot overhead, and MRC/MFP are suboptimal choices for CF-system~\cite{making_cellfree}. Thus, it is essential to consider  ZF or MMSE-type combiners and precoders that are more robust to interference, which we address in this work.

On the other hand,  the performance improvement achieved by FD APs over conventional TDD CF systems has been investigated in~\cite{FD_CF_ICC, FD_CF_JSAC, FD_CF_TVT}. An expository study on the interplay of the CLIs and IrAI on the achievable SEs under DTDD and FD was conducted in~\cite{NCC_anubhab}. However, the results obtained were based on fixed power allocation and under the availability of perfect CSI. Now, finding a pilot allocation scheme in a CF system is challenging because multiple APs jointly serve the UEs in the area. Specifically, in contrast to a cellular system where only the serving AP requires CSI from a given UE, in CF, accurate CSI is required at all the APs in the vicinity of the UE. 
The works that account for pilot allocation in DTDD CF systems consider either complex iterative algorithm~\cite{DTDD_TCoM}, random allocation of pilots~\cite{Pilot_Cont_DTDD}, or assume orthogonal pilots across all UEs~\cite{SPAWC_Virtual_Duplex}. Recent works on FD CF systems have also considered orthogonal pilots~\cite{FD_CF_ICC, FD_imperfect_CSI, Joint_UE_NAFD} or have analyzed the effect of pilot contamination by abstracting it as an additive channel error term in the channel estimate~\cite{FD_CF_TVT}, which does not explicitly account for the pilot length, design or allocation across the UEs. Here, in contrast to pilot allocation with predetermined pilot length~\cite{DTDD_TCoM, NCC_anubhab, Lozano_Pilot, Heng_Liu_TVT, Pilot_power, Location_Pilot, Cluster, ICC_pilot, cell_free_small_cells}, we optimize the pilot length, and develop a low-complexity solution that incurs little signal processing overhead, and is applicable in both DTDD and FD settings.

\subsubsection{Contributions}
To the best of our knowledge, a comprehensive study of these two duplexing schemes accounting for practical issues such as pilot length optimization, optimal weighting at the CPU, interference canceling precoder and combiner design, AP scheduling, and UL/DL power allocation is not available in the literature. It is essential to account for these aspects because the critical bottleneck, namely, the CLIs of DTDD and FD, is heavily dependent on and can be controlled by these factors. In this regard, our analysis accounts for pilot allocation with limited orthogonal resources, and, consequently, includes the effects of coherent interference on the optimal weights and sum UL-DL SE, which was missing in previous work~\cite{Mohammadi_JSAC}. Further, we provide closed-form expressions for the SE with ZF combiner and precoder and, subsequently, closed-form updates for all the power allocation algorithms developed. Our algorithms need to be executed only in the time scale of large-scale fading, which remains constant for several channel coherence intervals, in contrast to instantaneous CSI-based approaches in~\cite{D_Wang_System_J, xia2021joint,  NAFD_Xia}.
Our key contributions are:
\begin{enumerate}[label=\arabic*.]	
\item We minimize the number of pilots required to ensure orthogonality among the UEs in close proximity via formulating an equivalent graph coloring problem, with the constraint that connected vertices~(i.e., the UEs which are connected to a common AP) are allotted distinct colors~(orthogonal pilots). Although the problem is NP-hard in general, it can be optimally solved for bipartite graphs (as is the case in our problem) 
via a low complexity greedy algorithm~(see Algorithm~\ref{algo:pilot_allocation}), thereby minimizing the number of colors~(i.e., the pilot length). We empirically show that Algorithm~\ref{algo:pilot_allocation} offers a substantial improvement in the normalized mean square error~(NMSE) of the estimated channels as well as the sum UL-DL SE compared to several existing greedy/iterative methods (see Fig.~\ref{fig:SNR_p_vs_NMSE} and Fig.~\ref{fig:fig_CDF_pilot_allocations}).
	
	\item We analyze the sum UL-DL SE considering MMSE combiners and regularized ZF~(RZF) precoders. We also derive closed-form expressions for the sum UL-DL SE with ZF combiners and precoders~(see Lemma~\ref{lemm:UL_SINR_DTDD} and Lemma~\ref{lem:DL_SE_DTDD}). These expressions uncover the effects of InAI, IrAI, and InUI on the UL-DL SEs, and how power control and  UL/DL scheduling of the APs~(for DTDD) dictate the strengths of these CLIs. Also, in the UL, we present an SINR optimal weighting scheme, which ensures that the received SINR at the CPU is maximized~(see Lemma~\ref{lem:optimal_weights}). 
	\item Next, we focus on the sum UL-DL SE maximization with set constraints on the UL/DL APs and transmit power constraints on the APs and UEs. This problem of joint AP scheduling and power control is non-convex and NP-hard. We decouple it into two sub-problems. 
	\begin{enumerate}[label=\roman*.,topsep=0pt,nosep]
		\item We optimize the UL and DL power control coefficients for a given AP schedule. We solve this non-convex problem using fractional programming~(FP)\footnote{FP convexifies the non-convex cost function such that the optimal solution of the surrogate cost function and the original cost function is the same~\cite{FP_I}. This is in contrast with the approach adopted in~\cite{SPAWC_Virtual_Duplex, FD_CF_JSAC}, where the convex cost function is typically a lower bound of the original cost function and the algorithms optimize the lower bound.}  that employs a series of equivalent convex reformulations. Further, closed-form solutions for the power allocation coefficients and associated auxiliary variables are derived using the alternating direction method of multipliers~(ADMM)\footnote{ADMM is an effective approach for reducing the computational cost in large dimensional problems compared to interior-point methods or general purpose solvers such as CVX or MOSEK~\cite{Boyd_ADMM}.} in the case of DL and using an augmented Lagrange multiplier in the case of UL.  Also, in the UL, we observe considerable improvement in the proposed SINR optimal combining coupled with UL power control compared to existing benchmarks and when either of these two schemes is applied individually~(see Fig.~\ref{fig:UL_power_control}). The resulting algorithms for each sub-problem are shown to converge to local optima~(see Proposition~\ref{prop:conv_UL} and Proposition~\ref{prop:DL_power_control}). Finally, our proposed FP-based algorithms are precoder/combiner scheme agnostic, unlike~\cite{SPAWC_Virtual_Duplex, Mohammadi_JSAC}, and require fewer auxiliary variables, which makes our solutions widely applicable and scalable for large distributed systems. 
		
		\item For AP scheduling, we develop a greedy AP mode~(UL/DL) selection algorithm, where, at each iteration, we select the AP and the corresponding mode such that the incremental gain in the sum UL-DL SE is maximum. This pragmatic low-complexity approach solves an otherwise exponentially complex scheduling algorithm in polynomial time.
	\end{enumerate}
\end{enumerate}

We perform extensive numerical experiments that reveal the superiority of the proposed pilot length optimization and pilot allocation scheme, the UL/DL power control algorithms, and AP-scheduling algorithm over several existing schemes~(see Fig.~\ref{fig:UL_power_control} and Fig.~\ref{fig:DL_power_control}). Surprisingly, our results show that for the same number of APs and antenna density, DTDD procures a better sum UL-DL SE compared to an FD-enabled CF system~(see Fig.~\ref{fig:figs_DTDD_FD}). Specifically, the $90\%$-likely sum UL-DL SE of the DTDD CF system is $21\%$ higher than that of the FD system under similar system parameters~(see Fig.~\ref{fig:CDF_DTDD_FD}). Further, we observe that even with double the antenna density, the performance of the FD system can be limited by InAI and IrAI, while DTDD is more resilient to InAI~(see Fig.~\ref{fig:InAI}). Thus, we can obtain the benefits of FD via DTDD itself, obviating the need for IrAI suppression at the APs.

\emph{Notation:} 
The transpose, hermitian, complex conjugation, and trace operations are denoted by $(\cdot)^T$, $(\cdot)^H$, $(\cdot)^*$, and ${\tt tr}(\cdot)$, respectively. For a matrix $\mathbf{A}\in\mathbb{C}^{M\times N}$, $[\mathbf{A}]_{:,j}\in\mathbb{C}^{M\times 1}$ denotes its $j$th column. For a vector $\mathbf{x}\in\mathbb{C}^{N}$, $[\mathbf{x}]_{m}$ denotes its $m$th entry. $\mathsf{diag}(\mathbf{x})\in\mathbb{C}^{N\times N}$ is a diagonal matrix with $m$th diagonal entry being $[\mathbf{x}]_{m}$.
$'| \cdot |'$, $'\backslash'$, $'\cup'$, and $'\mathsf{c}'$ denote the cardinality, set difference, union, and complement of sets, respectively. The symbol $\emptyset$ denotes an empty set. 
$\mathbb{E}[\cdot]$ and ${\tt var}\{\cdot\}$ denote the mean and variance of a random variable/vector, respectively. $\mathbf{x}\sim\mathcal{CN}(\mathbf{0}_{N},\mathbf{R}_{N})$ indicates that $\mathbf{x}\in\mathbb{C}^{N}$ is a zero mean~($\mathbf{0}_{N}$) circularly symmetric complex Gaussian (CSCG) random vector with covariance matrix $\mathbf{R}_N\in\mathbb{C}^{N\times N}$.
\section{System Model}
In the DTDD CF setup,  $M$ HD-APs, each equipped with $N$ antennas, jointly and coherently serve a total of $K$ single antenna UL and DL UEs using the same time-frequency resources. Let the sets $\mathcal{U}_{\mathsf{u}}$ and $\mathcal{U}_{\mathsf{d}}$ contain the indices of UL UEs and DL UEs, respectively, with $\mathcal{U}_{\mathsf{u}}\cap\mathcal{U}_{\mathsf{d}}=\emptyset$, $\mathcal{U}_{\mathsf{u}}\cup\mathcal{U}_{\mathsf{d}}=\mathcal{U}$, and $\lvert\mathcal{U}\rvert=K$.  
The UL channel from the $k$th UE to the $m$th AP is modeled as $\mathbf{f}_{mk} =\sqrt{\beta_{mk}}\mathbf{h}_{mk} \in \mathbb{C}^N$, where $\beta_{mk} > 0$ captures the effect of large scale fading and path-loss which remain unchanged over several channel coherence intervals and are known to the APs and the CPU~\cite{making_cellfree}. The fast fading component, $\mathbf{h}_{mk}  \sim \mathcal{CN}(\mathbf{0}_{N},\mathbf{I}_N) \in \mathbb{C}^N$, is independent and identically distributed~(i.i.d.) and is estimated at the APs using pilot signals at the beginning of each coherence block. 

\begin{figure*}
	\centering
\begin{subfigure}{0.35\linewidth}
	\centering
	\includegraphics[width=\textwidth]{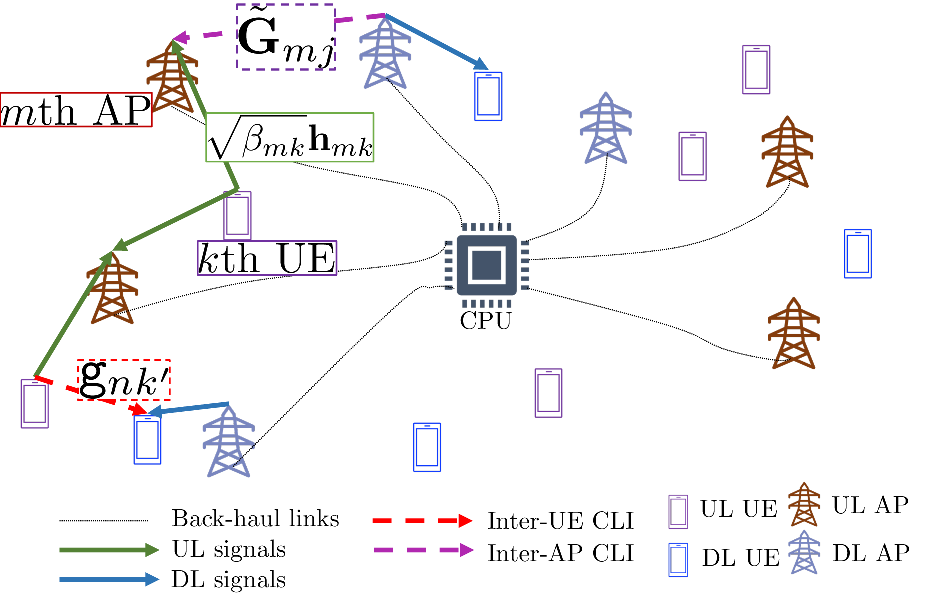}
	\caption{DTDD CF MIMO system: the overall system can serve UL and DL UEs simultaneously, forming a \emph{virtual} FD system.}\label{fig:system_model_DTDD}
	\end{subfigure}\hfill
\begin{subfigure}{0.35\linewidth}
	\centering
	\includegraphics[width=\textwidth]{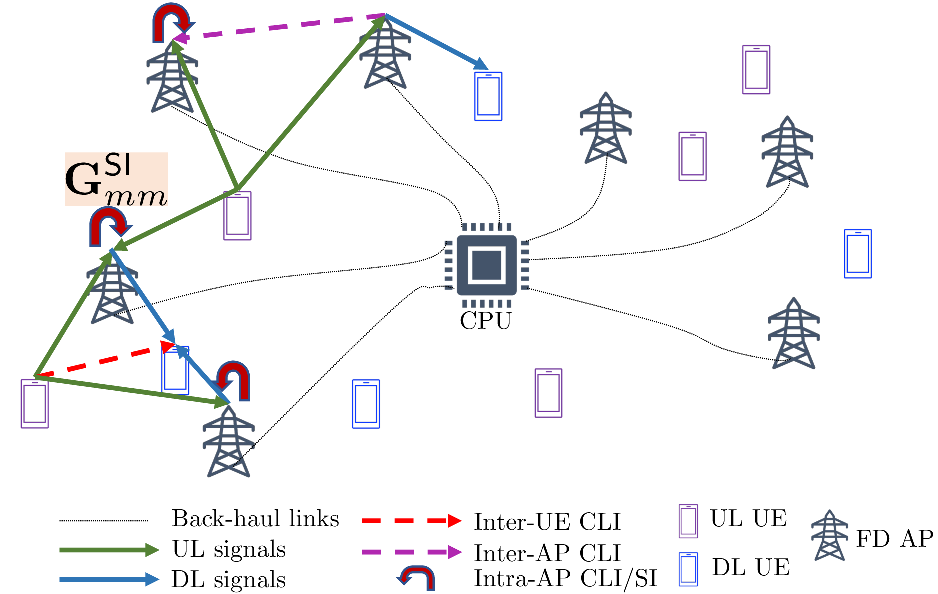}
	\caption{FD-enabled CF system: each AP can serve both UL and DL UEs; however, the APs suffer from IrAI. }\label{fig:system_model_FD}
\end{subfigure}\hfill
\begin{subfigure}{0.28\linewidth}
		\centering
		\includegraphics[width=\textwidth]{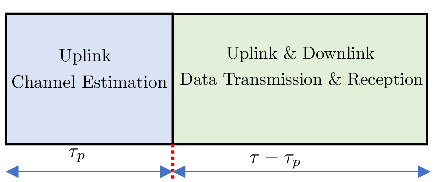}
		\caption{Frame structure for DTDD and FD. This allows us to use a unified pilot allocation strategy for both the duplexing schemes.}\label{fig:frame_structure}
	\end{subfigure}
\caption{The system models and frame structure for DTDD and FD CF systems, respectively.}
\end{figure*}

In the FD system, each AP is equipped with $N_{\mathsf{tx}}$ transmit and $N_{\mathsf{rx}}$ receive antennas. Let $\mathbf{f}_{\mathsf{u}, mk}=\sqrt{\beta_{\mathsf{u}, mk}}\mathbf{h}_{\mathsf{u}, mk}\in\mathbb{C}^{N_{\mathsf{rx}}}$ and $\mathbf{f}_{\mathsf{d}, mn}=\sqrt{\beta_{\mathsf{d}, mn}}\mathbf{h}_{\mathsf{d}, mn}\in\mathbb{C}^{N_{\mathsf{tx}}}$ be the UL channel between the receive antenna array of the $m$th AP to $k$th UL UE and the DL channel between the transmit antenna array of the $m$th AP to the $n$th DL UE, respectively. Here, $\beta_{\mathsf{i}, mk}$ and $\mathbf{h}_{\mathsf{i}, mk}$, $\mathsf{i}=\{\mathsf{u},\mathsf{d}\}$, corresponds to the fast and slow fading components of the UL and the DL channels, respectively, and follow similar statistical modeling as in the DTDD system.

The inter-AP channels remain constant for several coherence intervals and are mostly line-of-sight~(LoS), whose CSI can be made available to the CPU before the pilot and data transmission phase. Thus, the overhead of inter-AP channel estimation does not affect the available data transmission duration. However, the inter-AP CSI at the CPU may be erroneous, which we model using Gaussian distributed additive noise as per~\cite{Ashutosh_1,DTDD_TCoM,FD_CF_ICC}. Specifically, the inter-AP channel from the $j$th DL AP to the $m$th UL AP is denoted by $\tilde{\mathbf{{G}}}_{mj}\in \mathbb{C}^{N\times N}$ whose elements are i.i.d. $\mathcal{CN}(0,\zeta_{mj}^{\mathsf{InAP}})$. Here $\zeta_{mj}^{\mathsf{InAP}}$ captures the effects of both the large scale fading and the power of the residual InAI.

For FD CF,  IrAI at each AP can be suppressed via active SI cancelation and antenna isolation. It has been argued that the residual IrAI follows the Rayleigh distribution~\cite{Riihonen_loopback}. Following this, we model $m$th AP's residual IrAI channel between the transmit and the receive antenna links as $\mathbf{G}_{mm}^{\mathsf{SI}}\sim\mathcal{CN}(0,\zeta_{mm}^{\mathsf{SI}})$, where the residual IrAI $\zeta_{mm}^{\mathsf{SI}}$ depends on the interference suppression capability of the hardware\cite{FD_CF_TVT,FD_imperfect_CSI}.

Finally, let $\mathtt{g}_{nk}$ denote the channel between $k$th UL UE and the $n$th DL UE, modeled as $\mathcal{CN}(0,\epsilon_{nk})$, and is independent across all UEs~\cite{Ashutosh_1,DTDD_TCoM}. The channel modeling discussed above is illustrated in Fig.~\ref{fig:system_model_DTDD} and Fig.~\ref{fig:system_model_FD}.

\subsection{Problem Statement} Having described the system model, we now present the key problems considered in this paper. As illustrated in Fig.~\ref{fig:system_model_DTDD} and Fig.~\ref{fig:system_model_FD}, the performance of both DTDD and FD is affected by CLIs, viz. InAI, IrAI (for FD), and InUI. Now, the strengths of these CLIs, as we shall see in Sec.~\ref{sec:DTDD_SE},  depend on the estimated UL/DL channel statistics, choice of precoder and combiners, and, most critically, on the UL and DL power allocation strategies. Further, in DTDD, we get the additional flexibility to schedule the APs' UL/DL modes, which can reduce InAI. Keeping these in mind, our goal is to maximize the achievable sum UL-DL SE under these two duplexing schemes.\footnote{In~\cite{Ubiquitous}, the authors point out that since a distributed  CF system is inherently fair, additionally requiring fairness degrades the sum SE without an appreciable improvement in the fairness criterion. Thus, we consider the sum UL-DL SE as the metric to be maximized.} In this regard, we pose and address the following problems: $(i)$ Allocation of pilot signals that can ensure no contamination in the APs in the vicinity of every UE while using a minimum number of orthogonal pilots; $(ii)$ design of combiners and precoders at the APs and the CPU to reduce interference; $(iii)$ AP scheduling in the UL/DL modes for DTDD; and $(iv)$ UL/DL power allocation policies for DTDD and FD to maximize the sum UL-DL SE. We begin with channel estimation and pilot allocation in the next section.
\begin{rem}
Typically, the CPU consists of multiple cores with a multi-threaded software architecture capable of processing UL and DL data simultaneously for DTDD and FD. Essentially, the DL threads are run on a subset of the cores at the CPU, and the UL threads are run using a different set of cores.
\end{rem}

\section{Channel Estimation and Pilot Allocation}
\subsubsection{DTDD CF}
	During the channel estimation phase, the UL and DL UEs synchronously send UL pilot sequences to the APs. 
	The APs use the received pilot signals to estimate the channels between the UEs and the APs. Now, allocating orthogonal pilot sequences to all the UEs can incur inordinately high channel estimation overhead. Therefore, we consider that pilot length is constrained to be $
\tau_{p}$, where $\tau_{p}\leq K$, and these pilot sequences are reused among the UEs. Let $\mathcal{P}=\{\boldsymbol{\varphi}_1,\boldsymbol{\varphi}_2,\ldots,\boldsymbol{\varphi}_{\tau_p}\}$ be the set of orthonormal pilot sequences, where $\boldsymbol{\varphi}_l\in\mathbb{C}^{\tau_p}$. Let $l(k)$ denote the index of the pilot used by the $k$th UE and  $\mathcal{P}_{l(k)}$ denote the set of UE indices, including the $k$th UE, that use $\boldsymbol{\varphi}_{l(k)}$. Therefore, $\langle \boldsymbol{\varphi}_{l(k)}, \boldsymbol{\varphi}_{l(k')}\rangle = 1$, if $k'\in\mathcal{P}_{l(k)}$, and equals to $0$ if $k'\notin\mathcal{P}_{l(k)}$. Let $\mathcal{E}_{p,k}$ be the power of the pilot signal of the $k$th UE. Then, the received signal matrix at the $m$th AP becomes  $\mathbf{Y}_{p,m}=\sum\nolimits_{k=1}^{K}\sqrt{\tau_{p}\mathcal{E}_{p,k}}\mathbf{f}_{mk}\boldsymbol{\varphi}_{l(k)}^{T}+\mathbf{W}_{p,m}\in\mathbb{C}^{N\times \tau_{p}},$
where $\mathbf{W}_{p,m}$ is the receiver noise matrix with i.i.d. entires distributed as $\mathcal{CN}(0, N_{0})$. Post-multiplying $\mathbf{Y}_{p,m}$ by $\boldsymbol{\varphi}_{l(k)}^{*}$, the processed signal at the $m$th AP becomes $\mathbf{y}_{p,m}=\sqrt{\mathcal{E}_{p,k}\tau_{p}}\mathbf{f}_{mk}+\sum\nolimits_{n\in\mathcal{P}_{l(k)}\backslash k}\sqrt{\mathcal{E}_{p,n}\tau_{p}}\mathbf{f}_{mn}+\mathbf{w}_{p,m},$ with $\mathbf{w}_{p,m}=\mathbf{W}_{p,m}\boldsymbol{\varphi}_{l(k)}^{*}\sim\mathcal{CN}(\mathbf{0}_{N},N_{0}\mathbf{I}_N)$. We can evaluate the MMSE estimate estimate of the channel $\mathbf{f}_{mk}$, denoted by $\hat{\mathbf{f}}_{mk}$, using $\mathbf{y}_{p,m}$, as $\hat{\mathbf{f}}_{mk}={\mathbb{E}[\mathbf{f}_{mk}\mathbf{y}_{p,m}^{H}]}({\mathbb{E}[\mathbf{y}_{p,m}\mathbf{y}_{p,m}^{H}]})^{-1}\mathbf{y}_{p,m}$~\cite[see Chapter $12$]{Kay_Estimation}, which simplifies to $\hat{\mathbf{f}}_{mk}=\sqrt{\tau_{p}\mathcal{E}_{p,k}}\beta_{mk}c_{mk}\mathbf{y}_{p,m}$, with $c_{mk}\triangleq({\tau_{p}\mathcal{E}_{p,k}\beta_{mk}+\tau_{p}\sum\nolimits_{n\in\mathcal{P}_{l(k)}\backslash k}\mathcal{E}_{p,n}\beta_{mn}+N_{0}})^{-1}$. Further,  $\hat{\mathbf{f}}_{mk}\sim\mathcal{CN}(\mathbf{0}_{N},\alpha_{mk}^2\mathbf{I}_{N})$, with $\alpha_{mk}^2=c_{mk}\tau_{p}\mathcal{E}_{p,k}\beta_{mk}^2$. The estimation error, denoted by $\tilde{\mathbf{f}}_{mk} \triangleq {\mathbf{f}}_{mk} - \hat{\mathbf{f}}_{mk}$, is distributed as $\mathcal{CN}(\mathbf{0}_{N},\bar{\alpha}_{mk}^{2}\mathbf{I}_{N})$, with $\bar{\alpha}_{mk}\triangleq\sqrt{\beta_{mk}-\alpha_{mk}^2}$, and $\tilde{\mathbf{f}}_{mk}$ is uncorrelated with $\hat{\mathbf{f}}_{mk}$ due to orthogonality principle.
	\subsubsection{FD CF}
For the FD system, all UEs transmit pilots in the UL direction, and we estimate the channels between the UEs and every AP's \emph{transmit} and \emph{receive} antennas using these UL pilots. The DL precoders are later designed using channel reciprocity, which obviates the need for separate DL training~\cite{Ashutosh_1, FD_imperfect_CSI, FD_CF_ICC}. Similar to the DTDD case,  the estimated UL channel $\hat{\mathbf{f}}_{\mathsf{u}, mk}$ follows $\mathcal{CN}(\mathbf{0},\alpha_{\mathsf{u}, mk}^2\mathbf{I}_{N})$, with $\alpha_{\mathsf{u}, mk}^2=\tau_{p}\mathcal{E}_{p,k}\beta_{\mathsf{u}, mk}^2c_{\mathsf{u}, mk}$ and $c_{\mathsf{u}, mk}^{-1}={\tau_{p}\mathcal{E}_{p,k}\beta_{\mathsf{u}, mk}+\tau_{p}\sum\nolimits_{{n\in\mathcal{P}_{l(k)}\backslash k}}\mathcal{E}_{p,n}\beta_{\mathsf{u}, mn}+N_{0}}$. The estimated DL channel $\hat{\mathbf{f}}_{\mathsf{d}, mn}$ follows $\mathcal{CN}(\mathbf{0}_{N},\alpha_{\mathsf{d}, mn}^2\mathbf{I}_{N})$, with $\alpha_{\mathsf{d}, mn}^2=\tau_{p}\mathcal{E}_{p,n}\beta_{\mathsf{d}, mn}^2c_{\mathsf{d}, mn}$ and $c_{\mathsf{d}, mn}^{-1}={\tau_{p}\mathcal{E}_{p,n}\beta_{\mathsf{d}, mn}+\tau_{p}\sum\nolimits_{{n'\in\mathcal{P}_{l(n)}\backslash n}}\mathcal{E}_{p,n'}\beta_{\mathsf{d}, mn'}+N_{0}}$. The UL/DL channel estimation error $\tilde{\mathbf{f}}_{\mathsf{i}, mn}$ follows $\mathcal{CN}(\mathbf{0}_{N},\bar{\alpha}_{\mathsf{i}, mn}^{2}\mathbf{I}_{N})$, with $\bar{\alpha}_{\mathsf{i}, mn}\triangleq\sqrt{\beta_{\mathsf{i}, mn}-\alpha_{\mathsf{i}, mn}^2}$, $\mathsf{i}\in\{\mathsf{u},\mathsf{d}\}$.
	
We illustrate the frame structure for channel estimation in FD and DTDD systems in Fig.~\ref{fig:frame_structure}.
Next, we discuss the pilot allocation algorithm. Our goals are two-fold: to minimize the number of orthonormal pilots, $\tau_p$, for channel estimation and mitigate the effect of pilot contamination.

\subsubsection{Pilot Assignment Algorithm}
In CF systems, APs jointly serve the UEs. Thus, physically proximal UEs should not reuse the same pilot sequences even if their nearest APs are different. On the other hand, assigning pilot sequences so that the received signals at all the APs are contamination free requires a pilot length at least equal to the number of UEs; which in turn reduces the duration available for data transmission. However, we note that although all the APs can serve all the UEs, only a subset of APs within the vicinity of a UE receive a signal with sufficient strength for decodability. In other words, at a given AP, the pilot contamination caused by a UE located far away is minimal due to path-loss and shadowing. Hence, we consider a UE-centric clustering and ensure that given any UE, the received signals at the APs within its cluster are contamination-free. We emphasize that all the APs can still participate in data processing to/from all the UEs; we enforce orthogonality within the clusters only for the purpose of pilot allocation. This is illustrated in Fig.~\ref{fig:uncolored_graph}, where the $k$th UE is connected to the $m$th and $m'$th AP. Thus, the pilot assigned to the $k$th UE should be orthonormal to all the UEs served by the $m$th and $m'$th APs. Next, we discuss how to form such clusters.

We define the following sets:
\begin{subequations}\label{eq:sets_color}
	\begin{align}
		&\mathcal{U}_{k}\triangleq\left\{m~\mathrm{s.t.}~ \|\mathbf{u}_{k}-\mathbf{a}_{m}\|\leq r_{\mathsf{o}}, \forall m\in\mathcal{A}\right\},~\forall k\in\mathcal{U}\label{eq:U_k}\\
		&\mathcal{A}_{m}\triangleq\left\{k~\mathrm{s.t.}~ \|\mathbf{u}_{k}-\mathbf{a}_{m}\|\leq r_{\mathsf{o}}, \forall k\in\mathcal{U} \right\},~\forall m\in\mathcal{A}\label{eq:A_m},
	\end{align}
\end{subequations}
where $\mathcal{A}$ is the set of AP indices, $\mathbf{u}_{k}$ and $\mathbf{a}_{m}$ are the locations of the $k$th and $m$th AP, $r_{\mathsf{o}}\triangleq\max\left\{\max\nolimits_{k\in\mathcal{U}} d_{m_{k}k}, d_{\mathsf{SNR}_{\mathsf{o}}}\right\}$,
where $m_{k}$ is the AP index closest to the $k$th UE, i.e., if  $d_{mk}$ is the distance between the $m$th AP and the $k$th UE, then $d_{m_{k}k}=\min\left\{d_{mk}, \forall m\in\mathcal{A}\right\}$. Also, $d_{\mathsf{SNR}_{\mathsf{o}}}$ is the distance from any UE where the received SNR is at least $\gamma_{\mathrm{min}}$, i.e., $d_{\mathsf{SNR}_{\mathsf{o}}}=\max_{d}\left\{\frac{N\mathcal{E}_{p}\beta(d)}{N_{0}}\geq\gamma_{\mathrm{min}} \right\}$, with $\beta(d)=\left(d/d_{0}\right)^{-\mathrm{PL}}$, $d_{0}$ is the reference distance, and $\mathrm{PL}$ is the path-loss exponent. This choice of $r_{\mathsf{o}}$ ensures that:
\begin{enumerate}[label=(\arabic*)]
	\item There is no UE that is not connected to any AP. In particular, every UE is connected to at least one AP even if the received signal strength to its nearest AP is below $\gamma_{\mathrm{min}}$, i.e., if $\max_{k\in\mathcal{U}} d_{m_{k}k}>d_{\mathsf{SNR}_{\mathsf{o}}}$.
	\item Every UE is connected to all APs where the received signal strength is at least $\gamma_{\mathrm{min}}$. However, in a dense deployment where $\max_{k\in\mathcal{U}} d_{m_{k}k}<d_{\mathsf{SNR}_{\mathsf{o}}}$, unnecessary connections to UEs to APs where the received signal strength is below $\gamma_{\mathrm{min}}$ are avoided.
\end{enumerate}

With the clusters as defined above, our pilot length minimization problem can be written as
\begin{align}\label{eq:pilot_length_min}
	\min~&\tau_{p}\notag\\
	\mathrm{subject~to}~&\langle\boldsymbol{\varphi}_{l(k)}, \boldsymbol{\varphi}_{l(k')}\rangle=0, \forall k, k'\in\mathcal{A}_{m}, \forall m\in\mathcal{U}_{k}.
\end{align}
The constraint above ensures that any two UEs that are connected to a common AP are assigned orthogonal pilot sequences. Note that the above pilot assignment is only based on the UEs' and APs' locations and not on the channel state instantiations.
We recast~\eqref{eq:pilot_length_min}
 as a graph coloring problem. We define a graph $\mathcal{G}=\left\{\mathcal{V},\mathcal{E}\right\}$, where the vertex set $\mathcal{V}$ represents the UEs, i.e., $\mathcal{U}$, and edge set is $\mathcal{E}\triangleq\left\{e_{kk'}~\mathrm{s.t.}~ \mathcal{U}_{k}\cap\mathcal{U}_{k'}\neq\emptyset; \forall k, k'\in\mathcal{U}\right\}$. 
Thus, if two UEs, indexed by $k$ and $k'$, are connected to at least one common AP, there is an edge between them. A color assigned to a vertex represents the pilot sequence assigned to it. Then, to satisfy the constraint in~\eqref{eq:pilot_length_min}, we must ensure that any two connected vertices have distinct colors. On the other hand, if two vertices are not connected by an edge, they can potentially reuse the same color~(pilot sequence). Now, let $\mathcal{C}$ be the set of distinct colors, and $\mathcal{C}(k)$ indicate the color assigned to the $k$th UE.
Then, the equivalent coloring problem becomes
\begin{align}\label{eq:vertex_color}
	\min\quad& \lvert\mathcal{C}\rvert\notag\\
	\mathrm{subject~to}\quad &\mathcal{C}(k)\neq \mathcal{C}(k'), ~\mathrm{if}~e_{kk'}\in\mathcal{E}, \forall k, k'\in\mathcal{U}.
\end{align}
From the above arguments, we have the following proposition.
\begin{prop}
The pilot length minimization problem in~\eqref{eq:pilot_length_min} and the coloring problem in~\eqref{eq:vertex_color} are equivalent.
\end{prop}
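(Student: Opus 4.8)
The plan is to exhibit an objective- and constraint-preserving bijection between the feasible sets of the two problems, from which equality of the optimal values (and a one-to-one correspondence of minimizers) follows immediately. First I would reduce the analytic orthogonality constraint in \eqref{eq:pilot_length_min} to a purely combinatorial one. Since $\mathcal{P}=\{\boldsymbol{\varphi}_1,\ldots,\boldsymbol{\varphi}_{\tau_p}\}$ is an orthonormal set and the sequences are reused across UEs, we have $\langle\boldsymbol{\varphi}_{l(k)},\boldsymbol{\varphi}_{l(k')}\rangle=0$ if and only if $l(k)\neq l(k')$; that is, the pilot-orthogonality requirement is exactly the requirement that the two UEs receive distinct pilot indices. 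Identifying a pilot index $l(k)$ with a color $\mathcal{C}(k)$, the constraint in \eqref{eq:pilot_length_min} becomes $\mathcal{C}(k)\neq\mathcal{C}(k')$, which is syntactically the constraint in \eqref{eq:vertex_color}; what then remains is to verify that the index pairs over which this constraint is \emph{active} coincide in the two problems.

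Second, I would establish that the set of constrained pairs is precisely the edge set $\mathcal{E}$. The key observation is the symmetry of the cluster definitions in \eqref{eq:sets_color}: comparing \eqref{eq:U_k} and \eqref{eq:A_m}, we have $m\in\mathcal{U}_k \iff \|\mathbf{u}_k-\mathbf{a}_m\|\leq r_{\mathsf{o}} \iff k\in\mathcal{A}_m$. Consequently, a pair $(k,k')$ is subject to the orthogonality constraint in \eqref{eq:pilot_length_min}, i.e.\ there exists an AP $m$ with $k,k'\in\mathcal{A}_m$, if and only if there is a common AP $m\in\mathcal{U}_k\cap\mathcal{U}_{k'}$, which is exactly the condition $\mathcal{U}_k\cap\mathcal{U}_{k'}\neq\emptyset$ defining $e_{kk'}\in\mathcal{E}$. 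Hence the active-constraint pairs in \eqref{eq:pilot_length_min} are precisely the edges of $\mathcal{G}$, and a pilot assignment $l(\cdot)$ is feasible for \eqref{eq:pilot_length_min} if and only if the associated coloring $\mathcal{C}(\cdot)$ is proper for $\mathcal{G}$.

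Finally, I would match the two objectives. Under the identification above, any feasible $l(\cdot)$ uses a number of distinct pilot indices equal to the number of distinct colors $\lvert\mathcal{C}\rvert$ of the corresponding proper coloring, and feasibility requires the number of available orthonormal pilots to satisfy $\tau_p\geq\lvert\mathcal{C}\rvert$; conversely, any proper coloring with $\lvert\mathcal{C}\rvert$ colors yields a feasible pilot assignment with $\tau_p=\lvert\mathcal{C}\rvert$ by taking $\mathcal{P}$ to be any orthonormal set of that size. Therefore the two minima coincide (both equal the chromatic number of $\mathcal{G}$) and their optimizers are in one-to-one correspondence, which is the claimed equivalence. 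I expect the only step requiring care to be the second one: correctly unwinding the nested quantifier ``$\forall k,k'\in\mathcal{A}_m,\ \forall m\in\mathcal{U}_k$'' and recognizing, via the $m\in\mathcal{U}_k\iff k\in\mathcal{A}_m$ symmetry, that ``sharing a common AP'' is exactly the edge predicate $\mathcal{U}_k\cap\mathcal{U}_{k'}\neq\emptyset$. The orthonormality reduction and the objective matching are then routine once this identification is in place.
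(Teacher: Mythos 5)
Your proof is correct and follows essentially the same route as the paper, which simply asserts the proposition ``from the above arguments,'' i.e., from the identification of pilots with colors, orthogonality with distinct indices, and the shared-AP constraint with the edge set $\mathcal{E}$. Your write-up merely makes that identification explicit (including the $m\in\mathcal{U}_k\iff k\in\mathcal{A}_m$ symmetry and the bijection of feasible sets), which is a careful formalization of the paper's intended argument rather than a different approach.
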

	
The coloring problem in~\eqref{eq:vertex_color} is NP-complete~[Chapter~$3$, \cite{NP_complete}]. 
Therefore, we first recast the problem in~\eqref{eq:vertex_color} as a bipartite graph coloring problem, where there are two sets of nodes representing APs and UEs, and edges between them represent the connections defined by~\eqref{eq:sets_color}. This bipartite graph coloring problem can be solved efficiently using low complexity greedy techniques, such as the DSATUR algorithm~\cite{DSATUR}. We summarize the solution in Algorithm~\ref{algo:pilot_allocation}. 

The DSATUR algorithm procures \emph{optimal} coloring for all \emph{bipartite graphs}~\cite{DSATUR} in terms of minimizing the number of distinct colors. In each iteration, Algorithm~\ref{algo:pilot_allocation} performs two stages. The first stage selects the vertex among the uncolored vertices based on the number of connected colored or uncolored vertices. Once a vertex is chosen, it is assigned a color that has not been assigned to the vertices connected to it and has been reused the least number of times. When no distinct color is available, a new color is assigned, and the set $\mathcal{C}$ is updated. Finally, we generate orthonormal pilots based on the UE to color mapping.

Next, in Fig.~\ref{fig:uncolored_graph} and Fig.~\ref{fig:Illustrative_pilot}, we illustrate the AP-UE connectivity and the pilot assignment via an example.  We take $8$ APs over a $500$ square meter area. Fig.~\ref{fig:graph} illustrates the formation of the uncolored graph where two vertices (UEs) are connected if they share at least one common AP as per Fig.~\ref{fig:uncolored_graph}. Fig.~\ref{fig:colored_graph} shows the assignment of the pilot sequences via the corresponding colors of the UEs. Next, Fig.~\ref{fig:AP_UE_connections} demonstrates the outcome of the algorithm via a bipartite graph, where the nodes on the left-hand-side represent APs, and the right-hand-side nodes are for UEs. There is an edge if an AP and UE are connected via the set $\mathcal{A}_{m}$. We observe that all the edges emanating from an AP have distinct colors, which implies that all UEs connected to that AP are assigned orthonormal pilots.

\begin{algorithm}[!t]
	\DontPrintSemicolon
	\SetNoFillComment
	\SetNlSty{textbf}{}{:}
	\setstretch{0.8}
	\KwIn{$\mathcal{\bar{U}}=\left\{1,2,\ldots, K\right\}$, $\mathcal{C}=\emptyset$}
	\While{$\mathcal{\bar{U}}\neq\emptyset$}{
		\tcc{\textbf{Stage $1$}: Select the uncolored vertex}
		Select the $k\in\mathcal{\bar{U}}$ that has the maximum number of distinct colored vertices connected to it. \;
	If there is more than one such vertex, choose the $k\in\mathcal{\bar{U}}$ within the subset of vertices with the maximum number of distinct colors with the maximum number of vertices connected to it. \;
    Choose any $k\in\mathcal{\bar{U}}$  at random if there is more than one such vertex.\;
    \tcc{\textbf{Stage $2$}: Assign the least used color from the set of available colors}
    Assign color $c(k)$ to $k$th vertex such that
    		\begin{align}\label{eq:color_assign}
    			  &c(k)=\min\quad c(p)\in\mathcal{C}\notag\\
    			\quad&\mathrm{subject~to}~c(k')\neq c(p),\notag \\&\quad\quad\quad\quad\quad \forall k'\in \left\{l~\mathrm{s.t.}~e_{kl}\in\mathcal{E}\right\}. 
    			\end{align}\;
    		\If{$c(k)=\emptyset$}
           {Assign a new color $c(k)$ to vertex $k$\;
    	\textbf{Update}: $\mathcal{C}\leftarrow\mathcal{C} \cup c(k) $\;
    	}
}
	\caption{Pilot Allocation via Graph Coloring}\label{algo:pilot_allocation}
	\end{algorithm}

\begin{figure}[t!]
	\centering
	\includegraphics[width=0.34\textwidth]{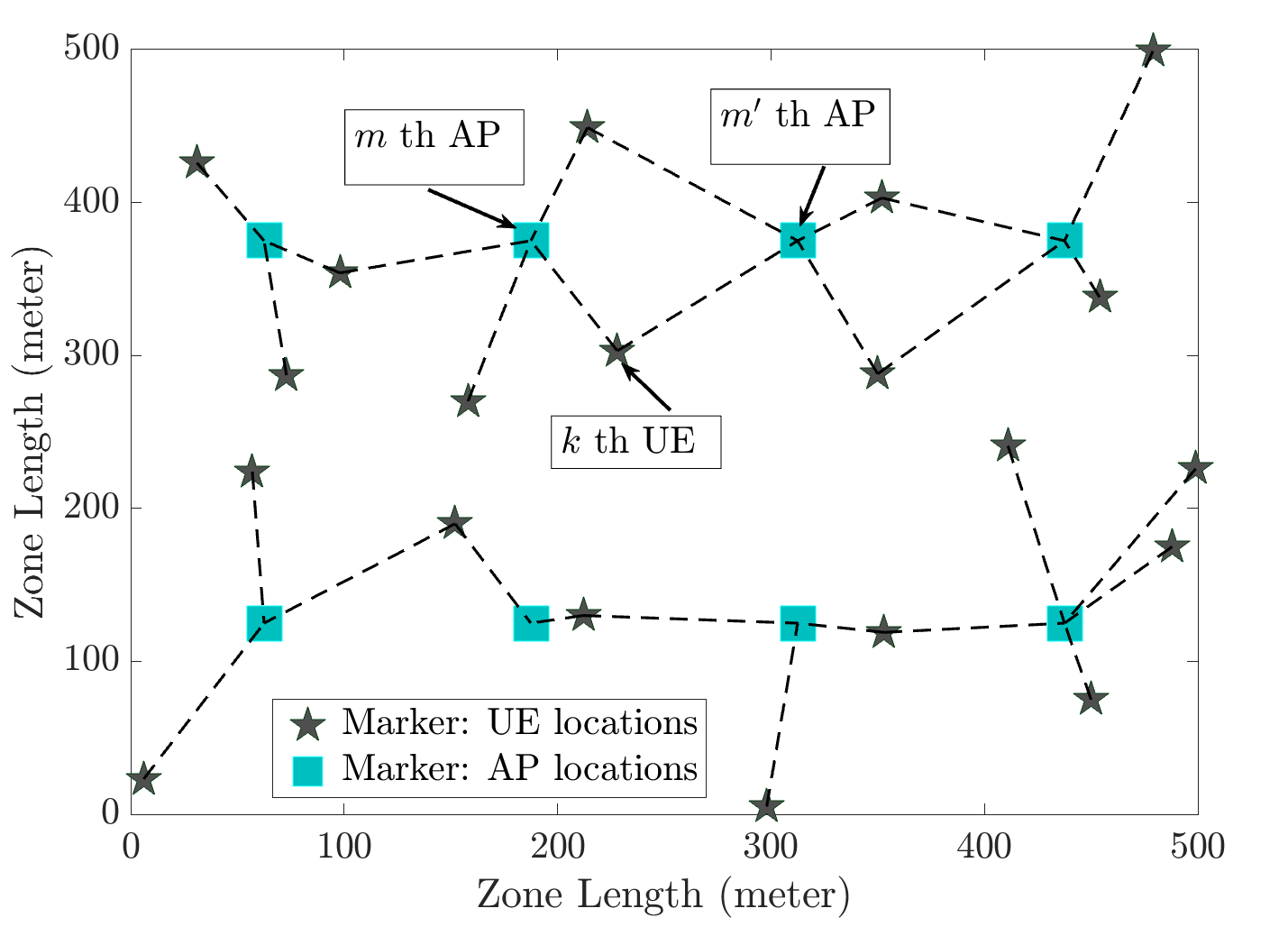}
	\caption{Uncolored AP-UE connections: the lines between the APs and UEs denote the clusters formed by $r_{\mathsf{o}}$.}\label{fig:uncolored_graph}
\end{figure}

\begin{figure*}
	\centering
\begin{subfigure}{0.328\linewidth}
	\centering
	\includegraphics[width=\textwidth]{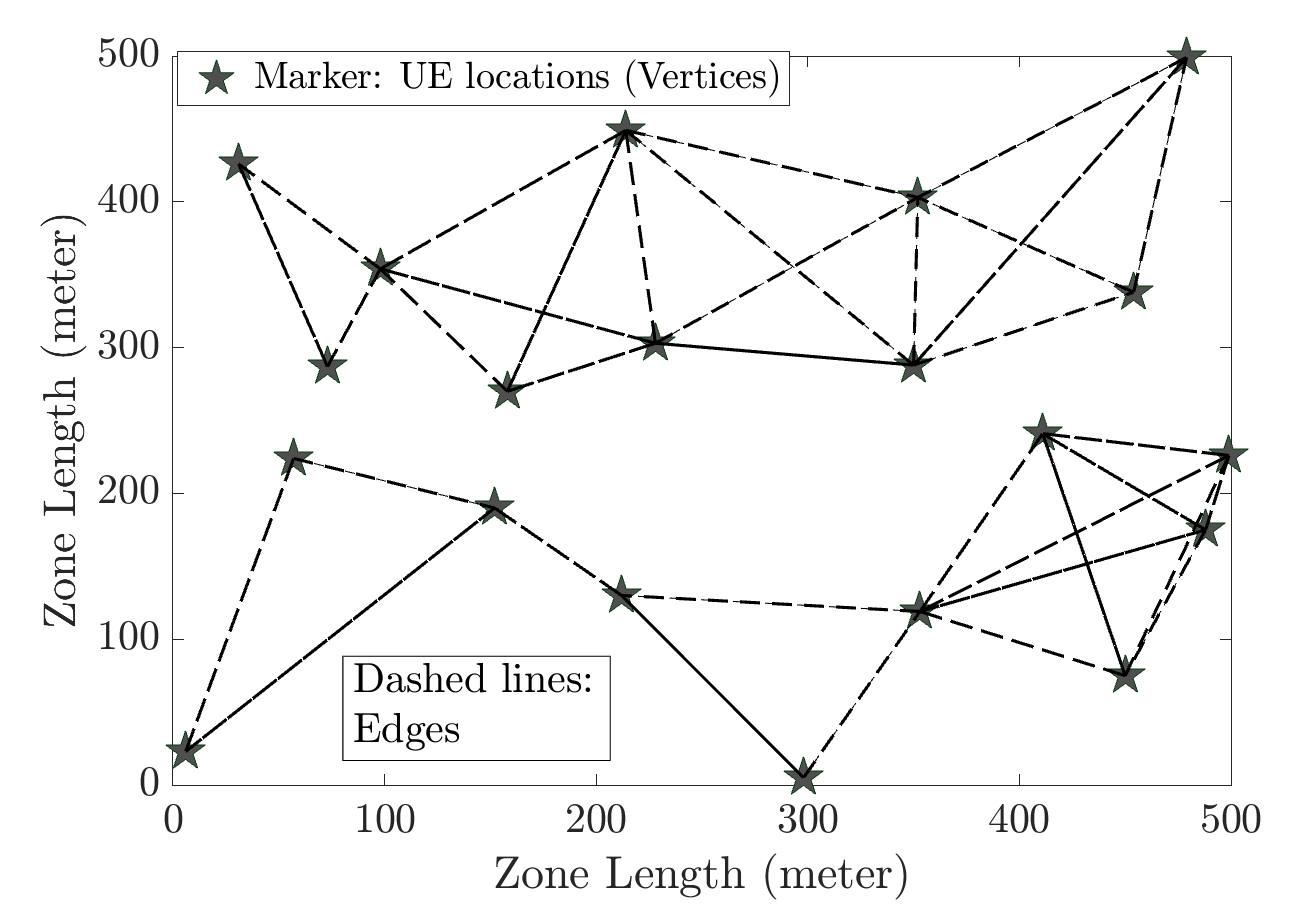}
	\caption{Graph formed by connecting UEs (vertices) that share common AP(s).}\label{fig:graph}
\end{subfigure}\hfil
\begin{subfigure}{0.328\linewidth}
	\centering
	\includegraphics[width=\textwidth]{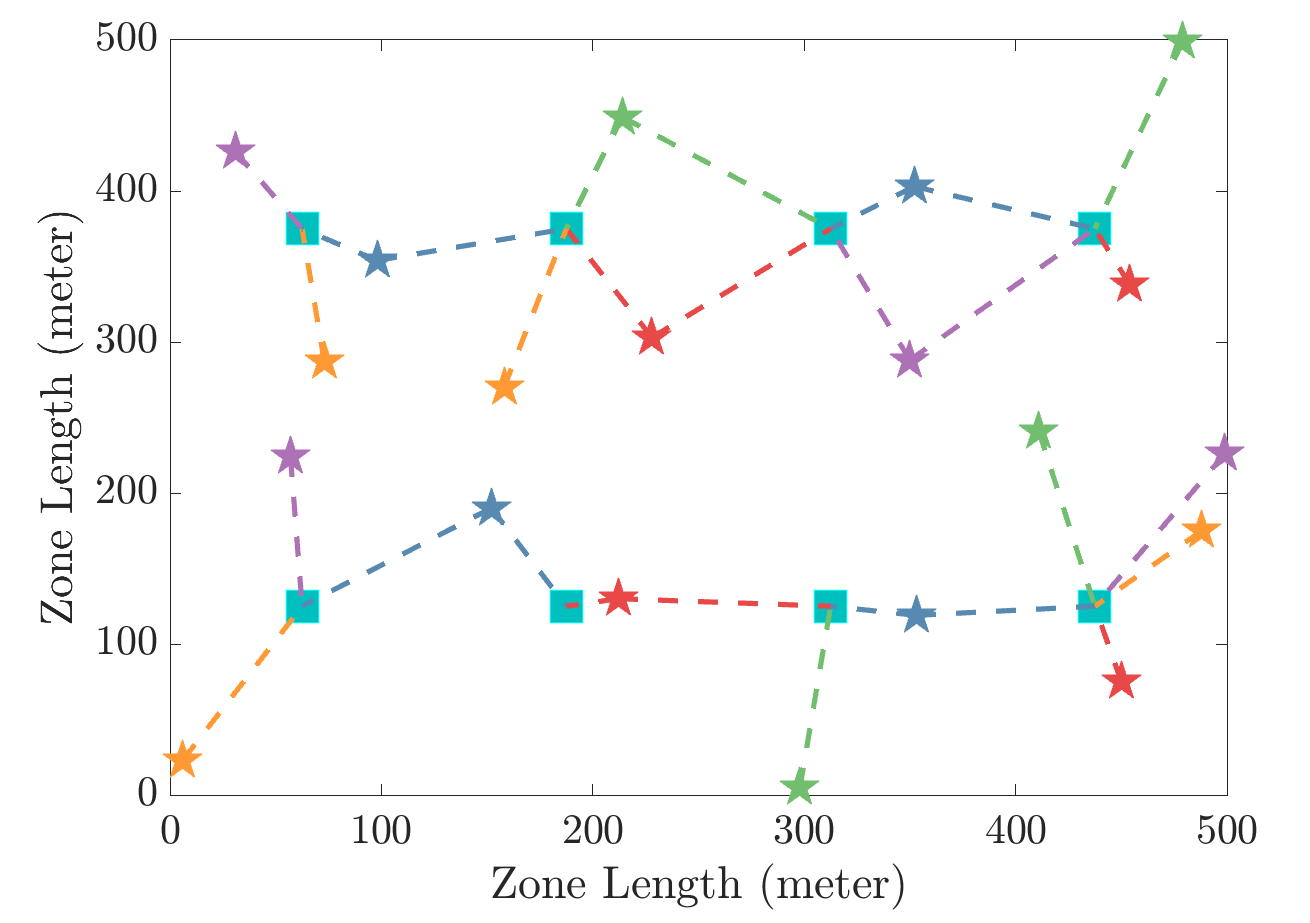}
	\caption{Colored AP-UE connections: Distinct colors correspond to  distinct orthonormal pilot sequences.}\label{fig:colored_graph}
\end{subfigure}\hfil
\begin{subfigure}{0.328\linewidth}
	\centering
	\includegraphics[width=\textwidth]{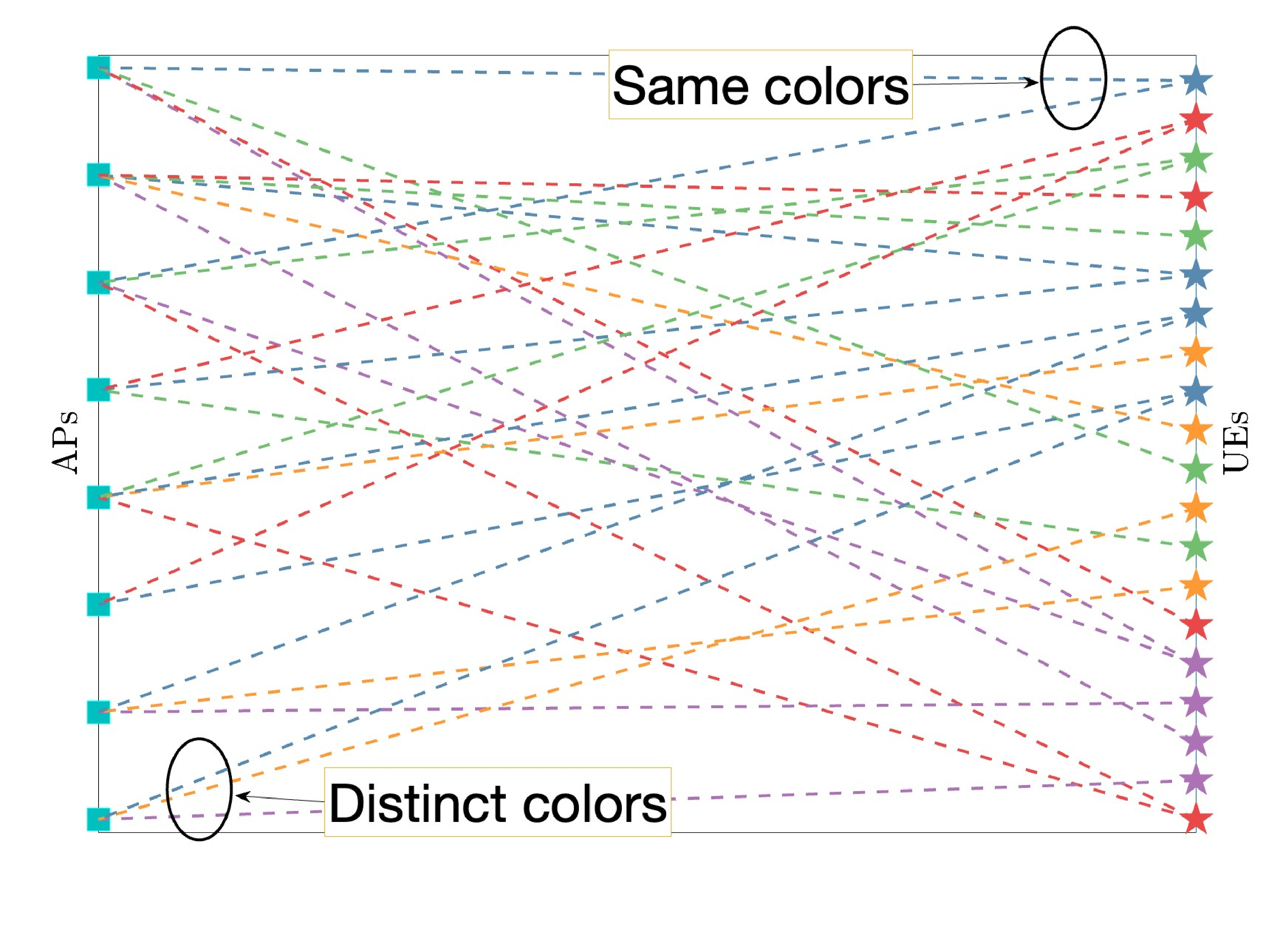}
	\caption{All lines emerging from a UE have the same color, and all the lines merging into an AP have distinct colors. }\label{fig:AP_UE_connections}
\end{subfigure}
\caption{An example of the pilot assignment algorithm with $M=8, N=10, K = 20, d_{0} =20$ m, and $\mathrm{PL}=3.76$. The pilot SNR is $10$ dB and $\gamma_{\mathrm{min}}$ is $0$ dB. We observe that to solve~\eqref{eq:vertex_color} we need $5$ colors, and hence $\tau_{p} = 5$.}\label{fig:Illustrative_pilot}
\end{figure*}

The attractive features of Algorithm~\ref{algo:pilot_allocation} are: the algorithm does not require an exchange of instantaneous  UL/DL SINRs for pilot allocation, unlike existing algorithms~\cite{ICC_pilot, Cluster,Tabu}. Also, algorithms in the literature that require only UE locations for pilot allocation~\cite{Location_Pilot, Pilot_power} do not consider pilot length optimization. The main advantage of our pilot allocation technique is that, in a DTDD and FD-based system, we can isolate the issue of pilot allocation from AP scheduling and power control, which makes our solution easy to implement. Numerical experiments illustrating the superiority of Algorithm~\ref{algo:pilot_allocation} compared to existing works are delegated to Sec.~\ref{sec:numerical}.

In the following sections, we present the SE analysis, AP scheduling algorithm, and UL-DL power allocation policies for the DTDD system. As we will see in Sec.~\ref{sec:FD_CF}, our analysis easily generalizes to an FD-enabled CF system also.

\section{Spectral Efficiency Analysis: CF DTDD}\label{sec:DTDD_SE}
 This section presents the UL and DL signaling model and derives closed-form expressions for the sum UL-DL SE, which we further use for AP scheduling and power allocation. Let the sets  $\mathcal{A}_{\mathsf{u}}$ and $\mathcal{A}_{\mathsf{d}}$ contain the indices of the APs scheduled in the UL and DL, respectively. In a DTDD system, the APs are HD. Thus, $\mathcal{A}_{\mathsf{u}}\cap\mathcal{A}_{\mathsf{d}}=\emptyset$. Also, let $\mathcal{A}_{\mathsf{s}}\triangleq\mathcal{A}_{\mathsf{u}}\cup\mathcal{A}_{\mathsf{d}}\subseteq\mathcal{A}$. 

\subsection{Analysis with MMSE combiner \& RZF precoder} 
\emph{In the UL,} the $k$th UE~($k\in\mathcal{U}_{\mathsf{u}}$) sends the symbol $s_{\mathsf{u},k}$ with power $\mathcal{E}_{\mathsf{u},k}$. The data symbol of each UE is modeled as zero mean, unit variance, and  $\mathbb{E}[s_{\mathsf{u},k}s_{\mathsf{u},k'}^*]=0, k'\neq k, \forall k,k'\in\mathcal{U}_{\mathsf{u}}$.
The signal received at the $m$th UL AP can be expressed as 
   $\mathbf{y}_{\mathsf{u},m}=\sum\nolimits_{\substack{n\in\mathcal{U}_{\mathsf{u}}}}{\textstyle\sqrt{\mathcal{E}_{\mathsf{u},n}}}\mathbf{f}_{mn}s_{\mathsf{u},n}+\sum\nolimits_{j\in \mathcal{A}_{\mathsf{d}}} \tilde{\mathbf{G}}_{mj}\mathbf{x}_{\mathsf{d},j}+\mathbf{w}_{\mathsf{u},m}\in \mathbb{C}^{N},$ where  $\mathbf{x}_{\mathsf{d},j}=\sqrt{\mathcal{E}_{\mathsf{d}}}\mathbf{P}_{j}\mathsf{diag}(\boldsymbol{\kappa}_{j})\mathbf{s}_{\mathsf{d}}=\sqrt{\mathcal{E}_{\mathsf{d}}}\left[\mathbf{p}_{j1},\ldots,\mathbf{p}_{j|\mathcal{U}_{\mathsf{d}}|}\right]\mathsf{diag}(\boldsymbol{\kappa}_{j})\mathbf{s}_{\mathsf{d}}$
is the transmitted DL data vector with  $\mathcal{E}_{\mathsf{d}}$ being the total radiated power, $\mathbf{p}_{jn}=\left[\mathbf{P}_{j}\right]_{:,n}\in\mathbb{C}^{N}$ being the precoding matrix for the $n$th DL UE, and $\boldsymbol{\kappa}_{j}\triangleq[\kappa_{j1},\kappa_{j2},\ldots,\kappa_{j|\mathcal{U}_{\mathsf{d}}|}]^{T}$ being the vector of power control coefficients, all at the $j$th DL AP. Here,  $\kappa_{jn}$, i.e., the $n$th element of $\boldsymbol{\kappa}_{j}$, indicates the fraction of power dedicated by the $j$th AP to  the $n$th DL UE ($n\in\mathcal{U}_{\mathsf{d}}$). The DL signal vector $\mathbf{s}_{\mathsf{d}}=\left[s_{\mathsf{d},1},\ldots,s_{\mathsf{d},|\mathcal{U}_{\mathsf{d}}|}\right]^T$ follows $\mathbb{E}\left[\mathbf{s}_{\mathsf{d}}\mathbf{s}_{\mathsf{d}}^{H}\right]=\mathbf{I}_{|\mathcal{U}_{\mathsf{d}}|}$. Finally, $\mathbf{w}_{\mathsf{u},m} \sim \mathcal{CN}(\mathbf{0}_{N},N_{0}\mathbf{I}_N)$ is the additive noise.
Each AP pre-processes the received signals using local combiners and sends them to the CPU for joint decoding, which is important for the scalability of the overall system~\cite{making_cellfree}. Let $\mathbf{v}_{mk}\in\mathbb{C}^{N}$ be the local combining vector at the $m$th AP for $k$th UE's UL data stream. Then, the local estimate of the $k$th UE's signal at the $m$th AP becomes $
\hat{s}_{\mathsf{u},mk}=\sqrt{\mathcal{E}_{\mathsf{u},k}}\mathbf{v}_{mk}^{H}\mathbf{f}_{mk}s_{\mathsf{u},k}+\sum\nolimits_{n\in\mathcal{U}_{\mathsf{u}}\backslash k}\sqrt{\mathcal{E}_{\mathsf{u},n}}\mathbf{v}_{mk}^{H}\mathbf{f}_{mn}s_{\mathsf{u},n}+\sqrt{\mathcal{E}_{\mathsf{d}}}\sum\nolimits_{j\in\mathcal{A}_{\mathsf{d}}}\sum\nolimits_{n\in\mathcal{U}_{\mathsf{d}}}\kappa_{jn}\mathbf{v}_{mk}^{H} \tilde{\mathbf{G}}_{mj}\mathbf{p}_{jn}s_{\mathsf{d},n}+\mathbf{v}_{mk}^{H}\mathbf{w}_{\mathsf{u},mk}$.
To design $\mathbf{v}_{mk}$, the AP utilizes the knowledge of $\hat{\mathbf{f}}_{mk}, \forall k\in\mathcal{U}_{\mathsf{u}}$, so that the MSE of the locally estimated signal is minimized, i.e., $\mathbf{v}_{mk}^{\mathsf{opt.}}=\min\limits_{\mathbf{v}_{mk}\in\mathbb{C}^{N}}\mathbb{E}\left[\lvert s_{\mathsf{u},k}-\mathbf{v}_{mk}^{H} \mathbf{y}_{\mathsf{u},m}\rvert^{2} \vert \hat{\mathbf{f}}_{mk}\right]$. 
Thus, the optimal combiner is $\mathbf{v}_{mk}^{\mathsf{opt.}}=\mathbf{R}_{mk}^{-1}\hat{\mathbf{f}}_{mk}$, where $ \mathbf{R}_{mk}=\Big(\sum\nolimits_{k\in\mathcal{U}_{\mathsf{u}}}\mathcal{E}_{\mathsf{u},k} \hat{\mathbf{f}}_{mk}\hat{\mathbf{f}}_{mk}^{H}+ \sum\nolimits_{k\in\mathcal{U}_{\mathsf{u}}}\mathcal{E}_{\mathsf{u},k}\mathbb{E}[\tilde{\mathbf{f}}_{mk}\tilde{\mathbf{f}}_{mk}^{H}]+\hspace*{-1mm}\mathcal{E}_{\mathsf{d}}\hspace*{-1mm}\sum\nolimits_{j\in\mathcal{A}_{\mathsf{d}}}\sum\nolimits_{n\in\mathcal{U}_{\mathsf{d}}}\kappa_{jn}^2\mathbb{E}[\tilde{\mathbf{G}}_{mj}\mathbf{p}_{jn}\mathbf{p}_{jn}^{H}\tilde{\mathbf{G}}_{mj}^{H}]+N_{0}\mathbf{I}_{N}\Big)$.
Next, these locally estimated signals relayed from the APs are linearly combined at the CPU with combining weights being $\omega_{mk}, \forall m\in\mathcal{A}_{\mathsf{u}}, \forall k\in\mathcal{U}_{\mathsf{u}}$ so that the received SINR at the CPU for each UE is maximized.  Thus, the $k$th stream of the received signal at the CPU is $\hat{s}_{\mathsf{u},k}=
\sqrt{\mathcal{E}_{\mathsf{u},k}}\sum\nolimits_{m\in\mathcal{A}_{\mathsf{u}}}\omega_{mk}^{*}\mathbf{v}_{mk}^{H}\mathbf{f}_{mk}s_{\mathsf{u},k}+\sum\nolimits_{n\in\mathcal{U}_{\mathsf{u}}\backslash k}\sqrt{\mathcal{E}_{\mathsf{u},n}}\sum\nolimits_{m\in\mathcal{A}_{\mathsf{u}}}\omega_{mk}^{*}\mathbf{v}_{mk}^{H}\mathbf{f}_{mn}s_{\mathsf{u},n}+
\sum\nolimits_{m\in\mathcal{A}_{\mathsf{u}}}\omega_{mk}^{*}\sum\nolimits_{j\in\mathcal{A}_{\mathsf{d}}}\sqrt{\mathcal{E}_{\mathsf{d}}}\sum\nolimits_{n\in\mathcal{U}_{\mathsf{d}}}\kappa_{jn}\mathbf{v}_{mk}^{H} \tilde{\mathbf{G}}_{mj}\mathbf{p}_{jn}s_{\mathsf{d},n}+\sum\nolimits_{m\in\mathcal{A}_{\mathsf{u}}}\omega_{mk}^{*}\mathbf{v}_{mk}^{H}\mathbf{w}_{\mathsf{u},mk}$.
Then the UL SINR of the $k$th UE, denoted by $\eta_{\mathsf{u},k}$,  at the CPU is equal to~\cite{making_cellfree}
\begin{align}\label{eq:UL_SINR}
	\hspace{-.3cm}\frac{\mathcal{E}_{\mathsf{u},k}\lvert\boldsymbol{\omega_{k}}^{H}\mathbb{E}[\mathbf{u}_{kk}]\rvert^{2}}{\boldsymbol{\omega_{k}}^{H}
		\begin{pmatrix}
\sum\nolimits_{i\in\mathcal{U}_{\mathsf{u}}}\mathcal{E}_{\mathsf{u},i}\mathbb{E}[\mathbf{u}_{ki}\mathbf{u}_{ki}^{H}]-\mathcal{E}_{\mathsf{u},k}\mathbb{E}\left[\mathbf{u}_{kk}\right]\mathbb{E}[\mathbf{u}_{kk}^{H}]&\\ +\sum\nolimits_{n\in\mathcal{U}_{\mathsf{d}}}\mathbb{E}[\mathbf{a}_{kn}\mathbf{a}_{kn}^{H}]+\mathbf{N}_{\mathsf{eff.}}
		\end{pmatrix}
		\boldsymbol{\omega_{k}}},\hspace{-.3cm}
\end{align}
where $\boldsymbol{\omega_{k}}\triangleq[\omega_{1k},\omega_{2k},\ldots,\omega_{|\mathcal{A}_{\mathsf{u}}|k}]^T\in\mathbb{C}^{|\mathcal{A}_{\mathsf{u}}|}$, $\mathbf{u}_{ki}\triangleq[\mathbf{v}_{1k}^{H}\mathbf{f}_{1i},\mathbf{v}_{2k}^{H}\mathbf{f}_{2i},\ldots,\mathbf{v}_{|\mathcal{A}_{\mathsf{u}}|k}^{H}\mathbf{f}_{|\mathcal{A}_{\mathsf{u}}|i}]^{T}\in\mathbb{C}^{|\mathcal{A}_{\mathsf{u}}|}$,  $[\mathbf{a}_{kn}]_{m}$ $=\sum\nolimits_{j\in\mathcal{A}_{\mathsf{d}}}\sqrt{\mathcal{E}_{\mathsf{d}}}\kappa_{jn}\mathbf{v}_{mk}^{H} \tilde{\mathbf{G}}_{mj}\mathbf{p}_{jn}$, and 
$\mathbf{N}_{\mathsf{eff.}}=N_{0}{\tt diag}( \mathbb{E}[\|\mathbf{v}_{1k}\|^2], \ldots,\mathbb{E}[\|\mathbf{v}_{|\mathcal{A}_{\mathsf{u}}|k}\|^2])\in\mathbb{C}^{|\mathcal{A}_{\mathsf{u}}|\times |\mathcal{A}_{\mathsf{u}}|}$. We evaluate the optimal weights using the following lemma.

\begin{lem}\label{lem:weighted_UL_SINR}
	The SINR of the $k$th~($k\in\mathcal{U}_{\mathsf{u}}$) UE is maximized by $\boldsymbol{\omega}_{k}^{\mathsf{opt.}}=\mathsf{c}_{k}\sqrt{\mathcal{E}_{\mathsf{u},k}}\mathbf{R}_{\omega,k}^{-1}\mathbb{E}\left[\mathbf{u}_{kk}\right]$, where\footnote{We use the fact that for any given vector $\mathbf{x}\in\mathbb{C}^{N}$ and a positive definite matrix $\mathbf{A}$, $\max_{\mathbf{y}\in\mathbb{C}^{N}}[\lvert \mathbf{y}^{H}\mathbf{x}\rvert^{2}/(\mathbf{y}^{H}\mathbf{A}\mathbf{y})]=\mathbf{x}^{H}\mathbf{A}\mathbf{x}$, and $\mathbf{y}^{\mathsf{opt.}}=\mathbf{A}^{-1}\mathbf{x}$.} 
	\begin{align}
	\mathbf{R}_{\omega,k}\triangleq&(\textstyle{\sum\nolimits_{i\in\mathcal{U}_{\mathsf{u}}}\mathcal{E}_{\mathsf{u},i}\mathbb{E}[\mathbf{u}_{ki}\mathbf{u}_{ki}^{H} ]-\mathcal{E}_{\mathsf{u},k}\mathbb{E}[\mathbf{u}_{kk}]\mathbb{E}[\mathbf{u}_{kk}^{H}]}\notag\\&\textstyle{+\sum\nolimits_{n\in\mathcal{U}_{\mathsf{d}}}\mathbb{E}[\mathbf{a}_{kn}\mathbf{a}_{kn}^{H}]+\mathbf{N}_{\mathsf{eff.}})},
	\end{align}
	 and $\mathsf{c}_{k}$ is a scaling constant and can be taken as $\sqrt{\mathcal{E}_{\mathsf{u},k}}$ to make the weights dimensionless. Also, the maximum SINR is $ \mathcal{E}_{\mathsf{u},k}\mathbb{E}[{\mathbf{u}}_{kk}^{H}]\mathbf{R}_{\omega,k}^{-1}\mathbb{E}[{\mathbf{u}}_{kk}]$.
\end{lem}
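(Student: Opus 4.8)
The plan is to recognize that the SINR in \eqref{eq:UL_SINR}, viewed as a function of the weight vector $\boldsymbol{\omega}_{k}$, is a generalized Rayleigh quotient, and then to maximize it in closed form. By the very definition of $\mathbf{R}_{\omega,k}$ in the statement of Lemma~\ref{lem:weighted_UL_SINR}, the denominator of \eqref{eq:UL_SINR} is exactly $\boldsymbol{\omega}_{k}^{H}\mathbf{R}_{\omega,k}\boldsymbol{\omega}_{k}$, while the numerator can be rewritten as $\lvert\boldsymbol{\omega}_{k}^{H}\mathbf{x}_{k}\rvert^{2}$ with $\mathbf{x}_{k}\triangleq\sqrt{\mathcal{E}_{\mathsf{u},k}}\,\mathbb{E}[\mathbf{u}_{kk}]$, so that $\eta_{\mathsf{u},k}=\lvert\boldsymbol{\omega}_{k}^{H}\mathbf{x}_{k}\rvert^{2}/(\boldsymbol{\omega}_{k}^{H}\mathbf{R}_{\omega,k}\boldsymbol{\omega}_{k})$. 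This matches the quadratic-form maximization quoted in the accompanying footnote with $\mathbf{A}=\mathbf{R}_{\omega,k}$.

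First I would verify that $\mathbf{R}_{\omega,k}\succ 0$, which is needed both for the quotient to be well-defined and for the identity to apply. This follows by inspecting each summand in $\mathbf{R}_{\omega,k}$: the $i=k$ part of the first two terms combines into $\mathcal{E}_{\mathsf{u},k}\bigl(\mathbb{E}[\mathbf{u}_{kk}\mathbf{u}_{kk}^{H}]-\mathbb{E}[\mathbf{u}_{kk}]\mathbb{E}[\mathbf{u}_{kk}^{H}]\bigr)=\mathcal{E}_{\mathsf{u},k}\,\mathrm{Cov}(\mathbf{u}_{kk})\succeq 0$; each cross-stream term $\mathcal{E}_{\mathsf{u},i}\mathbb{E}[\mathbf{u}_{ki}\mathbf{u}_{ki}^{H}]$ for $i\neq k$ and each InAI term $\mathbb{E}[\mathbf{a}_{kn}\mathbf{a}_{kn}^{H}]$ is positive semidefinite; and $\mathbf{N}_{\mathsf{eff.}}$ is diagonal with strictly positive entries $N_{0}\,\mathbb{E}[\|\mathbf{v}_{mk}\|^{2}]>0$, hence positive definite. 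Summing these, $\mathbf{R}_{\omega,k}\succ 0$.

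Then I would carry out the maximization directly via a whitening substitution: using the Hermitian square root, set $\mathbf{z}=\mathbf{R}_{\omega,k}^{1/2}\boldsymbol{\omega}_{k}$ to obtain $\eta_{\mathsf{u},k}=\lvert\mathbf{z}^{H}\mathbf{R}_{\omega,k}^{-1/2}\mathbf{x}_{k}\rvert^{2}/\|\mathbf{z}\|^{2}$, whereupon Cauchy--Schwarz gives $\eta_{\mathsf{u},k}\le\mathbf{x}_{k}^{H}\mathbf{R}_{\omega,k}^{-1}\mathbf{x}_{k}$ with equality iff $\mathbf{z}\parallel\mathbf{R}_{\omega,k}^{-1/2}\mathbf{x}_{k}$, i.e. $\boldsymbol{\omega}_{k}\parallel\mathbf{R}_{\omega,k}^{-1}\mathbf{x}_{k}=\sqrt{\mathcal{E}_{\mathsf{u},k}}\,\mathbf{R}_{\omega,k}^{-1}\mathbb{E}[\mathbf{u}_{kk}]$. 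Substituting $\mathbf{x}_{k}$ into the optimal value yields the claimed maximum SINR $\mathcal{E}_{\mathsf{u},k}\,\mathbb{E}[\mathbf{u}_{kk}^{H}]\mathbf{R}_{\omega,k}^{-1}\mathbb{E}[\mathbf{u}_{kk}]$. Finally, since numerator and denominator are each homogeneous of degree two in $\boldsymbol{\omega}_{k}$, the quotient is invariant to rescaling $\boldsymbol{\omega}_{k}\mapsto\mathsf{c}_{k}\boldsymbol{\omega}_{k}$; this explains the free constant $\mathsf{c}_{k}$ and justifies the normalizing choice $\mathsf{c}_{k}=\sqrt{\mathcal{E}_{\mathsf{u},k}}$ without altering the optimal SINR. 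I anticipate no genuine obstacle here: the argument is a textbook generalized Rayleigh-quotient maximization, so the only care required is the bookkeeping that the denominator matrix in \eqref{eq:UL_SINR} coincides exactly with $\mathbf{R}_{\omega,k}$ together with the positive-definiteness check above.
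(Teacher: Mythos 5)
Your proposal is correct and follows essentially the same route as the paper: the paper proves the lemma by identifying \eqref{eq:UL_SINR} as a generalized Rayleigh quotient in $\boldsymbol{\omega}_{k}$ and invoking the quadratic-form maximization fact stated in its footnote, with $\mathbf{x}=\sqrt{\mathcal{E}_{\mathsf{u},k}}\,\mathbb{E}[\mathbf{u}_{kk}]$ and $\mathbf{A}=\mathbf{R}_{\omega,k}$. Your only additions are a proof of that footnote fact via whitening and Cauchy--Schwarz and an explicit positive-definiteness check, which the paper leaves implicit.
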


\emph{In the DL,} the counterpart of the MMSE combiner is the RZF precoder, which is $\mathbf{p}_{jn}=\mathcal{E}_{\mathsf{d},n}\big(\sum\nolimits_{n'\in\mathcal{U}_{\mathsf{d}}}\mathcal{E}_{\mathsf{d},n'} \hat{\mathbf{f}}_{jn'}\hat{\mathbf{f}}_{jn'}^{H}+ \sum\nolimits_{n'\in\mathcal{U}_{\mathsf{d}}}\mathcal{E}_{\mathsf{d},n'} \mathbb{E}\left[\tilde{\mathbf{f}}_{jn'}\tilde{\mathbf{f}}_{jn'}^{H}\right]+N_{0}\mathbf{I}_{N}\big)^{-1}\hat{\mathbf{f}}_{jn}$. Assuming channel reciprocity, the signal received at the $n$th~($n\in\mathcal{U}_{\mathsf{d}}$) DL UE can be written as $r_{\mathsf{d},n}=\sum\nolimits_{j\in\mathcal{A}_{\mathsf{d}}}\kappa_{jn}\sqrt{{\mathcal{E}_{\mathsf{d}}}}\mathbf{f}_{jn}^{T}\mathbf{p}_{jn}s_{\mathsf{d},n}+\sum\nolimits_{k\in\mathcal{U}_{\mathsf{u}}}\sqrt{\mathcal{E}_{\mathsf{u},k}}\mathtt{g}_{nk}s_{\mathsf{u},k}+\sum\nolimits_{j\in\mathcal{A}_{\mathsf{d}}}\sum\nolimits_{\substack{ q\in\mathcal{U}_{\mathsf{d}}\backslash n}}\kappa_{jq}\sqrt{{\mathcal{E}_{\mathsf{d}}}}\mathbf{f}_{jn}^{T}\mathbf{p}_{jq}s_{\mathsf{d},q}+w_{\mathsf{d},n}$. The AWGN $w_{\mathsf{d},n}$ follows $\mathcal{CN}(0,N_{0})$. Now, the corresponding DL SINR, denoted by $\eta_{\mathsf{d},n}$, becomes
     \begin{align}\label{eq:DL_SINR}
     	\frac{\mathcal{E}_{\mathsf{d}}|\boldsymbol{\kappa}_{n}^{T}\mathbb{E}[\mathbf{d}_{nn}]|^2}{\begin{pmatrix}\textstyle{\mathcal{E}_{\mathsf{d}}{\tt var}\big\lbrace\boldsymbol{\kappa}_{n}^{T}\mathbf{d}_{nn}\rbrace+\mathcal{E}_{\mathsf{d}}\sum\nolimits_{q\in \mathcal{U}_{\mathsf{d}}\backslash n}\mathbb{E}[\lvert\boldsymbol{\kappa}_{q}^{T}\mathbf{d}_{nq}\rvert^{2}]}&\\\textstyle{+\sum\nolimits_{k\in\mathcal{U}_{\mathsf{u}}}\mathcal{E}_{\mathsf{u},k}\mathbb{E}|\mathtt{g}_{nk}|^{2}+N_{0}}\end{pmatrix}},
     \end{align}
  where $\boldsymbol{\kappa}_{n}\triangleq[\kappa_{1n},\kappa_{2n},\ldots,\kappa_{|\mathcal{A}_{\mathsf{d}}|n}]^{T}$ and $\mathbf{d}_{nq}\triangleq[\mathbf{f}_{1n}^T\mathbf{p}_{1q}, \mathbf{f}_{2n}^T\mathbf{p}_{2q}, \ldots,\mathbf{f}_{\lvert\mathcal{A}_{\mathsf{d}}\rvert n}^T\mathbf{p}_{\lvert\mathcal{A}_{\mathsf{d}}\rvert q}]^{T}$. 
 The DL power control coefficients, $\boldsymbol{\kappa}_{n}$, are the weights assigned to the signal transmitted to each UE from every AP, which can be controlled to maximize the sum SE.

We use results from random matrix theory to derive closed-form expressions for the SINRs.  We present the expressions for the case of ZF precoders and combiners. Our analysis includes not only the effects of CLIs but also coherent interference, unlike~\cite{Mohammadi_JSAC}, where MRC and MFP are considered without accounting for the effects of pilot contamination.
 
 \subsection{Analysis with  ZF combiner \& precoder}
We first construct the ZF combining and precoding matrices.
As $\tau_{p}<K$,  the concatenated estimated channel matrix at $m$th UL AP, denoted by $\mathbf{\hat{F}}_{\mathsf{u},m}=[\mathbf{\hat{f}}_{m1},\ldots,\mathbf{\hat{f}}_{m|\mathcal{U}_{\mathsf{u}}|}]\in\mathbb{C}^{N\times |\mathcal{U}_{\mathsf{u}}|}$,  may not have full column rank, i.e., $\mathbf{\hat{F}}_{\mathsf{u},m}^{H}\mathbf{\hat{F}}_{\mathsf{u},m}$ may not be invertible. Thus, we construct a full rank matrix as $\mathbf{Z}_{\mathsf{u},m}\triangleq\mathbf{Y}_{p,m}\boldsymbol{\Phi}\in\mathbb{C}^{N\times\tau_{p}}$, where $\boldsymbol{\Phi}$ is the pilot matrix with its $l$th column being $\boldsymbol{\varphi}_{l}$ and $\mathbf{Y}_{p,m}$. We observe that we can compute $\hat{\mathbf{f}}_{mk}$ from $\mathbf{Z}_{\mathsf{u},m}$ using the relation $\hat{\mathbf{f}}_{mk}=c_{mk}\sqrt{\tau_{p}\mathcal{E}_{p,k}}\beta_{mk}\mathbf{Z}_{\mathsf{u},m}\mathbf{e}_{l(k)}$, where $\mathbf{e}_{l(k)}\in\mathbb{C}^{\tau_{p}}$ is the standard basis with $l(k)$th coordinate being $1$.\footnote{Here, we recapitulate that $k$th UE uses the pilot sequence $\boldsymbol{\varphi}_{l(k)}$, and the set $\mathcal{P}_{l(k)}$ contains all the UE indices, including the $k$th UE, which use the pilot sequence $\boldsymbol{\varphi}_{l(k)}$. Thus, multiplying $\mathbf{Z}_{\mathsf{u},m}$ by $\mathbf{e}_{l(k)}$, we obtain a common estimate for the channels of all UEs which use the pilot sequence $\boldsymbol{\varphi}_{l(k)}$. } Then, the  ZF combining vector intended for the $k$th UE at the $m$th AP becomes $\mathbf{v}_{mk}=\gamma_{mk}\mathbf{Z}_{\mathsf{u},m}(\mathbf{Z}_{\mathsf{u},m}^{H}\mathbf{Z}_{\mathsf{u},m})^{-1}\mathbf{e}_{l(k)}$,
where we set $\gamma_{mk}=\sqrt{\tau_{p}\mathcal{E}_{p,k}}\beta_{mk}$ to ensure that $\mathbb{E}[\mathbf{v}_{mk}^{H}\hat{\mathbf{f}}_{mk}]=\alpha_{mk}^2$.
In the DL, we again construct a full rank matrix $\mathbf{Z}_{\mathsf{d},j}=\mathbf{Y}_{p,j}\boldsymbol{\Phi}\in\mathbb{C}^{N\times\tau_{p}},~\forall j\in\mathcal{A}_{\mathsf{d}}$.
We can obtain $\hat{\mathbf{f}}_{jn}$ from $\mathbf{Z}_{\mathsf{d},j}$ as $\hat{\mathbf{f}}_{jn}=c_{jn}\sqrt{\tau_{p}\mathcal{E}_{p,j}}\beta_{jn}\mathbf{Z}_{\mathsf{d},j}\mathbf{e}_{l(n)}$.
Next, the ZF precoding vector~(unit normalized) intended for the $n$th DL UE at the $j$th AP is $\mathbf{p}_{jn}=\frac{\mathbf{Z}_{\mathsf{d},j}\left(\mathbf{Z}_{\mathsf{d},j}^{H}\mathbf{Z}_{\mathsf{d},j}\right)^{-1}\mathbf{e}_{l(n)}}{\sqrt{\mathbb{E}[\|\mathbf{Z}_{\mathsf{d},j}\left(\mathbf{Z}_{\mathsf{d},j}^{H}\mathbf{Z}_{\mathsf{d},j}\right)^{-1}\mathbf{e}_{l(n)}\|^{2}]}}$.
Using~\cite[Lemma~$6$]{WH_Verdu}, we can show that $\mathbb{E}[\|\mathbf{Z}_{\mathsf{d},j}(\mathbf{Z}_{\mathsf{d},j}^{H}\mathbf{Z}_{\mathsf{d},j})^{-1}\mathbf{e}_{l(n)}\|^{2}]=\frac{c_{jn}}{N-\tau_{p}}$.\footnote{Thus, the condition $N\geq(\tau_{p}+1)$ has to be satisfied. Each AP has $\tau_{p}$ combining and precoding vectors, and the same vector is utilized for all UEs sharing the same pilot sequence. }
Next, we present closed-form expressions for the optimal weights and UL-DL SINRs with ZF combiners and precoders.
\begin{lem}\label{lem:optimal_weights}
The optimal weights for the $k$th UE's data stream at the CPU for $\lvert\mathcal{A}_{\mathsf{u}}\rvert$ UL APs is $\boldsymbol{\omega}_{k}^{\mathsf{opt}}=\mathcal{E}_{\mathsf{u},k}\mathbf{R}_{\omega,k}^{-1}\bar{\mathbf{u}}_{k}$, with $\bar{\mathbf{u}}_{k}=[\alpha_{1k}^2,\alpha_{2k}^2,\ldots,\alpha_{|\mathcal{A}_{\mathsf{u}}|k}^2]^{T}$ and $\mathbf{R}_{\omega,k}=\sum\nolimits_{k'\in\mathcal{P}_{l(k)}\backslash k}\mathcal{E}_{\mathsf{u},k'}\bar{\mathbf{u}}_{k'} \bar{\mathbf{u}}_{k'}^{H}+\frac{1}{N-\tau_p}\mathbf{\dot{R}}_{\omega,k}+\frac{\mathcal{E}_{\mathsf{d}}}{N-\tau_{p}}\mathbf{\ddot{R}}_{\omega,k}$, where  $\mathbf{\dot{R}}_{\omega,k}$ and 
$\mathbf{\ddot{R}}_{\omega,k}\in\mathbb{C}^{\lvert\mathcal{A}_{\mathsf{u}}\rvert\times \lvert\mathcal{A}_{\mathsf{u}}\rvert}$ are two diagonal matrices with $m$th diagonal entries  $$\left[\mathbf{\dot{R}}_{\omega,k}\right]_{m}=\sum\nolimits_{k'\in\mathcal{U}_{\mathsf{u}}}\mathcal{E}_{\mathsf{u},k'}\alpha_{mk}^2(\beta_{mk'}-\alpha_{mk'}^2)+N_{0}\alpha_{mk}^2,$$ and $$\left[\mathbf{\ddot{R}}_{\omega,k}\right]_{m}=\sum\nolimits_{j\in\mathcal{A}_{\mathsf{d}}}N\kappa_{jn}^2\zeta_{mj}^{\mathsf{InAP}}\alpha_{mk}^2.$$ 
\end{lem}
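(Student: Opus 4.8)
The plan is to specialize the general optimal-weight expression from Lemma~\ref{lem:weighted_UL_SINR} to the ZF combiner, so that the entire task collapses to evaluating the two deterministic objects $\mathbb{E}[\mathbf{u}_{kk}]$ and $\mathbf{R}_{\omega,k}$ in closed form. Indeed, choosing the free scaling constant as $\mathsf{c}_{k}=\sqrt{\mathcal{E}_{\mathsf{u},k}}$ in Lemma~\ref{lem:weighted_UL_SINR} immediately gives $\boldsymbol{\omega}_{k}^{\mathsf{opt}}=\mathcal{E}_{\mathsf{u},k}\mathbf{R}_{\omega,k}^{-1}\mathbb{E}[\mathbf{u}_{kk}]$, so it suffices to prove $\mathbb{E}[\mathbf{u}_{kk}]=\bar{\mathbf{u}}_{k}$ and to expand the general $\mathbf{R}_{\omega,k}$ of Lemma~\ref{lem:weighted_UL_SINR} into the stated three blocks. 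I would stress that the \emph{structure} of the weights is inherited from Lemma~\ref{lem:weighted_UL_SINR}; what remains is purely a closed-form moment computation for ZF.

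First I would isolate the single algebraic identity that drives everything. Writing $\mathbf{v}_{mk}=\gamma_{mk}\mathbf{Z}_{\mathsf{u},m}(\mathbf{Z}_{\mathsf{u},m}^{H}\mathbf{Z}_{\mathsf{u},m})^{-1}\mathbf{e}_{l(k)}$ and $\hat{\mathbf{f}}_{mi}=c_{mi}\sqrt{\tau_{p}\mathcal{E}_{p,i}}\beta_{mi}\mathbf{Z}_{\mathsf{u},m}\mathbf{e}_{l(i)}$, the cancellation $\mathbf{e}_{l(k)}^{H}(\mathbf{Z}_{\mathsf{u},m}^{H}\mathbf{Z}_{\mathsf{u},m})^{-1}\mathbf{Z}_{\mathsf{u},m}^{H}\mathbf{Z}_{\mathsf{u},m}\mathbf{e}_{l(i)}=\delta_{l(k),l(i)}$ yields the deterministic value $\mathbf{v}_{mk}^{H}\hat{\mathbf{f}}_{mi}=\gamma_{mk}c_{mi}\sqrt{\tau_{p}\mathcal{E}_{p,i}}\beta_{mi}\,\delta_{l(k),l(i)}$, which vanishes whenever $i$ does not share $k$'s pilot and, for $i=k$, equals $\alpha_{mk}^{2}$ under the normalization $\gamma_{mk}=\sqrt{\tau_{p}\mathcal{E}_{p,k}}\beta_{mk}$. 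Stacking over $m\in\mathcal{A}_{\mathsf{u}}$ gives $\mathbb{E}[\mathbf{u}_{kk}]=\bar{\mathbf{u}}_{k}$.

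Next I would expand $\mathbf{R}_{\omega,k}$ block by block. In $\sum_{i\in\mathcal{U}_{\mathsf{u}}}\mathcal{E}_{\mathsf{u},i}\mathbb{E}[\mathbf{u}_{ki}\mathbf{u}_{ki}^{H}]$ I split $\mathbf{f}_{mi}=\hat{\mathbf{f}}_{mi}+\tilde{\mathbf{f}}_{mi}$ and use two independence facts: (i) by the orthogonality principle and Gaussianity, $\tilde{\mathbf{f}}_{mi}$ is zero-mean and independent of $\mathbf{v}_{mk}$, giving $\mathbb{E}[|\mathbf{v}_{mk}^{H}\tilde{\mathbf{f}}_{mi}|^{2}]=(\beta_{mi}-\alpha_{mi}^{2})\mathbb{E}[\|\mathbf{v}_{mk}\|^{2}]$; and (ii) quantities at distinct APs are independent, which annihilates every off-diagonal entry of the error and cross-link terms so that these blocks are diagonal. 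The deterministic parts contribute rank-one ``coherent'' terms only for $i\in\mathcal{P}_{l(k)}$, and subtracting $\mathcal{E}_{\mathsf{u},k}\mathbb{E}[\mathbf{u}_{kk}]\mathbb{E}[\mathbf{u}_{kk}^{H}]$ removes the $i=k$ term, leaving precisely the pilot-contamination sum $\sum_{k'\in\mathcal{P}_{l(k)}\backslash k}\mathcal{E}_{\mathsf{u},k'}(\cdot)(\cdot)^{H}$. The key moment $\mathbb{E}[\|\mathbf{v}_{mk}\|^{2}]=\gamma_{mk}^{2}\,\mathbb{E}[\|\mathbf{Z}_{\mathsf{u},m}(\mathbf{Z}_{\mathsf{u},m}^{H}\mathbf{Z}_{\mathsf{u},m})^{-1}\mathbf{e}_{l(k)}\|^{2}]=\alpha_{mk}^{2}/(N-\tau_{p})$ follows from \cite[Lemma~6]{WH_Verdu}, exactly as in the precoder-normalization identity already used in the text. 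Collecting the error variances over $i$ together with the noise diagonal $\mathbf{N}_{\mathsf{eff.}}$, whose $m$th entry is $N_{0}\mathbb{E}[\|\mathbf{v}_{mk}\|^{2}]$, reproduces the diagonal block $\tfrac{1}{N-\tau_{p}}\mathbf{\dot{R}}_{\omega,k}$.

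Finally I would treat the inter-AP interference block $\sum_{n\in\mathcal{U}_{\mathsf{d}}}\mathbb{E}[\mathbf{a}_{kn}\mathbf{a}_{kn}^{H}]$. Because the residual inter-AP channels $\tilde{\mathbf{G}}_{mj}$ are zero-mean, independent across the receive index $m$, and independent of both $\mathbf{v}_{mk}$ (a UL-AP quantity) and $\mathbf{p}_{jn}$ (a DL-AP quantity, since $\mathcal{A}_{\mathsf{u}}\cap\mathcal{A}_{\mathsf{d}}=\emptyset$), all cross-AP and cross-$j$ terms vanish, the block is diagonal, and its $m$th entry is $\mathcal{E}_{\mathsf{d}}\sum_{j\in\mathcal{A}_{\mathsf{d}}}\kappa_{jn}^{2}\,\mathbb{E}[|\mathbf{v}_{mk}^{H}\tilde{\mathbf{G}}_{mj}\mathbf{p}_{jn}|^{2}]$. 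Invoking the i.i.d. structure of $\tilde{\mathbf{G}}_{mj}$ gives $\mathbb{E}[|\mathbf{v}_{mk}^{H}\tilde{\mathbf{G}}_{mj}\mathbf{p}_{jn}|^{2}]=\zeta_{mj}^{\mathsf{InAP}}\,\mathbb{E}[\|\mathbf{v}_{mk}\|^{2}]\,\mathbb{E}[\|\mathbf{p}_{jn}\|^{2}]$, and substituting the normalized precoder together with $\mathbb{E}[\|\mathbf{v}_{mk}\|^{2}]$ and summing over the DL streams yields the block $\tfrac{\mathcal{E}_{\mathsf{d}}}{N-\tau_{p}}\mathbf{\ddot{R}}_{\omega,k}$; adding the three blocks gives $\mathbf{R}_{\omega,k}$. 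I expect the main obstacle to be the careful bookkeeping of the coherent pilot-sharing terms---in particular exploiting that UEs in a common pilot group have parallel channel estimates (since $c_{mk}$ is common across that group) so their rank-one contributions assemble into the compact stated form---together with rigorously justifying the random-matrix moment $\mathbb{E}[\|\mathbf{v}_{mk}\|^{2}]$ for the pseudo-inverse-type ZF combiner.
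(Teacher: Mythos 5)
Your route is the same as the paper's: inherit the structure $\boldsymbol{\omega}_{k}^{\mathsf{opt}}=\mathcal{E}_{\mathsf{u},k}\mathbf{R}_{\omega,k}^{-1}\mathbb{E}[\mathbf{u}_{kk}]$ from Lemma~\ref{lem:weighted_UL_SINR}, and then reduce everything to three ZF moment computations: the deterministic inner products $\mathbf{v}_{mk}^{H}\hat{\mathbf{f}}_{mi}$ (nonzero only for pilot-sharers), the inverse-Wishart moment $\mathbb{E}[\|\mathbf{v}_{mk}\|^{2}]=\alpha_{mk}^{2}/(N-\tau_{p})$ from \cite[Lemma~6]{WH_Verdu}, and diagonality of the estimation-error, noise, and InAI blocks from independence across APs. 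This is exactly how the paper proceeds, and your identity $\mathbf{v}_{mk}^{H}\hat{\mathbf{f}}_{mi}=\gamma_{mk}c_{mi}\sqrt{\tau_{p}\mathcal{E}_{p,i}}\beta_{mi}\,\delta_{l(k),l(i)}$ is in fact stated more carefully than the paper's version.

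There is, however, one step that does not land on the claimed constant. For the inter-AP block you correctly write $\mathbb{E}[|\mathbf{v}_{mk}^{H}\tilde{\mathbf{G}}_{mj}\mathbf{p}_{jn}|^{2}]=\zeta_{mj}^{\mathsf{InAP}}\,\mathbb{E}[\|\mathbf{v}_{mk}\|^{2}]\,\mathbb{E}[\|\mathbf{p}_{jn}\|^{2}]$, but with the unit-normalized precoder ($\mathbb{E}[\|\mathbf{p}_{jn}\|^{2}]=1$) this evaluates to $\zeta_{mj}^{\mathsf{InAP}}\alpha_{mk}^{2}/(N-\tau_{p})$, whereas the lemma's $[\mathbf{\ddot{R}}_{\omega,k}]_{m}$ carries an extra factor of $N$ per term. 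You cannot simply assert that your expression ``yields the block $\tfrac{\mathcal{E}_{\mathsf{d}}}{N-\tau_{p}}\mathbf{\ddot{R}}_{\omega,k}$'': to obtain the stated constant you would need $\mathbb{E}[\|\mathbf{p}_{jn}\|^{2}]=N$, which is what the paper effectively uses (compare the IrAI computation in the proof of Corollary~\ref{corr:FD_SE}, where $\mathbb{E}[\mathbf{G}\mathbf{p}\mathbf{p}^{H}\mathbf{G}^{H}]$ is evaluated as $N_{\mathtt{tx}}\zeta\mathbf{I}$) but which contradicts the normalization you invoke. You must either adopt and justify that normalization or flag the discrepancy; as written, the step fails to reproduce the statement. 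A smaller bookkeeping point of the same flavor: your own identity gives $\mathbf{v}_{mk}^{H}\hat{\mathbf{f}}_{mk'}=\alpha_{mk}\alpha_{mk'}$ for $k'\in\mathcal{P}_{l(k)}\setminus k$ (using $c_{mk'}=c_{mk}$), so the coherent rank-one term is built from the vector with entries $\alpha_{mk}\alpha_{mk'}$ rather than $\alpha_{mk'}^{2}$; saying the contributions ``assemble into the compact stated form'' skips the step where this is reconciled with $\bar{\mathbf{u}}_{k'}\bar{\mathbf{u}}_{k'}^{H}$.
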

\begin{proof}
With ZF combining, the $m$th term of $\mathbb{E}[\mathbf{u}_{kk}]$ in~\eqref{eq:UL_SINR}  evaluates to $\gamma_{mk}\mathbb{E}[\mathbf{e}_{l(k)}^{T}(\mathbf{Z}_{\mathsf{u},m}^{H}\mathbf{Z}_{\mathsf{u},m})^{-1}\mathbf{Z}_{\mathsf{u},m}^{H}\mathbf{f}_{mk}]=\alpha_{mk}^2$. Next, we evaluate $\mathbb{E}[\mathbf{u}_{ki}\mathbf{u}_{ki}^{H}]\in\mathbb{C}^{\lvert\mathcal{A}_{\mathsf{u}}\rvert\times \lvert\mathcal{A}_{\mathsf{u}}\rvert}$ in closed form. First, we consider the UEs that use the same pilot as the $k$th UE, i.e., the UEs indexed by $i\in \mathcal{P}_{l(k)}, i\neq k$. In this case, the $m$th diagonal entry of $\mathbb{E}[\mathbf{u}_{ki}\mathbf{u}_{ki}^{H}]$ can be expanded as	\begin{multline}\label{eq:UL_weights_1}
	\!\!\!\!\!\mathbb{E}[\mathbf{v}_{mk}^{H}\mathbf{f}_{mi}\mathbf{f}_{mi}^{H}\mathbf{v}_{mk}]=\mathbb{E}[\mathbf{v}_{mk}^{H}\hat{\mathbf{f}}_{mi}\hat{\mathbf{f}}_{mi}^{H}\mathbf{v}_{mk}]+\mathbb{E}[\mathbf{v}_{mk}^{H}\tilde{\mathbf{f}}_{mi}\tilde{\mathbf{f}}_{mi}^{H}\mathbf{v}_{mk}]\\{(a)\atop=}\alpha^{4}_{mk}+\mathbb{E}[\mathbf{v}_{mk}^{H}\mathbb{E}[\tilde{\mathbf{f}}_{mi}\tilde{\mathbf{f}}_{mi}^{H}]\mathbf{v}_{mk}].
	\end{multline}
	Here, in $(a)$ we use the fact that $\mathbb{E}[\mathbf{v}_{mk}^{H}\hat{\mathbf{f}}_{mi}\hat{\mathbf{f}}_{mi}^{H}\mathbf{v}_{mk}]=\alpha^{4}_{mk}$ if $i\in\mathcal{P}_{l(k)}$.
Next, since $\tilde{\mathbf{f}}_{mi}\sim\mathcal{CN}(\mathbf{0},(\beta_{mi}-\alpha_{mi}^2)\mathbf{I}_{N})$ and 
	$\mathbb{E}[\mathbf{v}_{mk}^{H}\mathbf{v}_{mk}]=\frac{\alpha_{mk}^2}{N-\tau_p}$, 
	$\mathbb{E}[\mathbf{v}_{mk}^{H}\mathbb{E}[\tilde{\mathbf{f}}_{mi}\tilde{\mathbf{f}}_{mi}^{H}]\mathbf{v}_{mk}] = \frac{1}{N-\tau_p}\alpha_{mk}^2(\beta_{mi}-\alpha_{mi}^2).$
	\textcolor{black}{
	The off-diagonal $(m,n)$th element of $\mathbb{E}[\mathbf{u}_{ki}\mathbf{u}_{ki}^H]$ can be calculated as $[\mathbb{E}[\mathbf{u}_{ki}\mathbf{u}_{ki}^H]]_{m,n}=\mathbb{E}[\mathbf{v}_{mk}^{H}\hat{\mathbf{f}}_{mi}\hat{\mathbf{f}}_{ni}^{H}\mathbf{v}_{nk}]+\mathbb{E}[\mathbf{v}_{mk}^{H}\tilde{\mathbf{f}}_{mi}\tilde{\mathbf{f}}_{ni}^{H}\mathbf{v}_{nk}]=\alpha_{mk}^2\alpha_{ni}^2.$}
	\textcolor{black}{Next, for the UEs that do not share the same pilot as the $k$th UE~($\forall i'\notin \mathcal{P}_{l(k)}$), it is easy to show that the off-diagonal entries of $\mathbb{E}[\mathbf{u}_{ki'}\mathbf{u}_{ki'}^{H}]$ are zero. The $m$th diagonal entry can be evaluated as $
		\mathbb{E}[\mathbf{v}_{mk}^{H}\mathbf{f}_{mi'}\mathbf{f}_{mi'}^{H}\mathbf{v}_{mk}]=\mathbb{E}[\mathbf{v}_{mk}^{H}\mathbb{E}[\tilde{\mathbf{f}}_{mi'}\tilde{\mathbf{f}}_{mi'}^{H}]\mathbf{v}_{mk}]={\alpha_{mk}^2(\beta_{mi'}-\alpha_{mi'}^2)}/{(N-\tau_{p})}.$}
		
	\textcolor{black}{
	Next, we can show that
	$\mathbb{E}[|\mathbf{v}_{mk}^{H} \tilde{\mathbf{G}}_{mj}\mathbf{p}_{jn}|^2]=\frac{N\zeta_{mj}^{\mathsf{InAP}}\alpha_{mk}^2}{(N-\tau_{p})}.$
	Finally,
	 $\mathbf{N}_{\mathsf{eff.}}=N_{0}{\tt diag}( \frac{\alpha_{1k}^2}{N-\tau_{p}}, \ldots,\frac{\alpha_{|\mathcal{A}_{\mathsf{u}}|k}^2}{N-\tau_{p}}).$ This completes the key steps in the derivation of $\boldsymbol{\omega}_{k}^{\mathsf{opt.}}$. With a little algebraic manipulation, we arrive at the expressions in the Lemma.}
\end{proof}
Later, in Fig.~\ref{fig:UL_power_control}, we illustrate the correctness of the above lemma by matching the UL SE obtained using the weights computed from the above lemma with the sum UL SE obtained from Lemma~\ref{lem:weighted_UL_SINR} by averaging the expectations over channel realizations. We next provide an explicit closed-form expression for the UL SE, applicable for any choice of weights.
\begin{lem}\label{lemm:UL_SINR_DTDD}
	The UL SINR of the $k$th UE with ZF-combining can be written as
	\begin{align*}
	\eta_{\mathsf{u},k}=\frac{(N-\tau_{p})\mathcal{E}_{\mathsf{u},k}(\sum\nolimits_{m\in\mathcal{A}_{\mathsf{u}}}\omega_{mk}^{*}\alpha_{mk}^{2})^2}{{\tt EST}_{\mathsf{u},k}+{\tt MUI}_{\mathsf{u},k}+{\tt IAP}_{k}+N_{0}\sum\limits_{m\in\mathcal{A}_{\mathsf{u}}}|\omega_{mk}^{*}|^{2}\alpha_{mk}^{2}},
	\end{align*}
where ${\tt EST}_{\mathsf{u},k}$, ${\tt MUI}_{\mathsf{u},k}$ and ${\tt IAP}_{k}$ capture the error due to channel estimation, multi-UE interference from the pilot sharing UEs, and inter-AP CLI. These are respectively evaluated as
\begin{subequations}
	\begin{align}
		&{\tt EST}_{\mathsf{u},k}=\sum\limits_{k'\in\mathcal{U}_{\mathsf{u}}}\mathcal{E}_{\mathsf{u},k'}\hspace*{-2mm}\sum\limits_{m\in\mathcal{A}_{\mathsf{u}}}|\omega_{mk}^*|^2\alpha_{mk}^2(\beta_{mk'}-\alpha_{mk'}^2),\\
&{\tt MUI}_{\mathsf{u},k}=(N-\tau_{p})\sum\limits_{i\in\mathcal{P}_{l(k)}\backslash k}\hspace*{-1mm}\mathcal{E}_{\mathsf{u},i}\left(\sum\nolimits_{m\in\mathcal{A}_{\mathsf{u}}}\omega_{mk}^{*}\alpha_{mi}^{2}\right)^2\hspace*{-2.5mm},
\\
&{\tt IAP}_{k}=N\sum\limits_{n\in\mathcal{U}_{\mathsf{d}}}\sum\limits_{m\in\mathcal{A}_{\mathsf{u}}}\sum\limits_{j\in\mathcal{A}_{\mathsf{d}}}\mathcal{E}_{\mathsf{d}}\kappa_{jn}^2\zeta_{mj}^{\mathsf{InAP}}|\omega_{mk}^{*}|^{2}\alpha_{mk}^2.
	\end{align}
\end{subequations}
\end{lem}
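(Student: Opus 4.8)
The plan is to specialize the general UL SINR in~\eqref{eq:UL_SINR} to the ZF combiner $\mathbf{v}_{mk}=\gamma_{mk}\mathbf{Z}_{\mathsf{u},m}(\mathbf{Z}_{\mathsf{u},m}^{H}\mathbf{Z}_{\mathsf{u},m})^{-1}\mathbf{e}_{l(k)}$ and evaluate, in closed form, the second-moment quantities appearing in the numerator and denominator. Most of the required moments have already been computed in the proof of Lemma~\ref{lem:optimal_weights}, so the bulk of the work is to collect those pieces, sandwich them with $\boldsymbol{\omega}_{k}$, regroup them into the labelled terms ${\tt EST}_{\mathsf{u},k}$, ${\tt MUI}_{\mathsf{u},k}$, ${\tt IAP}_{k}$ and the noise term, and finally clear a common factor.

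Concretely, I would proceed term by term. The numerator reduces immediately using $[\mathbb{E}[\mathbf{u}_{kk}]]_{m}=\mathbf{v}_{mk}^{H}\hat{\mathbf{f}}_{mk}=\alpha_{mk}^{2}$, giving $\mathcal{E}_{\mathsf{u},k}(\sum_{m\in\mathcal{A}_{\mathsf{u}}}\omega_{mk}^{*}\alpha_{mk}^{2})^{2}$. For the denominator, the key structural observation is that every cross-AP ($m\neq m'$) contribution vanishes by independence of the per-AP channels, estimates, and inter-AP channels, so all matrices except the coherent pilot-sharing part are diagonal. The estimation-error contributions from each $k'\in\mathcal{U}_{\mathsf{u}}$ — the $i=k$ self-term isolated by the subtraction of $\mathcal{E}_{\mathsf{u},k}\mathbb{E}[\mathbf{u}_{kk}]\mathbb{E}[\mathbf{u}_{kk}^{H}]$, together with the co-pilot and non-co-pilot UEs — combine into ${\tt EST}_{\mathsf{u},k}$, each carrying the factor $\mathbb{E}[\|\mathbf{v}_{mk}\|^{2}]=\alpha_{mk}^{2}/(N-\tau_{p})$ with the error variance $\beta_{mk'}-\alpha_{mk'}^{2}$. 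The coherent part from the UEs $i\in\mathcal{P}_{l(k)}\setminus k$ sharing the $k$th pilot yields the rank-one contribution that becomes ${\tt MUI}_{\mathsf{u},k}$, and $\sum_{n}\boldsymbol{\omega}_{k}^{H}\mathbb{E}[\mathbf{a}_{kn}\mathbf{a}_{kn}^{H}]\boldsymbol{\omega}_{k}$ collapses to ${\tt IAP}_{k}$ once one invokes $\mathbb{E}[|\mathbf{v}_{mk}^{H}\tilde{\mathbf{G}}_{mj}\mathbf{p}_{jn}|^{2}]=N\zeta_{mj}^{\mathsf{InAP}}\alpha_{mk}^{2}/(N-\tau_{p})$ and the vanishing of the $j\neq j'$ and $m\neq m'$ cross terms. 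Finally $\mathbf{N}_{\mathsf{eff.}}$ contributes $N_{0}\sum_{m}|\omega_{mk}^{*}|^{2}\alpha_{mk}^{2}/(N-\tau_{p})$.

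The remaining step is bookkeeping of the $(N-\tau_{p})$ factor: the estimation-error, inter-AP, and noise terms each inherit a $1/(N-\tau_{p})$ (from the ZF normalizations $\mathbb{E}[\|\mathbf{v}_{mk}\|^{2}]$ and $\mathbb{E}[\|\mathbf{Z}_{\mathsf{d},j}(\mathbf{Z}_{\mathsf{d},j}^{H}\mathbf{Z}_{\mathsf{d},j})^{-1}\mathbf{e}_{l(n)}\|^{2}]=c_{jn}/(N-\tau_{p})$), whereas the coherent MUI is $O(1)$. Multiplying numerator and denominator by $(N-\tau_{p})$ clears the fractions in the first three groups and attaches the explicit $(N-\tau_{p})$ prefactor to both the signal term and to ${\tt MUI}_{\mathsf{u},k}$, producing exactly the stated expression.

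I expect the main obstacle to be the careful evaluation of the coherent pilot-sharing contribution. Because the estimates of co-pilot UEs are co-linear at each AP, the gain $\mathbf{v}_{mk}^{H}\hat{\mathbf{f}}_{mi}$ is in fact a \emph{deterministic} quantity (the random projection factors cancel), and expressing it correctly through the large-scale-fading parameters requires the identity $\mathbf{e}_{l(k)}^{T}(\mathbf{Z}_{\mathsf{u},m}^{H}\mathbf{Z}_{\mathsf{u},m})^{-1}\mathbf{Z}_{\mathsf{u},m}^{H}\mathbf{Z}_{\mathsf{u},m}\mathbf{e}_{l(i)}=\mathbf{e}_{l(k)}^{T}\mathbf{e}_{l(i)}$ together with the fact that $c_{mi}=c_{mk}$ for co-pilot UEs; tracking the $\alpha$-factors here is the delicate point. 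The secondary difficulty is the inter-AP quadratic form, whose closed form rests on the Wishart-type expectation from~\cite[Lemma~6]{WH_Verdu}. Once these two ingredients are in hand, the remaining algebra is routine.
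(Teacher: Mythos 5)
Your proposal is correct and follows exactly the route the paper intends: the paper's own proof of this lemma is a one-line deferral to the moment computations in the proof of Lemma~\ref{lem:optimal_weights} (the values of $\mathbb{E}[\mathbf{u}_{kk}]$, the diagonal/off-diagonal entries of $\mathbb{E}[\mathbf{u}_{ki}\mathbf{u}_{ki}^{H}]$ for co-pilot and non-co-pilot UEs, $\mathbb{E}[|\mathbf{v}_{mk}^{H}\tilde{\mathbf{G}}_{mj}\mathbf{p}_{jn}|^{2}]$, and $\mathbf{N}_{\mathsf{eff.}}$), which is precisely the collection-and-regrouping you describe, including the final clearing of the common $1/(N-\tau_{p})$ factor. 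Your identification of the two delicate points (the deterministic coherent co-pilot gain via $c_{mi}=c_{mk}$, and the Wishart expectation for the inter-AP quadratic form) matches the only nontrivial ingredients of the paper's argument.
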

\begin{proof}
\textcolor{black}{Follows by using similar techniques as the proof of  Lemma~\ref{lem:optimal_weights}.}
\end{proof}

Note that the above result is also valid when equal weights are applied at the CPU, i.e., $\omega_{mk}=\frac{1}{\lvert\mathcal{A}_{\mathsf{u}}\rvert}, \forall k, \forall m$. However, as will be shown in Fig.~\ref{fig:UL_power_control}, optimal weighting yields significantly improved sum SE compared to equal weighting. Next, we present the DL SINR with ZF precoding.

	 \begin{lem}\label{lem:DL_SE_DTDD}
	 	The DL SINR of the $n$th UE with ZF precoding is 	 
		\begin{align}\label{eq:DL_SINR_DTDD}
	 		\eta_{\mathsf{d},n}=\frac{(N-\tau_{p})\mathcal{E}_{\mathsf{d}}\left(\sum\nolimits_{j\in\mathcal{A}_{\mathsf{d}}}\alpha_{jn}\kappa_{jn}\right)^2}{{\tt EST}_{\mathsf{d},n}+{\tt MUI}_{\mathsf{d},n}+{\tt IUE}_{n}+N_{0}},
	 	\end{align}
 	where  ${\tt EST}_{\mathsf{d}, n}$, ${\tt MUI}_{\mathsf{d}, n}$, and ${\tt IUE}_{n}$ capture the interference due to channel estimation error, the DL multi-UE interference from pilot sharing UEs, and the UL UE to DL UE CLI. These are respectively evaluated as
 	\begin{subequations}
 		\begin{align}
 			&{\tt EST}_{\mathsf{d},n}=\sum\nolimits_{q\in\mathcal{U}_{\mathsf{d}}}\sum\nolimits_{j\in\mathcal{A}_{\mathsf{d}}}\mathcal{E}_{\mathsf{d}}\kappa_{jq}^2(\beta_{jn}-\alpha_{jn}^2),\label{eq:dl_err}\\
 			&{\tt MUI}_{\mathsf{d}, n}=(N-\tau_{p})\mathcal{E}_{\mathsf{d}}\sum\nolimits_{ q\in\mathcal{P}_{l(n)}\backslash n}\hspace*{-.1cm}\left(\sum\nolimits_{j\in\mathcal{A}_{\mathsf{d}}}\hspace*{-.1cm}\kappa_{jq}\alpha_{jn}\right)^2,\hspace*{-.14cm}\label{eq:dl_ncoh}
 			\\&{\tt IUE}_{n}=\sum\nolimits_{{k\in\mathcal{U}_{\mathsf{u}}}}\mathcal{E}_{\mathsf{u},k}\epsilon_{nk}\label{eq:dl_UEUE}
 			.\end{align}
 	\end{subequations}	
	 \end{lem}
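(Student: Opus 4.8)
The plan is to evaluate the generic hardening-based SINR \eqref{eq:DL_SINR} term by term for the ZF precoder, exploiting its deterministic orthogonality structure, the channel split $\mathbf{f}_{jn}=\hat{\mathbf{f}}_{jn}+\tilde{\mathbf{f}}_{jn}$ (with the MMSE error independent of the estimate and of the pilot observation, hence of every precoder), and the normalization $\mathbb{E}[\|\mathbf{b}_{jn}\|^{2}]=c_{jn}/(N-\tau_{p})$ with $\mathbf{b}_{jn}\triangleq\mathbf{Z}_{\mathsf{d},j}(\mathbf{Z}_{\mathsf{d},j}^{H}\mathbf{Z}_{\mathsf{d},j})^{-1}\mathbf{e}_{l(n)}$ from \cite[Lemma~6]{WH_Verdu}. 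The four objects I need are $\mathbb{E}[\mathbf{d}_{nn}]$, ${\tt var}\{\boldsymbol{\kappa}_{n}^{T}\mathbf{d}_{nn}\}$, $\mathbb{E}[|\boldsymbol{\kappa}_{q}^{T}\mathbf{d}_{nq}|^{2}]$ for $q\neq n$, and $\mathbb{E}|\mathtt{g}_{nk}|^{2}$; the last equals $\epsilon_{nk}$ directly by the channel model, producing ${\tt IUE}_{n}$.

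The workhorse is the deterministic identity $\hat{\mathbf{f}}_{jn}^{H}\mathbf{b}_{jq}=c_{jn}\sqrt{\tau_{p}\mathcal{E}_{p,n}}\beta_{jn}\,\mathbf{e}_{l(n)}^{T}\mathbf{e}_{l(q)}$, which equals $\sqrt{c_{jn}}\,\alpha_{jn}$ when $l(q)=l(n)$ and vanishes otherwise, since $\mathbf{e}_{l(n)}^{T}\mathbf{Z}_{\mathsf{d},j}^{H}\mathbf{Z}_{\mathsf{d},j}(\mathbf{Z}_{\mathsf{d},j}^{H}\mathbf{Z}_{\mathsf{d},j})^{-1}=\mathbf{e}_{l(n)}^{T}$ holds \emph{exactly}, not merely in expectation. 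Dividing by the normalization gives $\hat{\mathbf{f}}_{jn}^{H}\mathbf{p}_{jq}=\sqrt{N-\tau_{p}}\,\alpha_{jn}$ for $q\in\mathcal{P}_{l(n)}$ and $0$ otherwise, a \emph{constant}. For the numerator the error contribution $\mathbb{E}[\tilde{\mathbf{f}}_{jn}^{T}\mathbf{p}_{jn}]$ vanishes by independence and zero mean, so $[\mathbb{E}[\mathbf{d}_{nn}]]_{j}=\sqrt{N-\tau_{p}}\,\alpha_{jn}$, whence $\mathcal{E}_{\mathsf{d}}|\boldsymbol{\kappa}_{n}^{T}\mathbb{E}[\mathbf{d}_{nn}]|^{2}=(N-\tau_{p})\mathcal{E}_{\mathsf{d}}(\sum_{j\in\mathcal{A}_{\mathsf{d}}}\alpha_{jn}\kappa_{jn})^{2}$, the claimed numerator.

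For the denominator I would split each $[\mathbf{d}_{nq}]_{j}=\mathbf{f}_{jn}^{T}\mathbf{p}_{jq}$ into its deterministic estimate part and its zero-mean error part $\tilde{\mathbf{f}}_{jn}^{T}\mathbf{p}_{jq}$. Since the estimate part is constant, ${\tt var}\{\boldsymbol{\kappa}_{n}^{T}\mathbf{d}_{nn}\}$ arises \emph{solely} from the error terms; using $\tilde{\mathbf{f}}_{jn}\sim\mathcal{CN}(\mathbf{0},(\beta_{jn}-\alpha_{jn}^{2})\mathbf{I}_{N})$ independent of $\mathbf{p}_{jn}$ and $\mathbb{E}[\|\mathbf{p}_{jn}\|^{2}]=1$, each term gives $\kappa_{jn}^{2}(\beta_{jn}-\alpha_{jn}^{2})$, with cross-AP terms vanishing by independence of the errors across $j$. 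The same split for $q\neq n$ decomposes $\mathbb{E}[|\boldsymbol{\kappa}_{q}^{T}\mathbf{d}_{nq}|^{2}]$ into a coherent part $(N-\tau_{p})(\sum_{j}\kappa_{jq}\alpha_{jn})^{2}$ that survives only for the pilot-sharing UEs $q\in\mathcal{P}_{l(n)}\backslash n$, yielding ${\tt MUI}_{\mathsf{d},n}$, and an error part $\sum_{j}\kappa_{jq}^{2}(\beta_{jn}-\alpha_{jn}^{2})$. Gathering the error parts from the desired-signal variance ($q=n$) and from all interferers ($q\neq n$) into one sum over $q\in\mathcal{U}_{\mathsf{d}}$ reproduces ${\tt EST}_{\mathsf{d},n}$, and adding $N_{0}$ completes the denominator.

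The bookkeeping of recombining the estimation-error contributions scattered across the variance and every multi-user interference term into the compact ${\tt EST}_{\mathsf{d},n}$ needs care, but the genuine subtlety I expect to be the main obstacle is the transpose in the reciprocal DL inner products $\mathbf{f}_{jn}^{T}\mathbf{p}_{jq}$: the clean identity above uses the Hermitian product $\hat{\mathbf{f}}_{jn}^{H}\mathbf{b}_{jq}$, whereas $\mathbf{Z}_{\mathsf{d},j}^{T}\mathbf{Z}_{\mathsf{d},j}\neq\mathbf{Z}_{\mathsf{d},j}^{H}\mathbf{Z}_{\mathsf{d},j}$. I would resolve this by working with the conjugated, reciprocity-consistent precoder direction so that the pseudo-inverse cancels exactly, equivalently invoking the circular symmetry of $\mathbf{h}_{jn}$ to replace the transpose by a conjugate transpose without altering the distribution of the relevant scalars; once this is settled, all four evaluations reduce to the elementary Gaussian computations sketched above.
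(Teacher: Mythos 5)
Your proposal is correct and follows essentially the same route as the paper: evaluate \eqref{eq:DL_SINR} for the ZF precoder using the normalization $\mathbb{E}[\|\mathbf{Z}_{\mathsf{d},j}(\mathbf{Z}_{\mathsf{d},j}^{H}\mathbf{Z}_{\mathsf{d},j})^{-1}\mathbf{e}_{l(n)}\|^{2}]=c_{jn}/(N-\tau_{p})$, the estimate/error split of $\mathbf{f}_{jn}$, and the observation that the coherent part survives only for $q\in\mathcal{P}_{l(n)}$ while the estimation-error contributions from all $q\in\mathcal{U}_{\mathsf{d}}$ aggregate into ${\tt EST}_{\mathsf{d},n}$. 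The only differences are organizational — the paper computes $\sum_{q\in\mathcal{U}_{\mathsf{d}}}\mathbb{E}[\lvert\boldsymbol{\kappa}_{q}^{T}\mathbf{d}_{nq}\rvert^{2}]$ in one shot and subtracts $\lvert\mathbb{E}[\boldsymbol{\kappa}_{n}^{T}\mathbf{d}_{nn}]\rvert^{2}$, whereas you evaluate each term via the exact determinism of $\hat{\mathbf{f}}_{jn}^{H}\mathbf{p}_{jq}$ — and your explicit resolution of the transpose-versus-Hermitian issue (the paper silently writes $\mathbf{f}_{jn}^{H}\mathbf{p}_{jn}$ in its proof despite the signal model using $\mathbf{f}_{jn}^{T}\mathbf{p}_{jn}$) is the standard, correct fix.
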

	 	\begin{proof}\textcolor{black}{
		Here, we provide the important steps involved in the proof. The beamforming gain, $|\textstyle{\sum\nolimits_{j\in\mathcal{A}_{\mathsf{d}}}}\kappa_{jn}\sqrt{{\mathcal{E}_{\mathsf{d}}}}\mathbf{f}_{jn}^{H}\mathbf{p}_{jn}|^{2}$, after substituting for $\mathbf{p}_{jn}$, with $\mathbb{E}[\|\mathbf{Z}_{\mathsf{d},j}(\mathbf{Z}_{\mathsf{d},j}^{H}\mathbf{Z}_{\mathsf{d},j})^{-1}\mathbf{e}_{l(n)}\|^{2}]={c_{jn}}/{(N-\tau_{p})}$ and  $\hat{\mathbf{f}}_{jn}=c_{jn}\sqrt{\tau_{p}\mathcal{E}_{p,j}}\beta_{jn}\mathbf{Z}_{\mathsf{d},j}\mathbf{e}_{l(n)}$, can be evaluated as follows: 
\begin{align}
	&\textstyle{\mathcal{E}_{\mathsf{d}}\textstyle{|\sum\nolimits_{j\in\mathcal{A}_{\mathsf{d}}}}\kappa_{jn}\sqrt{({N-\tau_{p})}{c_{jn}^{-1}}}\hat{\mathbf{f}}_{jn}^{H}\mathbf{Z}_{\mathsf{d},j}\left(\mathbf{Z}_{\mathsf{d},j}^{H}\mathbf{Z}_{\mathsf{d},j}\right)^{-1}\mathbf{e}_{l(n)}|^{2}}\notag\\&=\textstyle{(N-\tau_{p})\mathcal{E}_{\mathsf{d}}(\sum\nolimits_{j\in\mathcal{A}_{\mathsf{d}}}\kappa_{jn}\alpha_{jn})^{2}}.
\end{align}
Next, from the denominator of~\eqref{eq:DL_SINR}, we can rewrite $\mathcal{E}_{\mathsf{d}}({\tt var}\lbrace\boldsymbol{\kappa}_{n}^{T}\mathbf{d}_{nn}\rbrace+\sum\nolimits_{q\in \mathcal{U}_{\mathsf{d}}\backslash n}\mathbb{E}[\lvert\boldsymbol{\kappa}_{q}^{T}\mathbf{d}_{nq}\rvert^{2}])$ as $\mathcal{E}_{\mathsf{d}}(\sum\nolimits_{q\in \mathcal{U}_{\mathsf{d}}}\mathbb{E}[\lvert\boldsymbol{\kappa}_{q}^{T}\mathbf{d}_{nq}\rvert^{2}]
-|\mathbb{E}[\boldsymbol{\kappa}_{n}^{T}\mathbf{d}_{nn}]|^2)$. Substituting for $\mathbf{d}_{nq}$ and $\boldsymbol{\kappa}_{q}$, we get
 $\sum\nolimits_{q\in \mathcal{U}_{\mathsf{d}}}\mathbb{E}[\lvert\boldsymbol{\kappa}_{q}^{T}\mathbf{d}_{nq}\rvert^{2}]=\sum\nolimits_{q\in\mathcal{U}_{\mathsf{d}}}\mathbb{E}[|\sum\nolimits_{j\in\mathcal{A}_{\mathsf{d}}}\kappa_{jq}\mathbf{f}_{jn}^{T}\mathbf{p}_{jq}|^2]$, which can be evaluated as $(N-\tau_{p})\sum\nolimits_{q\in\mathcal{P}_{l(n)}}(\sum\nolimits_{j\in\mathcal{A}_{\mathsf{d}}}\kappa_{jq}\alpha_{jn})^2+\sum\nolimits_{q\in\mathcal{U}_{\mathsf{d}}}\sum\nolimits_{j\in\mathcal{A}_{\mathsf{d}}}\kappa_{jq}(\beta_{jn}-\alpha_{jn}^2).$ Then, with $|\mathbb{E}[\boldsymbol{\kappa}_{n}^{T}\mathbf{d}_{nn}]|^2=(N-\tau_{p})(\sum\nolimits_{j\in\mathcal{A}_{\mathsf{d}}}\kappa_{jn}\alpha_{jn})^2$, we get~\eqref{eq:dl_err} and~\eqref{eq:dl_ncoh}. 
  }
	 	\end{proof}
		Lemma~\ref{lemm:UL_SINR_DTDD} and Lemma~\ref{lem:DL_SE_DTDD} explicitly capture the dependence of the SINRs on the UL-DL power control coefficients and underlying UL-DL AP sets. Thus, the sum UL-DL SE, given by~\eqref{eq:sum_UL_DL_SE_DTDD} below, too, depends on the choice of these parameters.
	 	\begin{multline}\label{eq:sum_UL_DL_SE_DTDD}
	 		{\mathcal{R}_{\mathsf{s}}(\mathcal{A}_{\mathsf{s}},\boldsymbol{\kappa},\boldsymbol{\mathcal{E}}_{\mathsf{u}})={\frac{\tau-\tau_{p}}{\tau}\Big[\sum\nolimits_{k\in\mathcal{U}_{\mathsf{u}}}\log[1+\eta_{\mathsf{u},k}]}}\\{+{\sum\nolimits_{n\in\mathcal{U}_{\mathsf{d}}}\log[1+\eta_{\mathsf{d},n}]\Big]}}, 
	 	\end{multline}
	 	where $\boldsymbol{\kappa}\triangleq[\boldsymbol{\kappa}_{1}^{T},\ldots,\boldsymbol{\kappa}_{n}^{T},\ldots, \boldsymbol{\kappa}_{|\mathcal{U}|_{\mathsf{d}}}^{T}]^T\in\mathbb{C}^{|\mathcal{U}_{\mathsf{d}}||\mathcal{A}_{\mathsf{d}}|}$~\textcolor{black}{(recall $\boldsymbol{\kappa}_{n} = [\kappa_{1n},\kappa_{2n},\ldots,\kappa_{|\mathcal{A}_{\mathsf{d}}|n}]^{T}$)}, $\boldsymbol{\mathcal{E}}_{\mathsf{u}}\triangleq\left[\mathcal{E}_{\mathsf{u},1},\ldots,\mathcal{E}_{\mathsf{u},|\mathcal{U}_{\mathsf{u}}|}\right]\in\mathbb{C}^{|\mathcal{U}_{\mathsf{u}}|}$, and $\mathcal{A}_{\mathsf{s}}=\mathcal{A}_{\mathsf{u}}\cup\mathcal{A}_{\mathsf{d}}\subseteq \mathcal{A}$.  
Next, we optimize $\boldsymbol{\kappa}$, $\boldsymbol{\mathcal{E}}_{\mathsf{u}}$, and find a schedule for UL/DL APs in $\mathcal{A}_{\mathsf{s}}$ to maximize $\mathcal{R}_{\mathsf{s}}(.)$.

	 \section{Joint Scheduling and Power Control}\label{sec:Scheduling_Power_Allocation}
	 Here, we aim to solve the following problem:
	 	 	\begin{subequations}\label{eq:problem_AP_powercontrol}
	 \begin{align}
	 	\underset{\left\{\mathcal{A}_{\mathsf{s}},\kappa_{jn},\mathcal{E}_{\mathsf{u},k}\right\}}{\max}\quad& \mathcal{R}_{\mathsf{s}}\left(\mathcal{A}_{\mathsf{s}},\boldsymbol{\kappa},\boldsymbol{\mathcal{E}}_{\mathsf{u}}\right)\label{eq:sum_R_maximization}
	 	\\\mathrm{subject~to}\quad&\mathcal{A}_{\mathsf{u}}\cup\mathcal{A}_{\mathsf{d}}=\mathcal{A}_{\mathsf{s}}\subseteq\mathcal{A}; \mathcal{A}_{\mathsf{u}}\cap\mathcal{A}_{\mathsf{d}}=\emptyset;\label{eq:set_constraints} \\&
	 	0\leq\mathcal{E}_{\mathsf{u},k}\leq\mathcal{E}_{\mathsf{u}}, \forall k\in\mathcal{U}_{\mathsf{u}};\label{eq:UL_power_constraint} \\& \sum\nolimits_{\substack{ q\in\mathcal{U}_{\mathsf{d}}}}\kappa_{jq}^2\leq1, \forall j\in\mathcal{A}_{\mathsf{d}}\label{eq:per_AP_power}.
	 \end{align}
	 	 	\end{subequations}
With $\mathcal{R}_{\mathsf{s}}(\cdot)$ given by~\eqref{eq:sum_UL_DL_SE_DTDD}, the above problem is non-convex and NP-hard. We next present our solution.

\subsubsection{AP-scheduling}
Let us define $\overline{\mathcal{R}}_{\mathsf{s}}\left(\mathcal{A}_{\mathsf{s}}\right)=\max\limits_{\kappa_{jn},\mathcal{E}_{\mathsf{u},k}}\quad \mathcal{R}_{\mathsf{s}}\left(\mathcal{A}_{\mathsf{s}},\boldsymbol{\kappa},\boldsymbol{\mathcal{E}}_{\mathsf{u}}\right)$ as the sum UL-DL SE attained via optimizing the power allocation coefficients when the underlying set of APs is $\mathcal{A}_{\mathsf{s}}=\mathcal{A}_{\mathsf{u}}\cup\mathcal{A}_{\mathsf{d}}$. Now, the UL and DL power control coefficients for a sub-set of APs,  $\mathcal{A}_{\mathsf{s}}$, may not be optimal when we add one more AP in either UL or DL, say $\{j_{\mathsf{mode}}\}, \mathsf{mode}\in\{\mathsf{u},\mathsf{d}\}$, to $\mathcal{A}_{\mathsf{s}}$. Thus, we need to reoptimize the power control coefficients for $\mathcal{A}_{\mathsf{s}}\cup\{j_{\mathsf{mode}}\}$. 
The following proposition holds when we optimize the power control coefficients for both $\mathcal{A}_{\mathsf{s}}$ and $\mathcal{A}_{\mathsf{s}}\cup\{j_{\mathsf{mode}}\}$.
\begin{prop}
	For two sets of scheduled UL and DL APs, $\mathcal{A}_{\mathsf{s}}$ and $\mathcal{A}_{\mathsf{t}}$, where $\mathcal{A}_{\mathsf{s}}\subseteq\mathcal{A}_{\mathsf{t}}$, $\overline{\mathcal{R}}_{\mathsf{s}}\left(\mathcal{A}_{\mathsf{s}}\right)\leq\overline{\mathcal{R}}_{\mathsf{s}}\left(\mathcal{A}_{\mathsf{t}}\right)$, where, for both $\mathcal{A}_{\mathsf{s}}$ and $\mathcal{A}_{\mathsf{t}}$, the UL and DL power control coefficients are optimized to maximize the sum UL-DL SE.
\end{prop}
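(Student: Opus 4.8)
The plan is to prove the claim by a feasibility-plus-optimality argument combined with a short induction. Since the greedy schedule grows the AP set one AP at a time in a fixed mode, it suffices to establish the single-step result $\overline{\mathcal{R}}_{\mathsf{s}}(\mathcal{A}_{\mathsf{s}})\le\overline{\mathcal{R}}_{\mathsf{s}}(\mathcal{A}_{\mathsf{s}}\cup\{j_{\mathsf{mode}}\})$ for one newly added AP $j$ in either mode. Writing $\mathcal{A}_{\mathsf{t}}\backslash\mathcal{A}_{\mathsf{s}}=\{j_{1},\ldots,j_{L}\}$ with the modes inherited from $\mathcal{A}_{\mathsf{t}}$, I would chain the single-step inequalities along $\mathcal{A}_{\mathsf{s}}=\mathcal{B}_{0}\subset\mathcal{B}_{1}\subset\cdots\subset\mathcal{B}_{L}=\mathcal{A}_{\mathsf{t}}$, each intermediate set remaining a valid half-duplex schedule because the common APs keep their assigned modes. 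The core idea is that any power allocation that is optimal for the smaller set can be embedded as a \emph{feasible} point for the larger set by deactivating the newly added AP, so the optimum over the larger set cannot be smaller.

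First I would treat $\mathsf{mode}=\mathsf{d}$, i.e., $j$ is added as a DL AP, so $\mathcal{A}_{\mathsf{d}}$ grows while $\mathcal{A}_{\mathsf{u}}$ is unchanged. Take a maximizer $(\boldsymbol{\kappa}^{\star},\boldsymbol{\mathcal{E}}_{\mathsf{u}}^{\star})$ of $\overline{\mathcal{R}}_{\mathsf{s}}(\mathcal{A}_{\mathsf{s}})$ and extend it by setting $\kappa_{jn}=0$ for all $n\in\mathcal{U}_{\mathsf{d}}$ while keeping every other coefficient fixed; this is feasible since $\sum_{q}\kappa_{jq}^{2}=0\le 1$, so~\eqref{eq:per_AP_power} holds. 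Inspecting Lemma~\ref{lem:DL_SE_DTDD}, the beamforming numerator and the terms ${\tt EST}_{\mathsf{d},n}$ and ${\tt MUI}_{\mathsf{d},n}$ are additive sums over DL APs in which the $j$-th summand carries $\kappa_{jn}$ or $\kappa_{jn}^{2}$, while ${\tt IUE}_{n}$ depends only on the (unchanged) UL powers; hence every $\eta_{\mathsf{d},n}$ is identical to its value for $\mathcal{A}_{\mathsf{s}}$. Likewise, in Lemma~\ref{lemm:UL_SINR_DTDD} the sole coupling to the DL APs is through ${\tt IAP}_{k}$, whose summand carries $\kappa_{jn}^{2}\zeta_{mj}^{\mathsf{InAP}}$, so with $\kappa_{jn}=0$ the InAI contributed by $j$ vanishes and the UL AP set is untouched; thus every $\eta_{\mathsf{u},k}$ is unchanged too. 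The extended point therefore attains exactly $\overline{\mathcal{R}}_{\mathsf{s}}(\mathcal{A}_{\mathsf{s}})$ in~\eqref{eq:sum_UL_DL_SE_DTDD}, and since $\overline{\mathcal{R}}_{\mathsf{s}}(\mathcal{A}_{\mathsf{s}}\cup\{j_{\mathsf{d}}\})$ is the maximum over all feasible allocations, the inequality follows.

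Next I would handle $\mathsf{mode}=\mathsf{u}$, where $\mathcal{A}_{\mathsf{u}}$ grows by $j$ and $\mathcal{A}_{\mathsf{d}}$ is fixed. Keeping $(\boldsymbol{\kappa}^{\star},\boldsymbol{\mathcal{E}}_{\mathsf{u}}^{\star})$ unchanged, every $\eta_{\mathsf{d},n}$ in Lemma~\ref{lem:DL_SE_DTDD} is unaffected because it has no direct dependence on $\mathcal{A}_{\mathsf{u}}$ and its only UL coupling ${\tt IUE}_{n}$ is a function of the fixed UL powers. For the UL, the key observation is that the SINR-optimal weighting of Lemma~\ref{lem:weighted_UL_SINR} and Lemma~\ref{lem:optimal_weights} makes each $\eta_{\mathsf{u},k}$ equal to the \emph{maximum} of the weighted SINR in~\eqref{eq:UL_SINR} over the weight vector $\boldsymbol{\omega}_{k}$, whose dimension rises from $|\mathcal{A}_{\mathsf{u}}|$ to $|\mathcal{A}_{\mathsf{u}}|+1$. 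Choosing $\omega_{jk}=0$ and retaining the old optimal weights reproduces, term by term in Lemma~\ref{lemm:UL_SINR_DTDD}, the UL SINR of the smaller set (each of the numerator, ${\tt EST}_{\mathsf{u},k}$, ${\tt MUI}_{\mathsf{u},k}$, ${\tt IAP}_{k}$, and the noise term is an additive sum over $m\in\mathcal{A}_{\mathsf{u}}$ in which the $j$-th summand carries $\omega_{jk}$ or $|\omega_{jk}|^{2}$). Since the larger set maximizes over a superset of weight vectors, $\eta_{\mathsf{u},k}$ can only increase, so both UL and DL SINRs are non-decreasing and the claim again follows from optimality over the larger set.

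The main obstacle is the UL step: unlike the DL coefficients $\kappa_{jn}$, the combining weights $\omega_{mk}$ are not free variables in~\eqref{eq:problem_AP_powercontrol}, so the monotonicity cannot be obtained by a direct deactivation and must instead be routed through the maximization property of the optimal weights. The care needed there is to verify that the zero-weight embedding leaves each interference term and the beamforming numerator exactly at their smaller-set values, which is what the additive structure of Lemma~\ref{lemm:UL_SINR_DTDD} guarantees. A secondary point, relevant only when chaining the single-step results, is to confirm that every intermediate set $\mathcal{B}_{\ell}$ is a legitimate HD schedule with $\mathcal{A}_{\mathsf{u}}\cap\mathcal{A}_{\mathsf{d}}=\emptyset$; this holds as long as the APs common to $\mathcal{A}_{\mathsf{s}}$ and $\mathcal{A}_{\mathsf{t}}$ retain their assigned modes, which is exactly how the greedy scheduler adds APs.
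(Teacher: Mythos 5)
Your proof is correct. The paper in fact states this proposition without any proof, so there is nothing to compare line by line; what you have written is the argument the authors evidently have in mind (embed the optimizer for the smaller set as a feasible point for the larger set and invoke optimality of the maximum), and you have filled in the two places where it is not completely automatic. The DL step is routine: setting $\kappa_{jn}=0$ for the added AP is feasible under~\eqref{eq:per_AP_power} and, by the additive structure of Lemma~\ref{lem:DL_SE_DTDD} and of ${\tt IAP}_{k}$ in Lemma~\ref{lemm:UL_SINR_DTDD}, leaves every SINR at its smaller-set value. The UL step is the genuinely non-trivial one, and you identify it correctly: there is no per-AP power variable to zero out when an AP is added in UL mode, so monotonicity must come from the variational characterization of the SINR-optimal weights in Lemma~\ref{lem:weighted_UL_SINR} --- zero-padding the old optimal $\boldsymbol{\omega}_{k}$ is feasible in the enlarged weight space and reproduces the old SINR term by term, so the maximized SINR cannot decrease. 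This also exposes a hypothesis the paper leaves implicit: under a fixed non-adaptive weighting (e.g., equal weights $\omega_{mk}=1/\lvert\mathcal{A}_{\mathsf{u}}\rvert$), adding a poorly placed UL AP could lower $\eta_{\mathsf{u},k}$, so the proposition as used by Algorithm~\ref{algo:GA_submod} really does require the weights to be (re-)optimized for each candidate AP set. Your other caveat --- that $\mathcal{A}_{\mathsf{s}}\subseteq\mathcal{A}_{\mathsf{t}}$ must be read as mode-preserving containment so that every intermediate set in the chain is a legitimate HD schedule --- is likewise the right reading of the statement, since it is motivated by comparing $\mathcal{A}_{\mathsf{s}}$ with $\mathcal{A}_{\mathsf{s}}\cup\{j_{\mathsf{mode}}\}$.
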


Thus, $\overline{\mathcal{R}}_{\mathsf{s}}\left(\mathcal{A}_{\mathsf{s}}\right)$ is a monotonically non-decreasing function of the underlying scheduled AP set.  
Motivated by this, we schedule APs one-by-one, where, at each iteration, we schedule an AP and its associated mode such that incremental gain in the (optimized) sum UL-DL SE is maximized. The recipe is sketched in Algorithm~\ref{algo:GA_submod}. Our algorithm procures the AP schedule in polynomial time~($\mathcal{O}(M)$), whereas exhaustive search-based AP scheduling requires optimization of the sum UL-DL SEs over all $2^M$ AP configurations, which is not a scalable approach. Such greedy approaches have been previously used in the antenna selection literature~\cite{Antenna_selection}.

\begin{algorithm}[t!]
	\DontPrintSemicolon
	\SetNlSty{textbf}{}{:}
	\setstretch{.3}
	\KwIn{$\mathcal{A}$: the set of all AP indices}
	\KwInt{$\mathcal{A}_\mathsf{u}=\mathcal{A}_\mathsf{d}=\emptyset$
	$\mathcal{A}_\mathsf{s}=\mathcal{A}_\mathsf{u}\cup\mathcal{A}_\mathsf{d}$}
	
	\While{ $\mathcal{A}_\mathsf{s}^{\mathsf{c}}\neq\emptyset$ }
	{
		Evaluate: \begin{align*}\begin{cases}i^{\star}_\mathsf{u}=\arg \max\limits_{i\in \mathcal{A}_s^{\mathsf{c}}} \overline{\mathcal{R}}_\mathsf{s} (\mathcal{A}_\mathsf{u}\cup\{i_\mathsf{u}\})\\
		i^{\star}_\mathsf{d}=\arg \max\limits_{i\in \mathcal{A}_\mathsf{s}^{\mathsf{c}}} \overline{\mathcal{R}}_\mathsf{s} (\mathcal{A}_\mathsf{d}\cup\{i_\mathsf{d}\})\end{cases}\end{align*}\;				
		\If{$\overline{\mathcal{R}}_\mathsf{s} (\mathcal{A}_\mathsf{u}\cup\{i_\mathsf{u}^{\star}\})\ge\overline{\mathcal{R}}_\mathsf{s} (\mathcal{A}_\mathsf{d}\cup\{i_\mathsf{d}^{\star}\})$}{Update: $\mathcal{A}_\mathsf{s}\leftarrow\mathcal{A}_\mathsf{u}=\mathcal{A}_\mathsf{u}\cup\{i_\mathsf{u}^{\star}\}$\; 
			\Else{Update: $\mathcal{A}_\mathsf{s}\leftarrow\mathcal{A}_\mathsf{d}=\mathcal{A}_\mathsf{d}\cup \{i_\mathsf{d}^{\star}\}\;$}
		}
				
	}
		\caption{UL and DL Mode Selection}\label{algo:GA_submod}
\end{algorithm}

We next address the power control problem given an AP schedule $\mathcal{A}_{\mathsf{s}}$. Now, as the UL power control coefficients affect the DL SE via inter-UE CLI and the DL power control coefficients affect the UL SE via inter-AP CLI, the joint optimization of UL-DL sum SE is still a non-convex and prohibitively complex task. In the next subsections, we present an alternating optimization approach, where we optimize the DL power control coefficients to maximize the DL sum SE given the UL power control coefficients, and vice-versa, until convergence.

\subsubsection{Uplink power control}\label{sec:UL_power_control}
First, we note that the UL power control coefficients influence the sum UL-DL SE predominantly via the sum UL SE. This is because, although the InUE term in the sum DL SE depends on UL transmit powers $\mathcal{E}_{u,k}, \forall k\in\mathcal{U}_{\mathsf{u}}$, it does not scale with the number of antennas (see~\eqref{eq:dl_UEUE}). Thus, the DL multi-UE and DL coherent interference terms dominate the InUE term in the sum DL SE. Hence, for mathematical tractability, we consider the sum UL SE maximization to optimize $\mathcal{E}_{u,k}, \forall k\in\mathcal{U}_{u}$. This simplifies the joint UL-DL SE maximization and comes with closed-form update equations as well as convergence guarantees for the sum UL SE maximization sub-problem. We use a similar approach for DL power control, by maximizing the DL SE.

Second, we note that, with SINR-optimal weighting at the CPU, since the weights maximize the received SINR at the CPU for each UE, they also maximize the sum UL SE. However, UEs' transmit powers can be optimized to improve the sum UL SE. Also, the recipe for power control presented here can be applied to any choice of weights at the CPU. 

For convenience, we rewrite the problem as follows:
\begin{subequations}
	\begin{align}
		\max_{\boldsymbol{\mathcal{E}}_{\mathsf{u}}} \quad & \sum\nolimits_{k\in\mathcal{U}_{\mathsf{u}}}\log\left(1+\frac{\mathsf{G}_{\mathsf{u},k}\left(\boldsymbol{\mathcal{E}}_{\mathsf{u}}\right)}{\mathsf{I}_{\mathsf{u},k}\left(\boldsymbol{\mathcal{E}}_{\mathsf{u}}\right)}\right),\label{eq:UL_R_maximization}\\\mathrm{subject~to}\quad& 0\leq \mathcal{E}_{\mathsf{u},k}\leq\mathcal{E}_{\mathsf{u}}, \forall k\in\mathcal{U}_{\mathsf{u}},
	\end{align}\label{eq:UL_R_maximization_full}
\end{subequations}
where, with the help of  Lemma~\ref{lemm:UL_SINR_DTDD}, we define  $\mathsf{G}_{\mathsf{u},k}(\boldsymbol{\mathcal{E}}_{\mathsf{u}})=(N-\tau_{p})\mathcal{E}_{\mathsf{u},k}\left(\sum\nolimits_{m\in\mathcal{A}_{\mathsf{u}}}\omega_{mk}^{*}\alpha_{mk}^{2}\right)^2$ and $\mathsf{I}_{\mathsf{u},k}(\boldsymbol{\mathcal{E}}_{\mathsf{u}})={\tt EST}_{\mathsf{u},k}+{\tt MUI}_{\mathsf{u},k}+\sigma_{\mathsf{eff.},\mathsf{u},k}^2$, with $\sigma_{\mathsf{eff.},\mathsf{u},k}^2\triangleq\left({\tt IAP}_{k}+N_{0}\sum\nolimits_{m\in\mathcal{A}_{\mathsf{u}}}|\omega_{mk}^{*}|^{2}\alpha_{mk}^{2}\right)$ being the effective noise that is independent of the UL transmit powers. Now, we recognize that~\eqref{eq:UL_R_maximization} is a sum of the logarithm of ratios, which can be converted into a convex problem via FP using a few auxiliary variables that can be iteratively solved in closed form. Further, FP guarantees that the optimal value of the objective and the variables that attain the optimum are the same for both the original objective function and the transformed surrogate objective function. This makes FP an excellent choice for the problem at hand. We next present the recipe in the context of our work.

To this end, we first introduce a set of auxiliary variables $\boldsymbol{\varpi}_{\mathsf{u}}\triangleq\left\{ \varpi_{\mathsf{u},1},\ldots,\varpi_{\mathsf{u},\lvert\mathcal{U}_{\mathsf{u}}\rvert} \right\}$, and formulate an \emph{equivalent} problem of $\max\nolimits_{\boldsymbol{\mathcal{E}}_{\mathsf{u}}, \boldsymbol{\varpi}_{\mathsf{u}}}\quad f\left(\boldsymbol{\mathcal{E}}_{\mathsf{u}}, \boldsymbol{\varpi}_{\mathsf{u}}\right)$, with  
\begin{align}\label{eq:equivalent_UL}
	f\left(\boldsymbol{\mathcal{E}}_{\mathsf{u}}, \boldsymbol{\varpi}_\mathsf{u}\right)\triangleq& \sum\nolimits_{k\in\mathcal{U}_{\mathsf{u}}}\ln\left(1+\varpi_{ \mathsf{u},k}\right)-\sum\nolimits_{k\in\mathcal{U}_{\mathsf{u}}}\varpi_{\mathsf{u},k}\notag\\&+\sum\nolimits_{k\in\mathcal{U}_{\mathsf{u}}}\frac{(1+\varpi_{\mathsf{u},k})\mathsf{G}_{\mathsf{u},k}\left(\boldsymbol{\mathcal{E}}_{\mathsf{u}}\right)}{\mathsf{G}_{\mathsf{u},k}\left(\boldsymbol{\mathcal{E}}_{\mathsf{u}}\right)+\mathsf{I}_{\mathsf{u},k}\left(\boldsymbol{\mathcal{E}}_{\mathsf{u}}\right)}.
\end{align}
and the same constraints are same as in~\eqref{eq:UL_R_maximization}.  Using results available in~\cite{FP_I}, we can show that the above two problems are 
{equivalent} in the sense that $\mathcal{E}_{u,k}, \forall k,$ are the solution to~\eqref{eq:UL_R_maximization_full} if and only if they are also the solution to~\eqref{eq:equivalent_UL}, and further, the maximum value of the objective in~\eqref{eq:equivalent_UL} and~\eqref{eq:UL_R_maximization} are the same. 
Now, to maximize $f\left(\boldsymbol{\mathcal{E}}_{\mathsf{u}}, \boldsymbol{\varpi}_{\mathsf{u}}\right)$, alternately optimize $\boldsymbol{\mathcal{E}}_{\mathsf{u}}$ and  
$\boldsymbol{\varpi}_{\mathsf{u}}$, while keeping the other variable constant. We observe that $f\left(\boldsymbol{\mathcal{E}}_{\mathsf{u}}, \boldsymbol{\varpi}_\mathsf{u}\right)$ is a concave differentiable function over $\boldsymbol{\varpi}_\mathsf{u}$ when $\boldsymbol{\mathcal{E}}_{\mathsf{u}}$ is fixed, say $\boldsymbol{\mathcal{E}}_{\mathsf{u}}^{\tt iter}$. Thus, 
 $\boldsymbol{\varpi}^{{\tt iter}+1}_{\mathsf{u}}=\arg\max\limits_{ \boldsymbol{\varpi}_{\mathsf{u}}} f\left(\boldsymbol{\mathcal{E}}_{\mathsf{u}}^{\tt iter}, \boldsymbol{\varpi}_{\mathsf{u}}\right)$ can be optimally determined by setting $\frac{\partial f\left(\boldsymbol{\mathcal{E}}_{\mathsf{u}}, \boldsymbol{\varpi}_\mathsf{u}\right)}{\partial \varpi_{\mathsf{u},k}}=0$ for each $\varpi_{\mathsf{u},k}$. This yields $\varpi_{\mathsf{u},k}^{{\tt iter}+1}=\frac{\mathsf{G}_{\mathsf{u},k}\left(\boldsymbol{\mathcal{E}}_{\mathsf{u}}^{\tt iter}\right)}{\mathsf{I}_{\mathsf{u},k}\left(\boldsymbol{\mathcal{E}}_{\mathsf{u}}^{\tt iter}\right)}, \forall k\in\mathcal{U}_{\mathsf{u}}$. Now, for fixed $\boldsymbol{\varpi}_{\mathsf{u}}$, the first and the second term of~\eqref{eq:equivalent_UL} are constants. The third term, i.e., $\sum\limits_{k\in\mathcal{U}_{\mathsf{u}}}\dfrac{(1+\varpi_{\mathsf{u},k})\mathsf{G}_{\mathsf{u},k}\left(\boldsymbol{\mathcal{E}}_{\mathsf{u}}\right)}{\mathsf{G}_{\mathsf{u},k}\left(\boldsymbol{\mathcal{E}}_{\mathsf{u}}\right)+\mathsf{I}_{\mathsf{u},k}\left(\boldsymbol{\mathcal{E}}_{\mathsf{u}}\right)}$, is in the sum of ratios form, which can be reformulated using a quadratic transform as $\max_{\boldsymbol{\mathcal{E}}_{\mathsf{u}}, \boldsymbol{\tilde{\varpi}}} f(\boldsymbol{\mathcal{E}}_{\mathsf{u}}, \boldsymbol{\tilde{\varpi}}_{\mathsf{u}})$, where 
\begin{align}\label{eq:Lagrage_dual_UL}
	f(\boldsymbol{\mathcal{E}}_{\mathsf{u}}, \boldsymbol{\tilde{\varpi}}_{\mathsf{u}})\triangleq& \sum\nolimits_{k\in\mathcal{U}_{\mathsf{u}}}2\tilde{\varpi}_{\mathsf{u},k}\sqrt{(1+\varpi_{\mathsf{u},k})\mathsf{G}_{\mathsf{u},k}\left(\boldsymbol{\mathcal{E}}_{\mathsf{u}}\right)}\notag\\&\hspace*{-17mm}-\sum\nolimits_{k\in\mathcal{U}_{\mathsf{u}}}\tilde{\varpi}_{\mathsf{u}, k}^{2}\left(\mathsf{G}_{\mathsf{u},k}\left(\boldsymbol{\mathcal{E}}_{\mathsf{u}}\right)+\mathsf{I}_{\mathsf{u},k}\left(\boldsymbol{\mathcal{E}}_{\mathsf{u}}\right)\right)+\mathsf{c}_{\boldsymbol{\varpi}_{\mathsf{u}}},
\end{align}
and $\boldsymbol{\tilde{\varpi}}_{\mathsf{u}}\triangleq\left\{ \tilde{\varpi}_{\mathsf{u},1},\ldots,\tilde{\varpi}_{\mathsf{u},\lvert\mathcal{U}_{\mathsf{u}}\rvert} \right\}$ being a new set of auxiliary variables and $\mathsf{c}_{\boldsymbol{\varpi}_{\mathsf{u}}}$ is a constant dependent only on $\boldsymbol{\varpi}_{\mathsf{u}}$.
We can solve for $\tilde{\varpi}_{\mathsf{u},k}$ and $\boldsymbol{\mathcal{E}}_{\mathsf{u}}$ via  partial differentiation of ~\eqref{eq:Lagrage_dual_UL}, for fixed $\varpi_{\mathsf{u},k}$. We summarize the overall recipe in Algorithm~\ref{algo:UL_power_control}. The derivation of updates for $\tilde{\varpi}_{\mathsf{u},{k}}^{{\tt iter}+1}$ and $\mathcal{E}_{\mathsf{u},k}^{{\tt iter} in +1}$~\eqref{eq:update_E} use simple algebraic manipulations, and are omitted for brevity.
\begin{prop}\label{prop:conv_UL}
The UL power control Algorithm~\ref{algo:UL_power_control} is convergent in the objective since $f\left(\boldsymbol{\mathcal{E}}_{\mathsf{u}}, \boldsymbol{\varpi}_{\mathsf{u}}\right)$ in~\eqref{eq:equivalent_UL} is bounded above and monotonically non-decreasing after each iteration.
\end{prop}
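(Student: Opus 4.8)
The plan is to establish convergence in the objective via the monotone convergence theorem for real sequences: I would show that the values $\{f(\boldsymbol{\mathcal{E}}_{\mathsf{u}}^{t},\boldsymbol{\varpi}_{\mathsf{u}}^{t})\}$ generated by Algorithm~\ref{algo:UL_power_control} (writing $t$ for the iteration index ${\tt iter}$) are (i) monotonically non-decreasing and (ii) bounded above, whence the sequence converges. The entire argument rests on two facts: each block update of the algorithm is an \emph{exact} maximization of a concave subproblem, and the two FP transforms used to derive~\eqref{eq:equivalent_UL} and~\eqref{eq:Lagrage_dual_UL} are value-preserving at their respective optimizers.

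For monotonicity, I would track the objective through one full outer iteration $t\to t+1$. Let $F(\boldsymbol{\mathcal{E}}_{\mathsf{u}},\boldsymbol{\varpi}_{\mathsf{u}})$ denote the objective in~\eqref{eq:equivalent_UL} and $\tilde{F}(\boldsymbol{\mathcal{E}}_{\mathsf{u}},\tilde{\boldsymbol{\varpi}}_{\mathsf{u}};\boldsymbol{\varpi}_{\mathsf{u}})$ the quadratically transformed surrogate in~\eqref{eq:Lagrage_dual_UL}. The quadratic transform guarantees $\tilde{F}(\boldsymbol{\mathcal{E}}_{\mathsf{u}},\tilde{\boldsymbol{\varpi}}_{\mathsf{u}};\boldsymbol{\varpi}_{\mathsf{u}})\leq F(\boldsymbol{\mathcal{E}}_{\mathsf{u}},\boldsymbol{\varpi}_{\mathsf{u}})$ for every $\tilde{\boldsymbol{\varpi}}_{\mathsf{u}}$, with equality at the maximizing $\tilde{\varpi}_{\mathsf{u},k}=\sqrt{(1+\varpi_{\mathsf{u},k})\mathsf{G}_{\mathsf{u},k}}/(\mathsf{G}_{\mathsf{u},k}+\mathsf{I}_{\mathsf{u},k})$. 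The three closed-form updates then yield the chain
\begin{align*}
F(\boldsymbol{\mathcal{E}}_{\mathsf{u}}^{t},\boldsymbol{\varpi}_{\mathsf{u}}^{t})
&\overset{(a)}{\leq} F(\boldsymbol{\mathcal{E}}_{\mathsf{u}}^{t},\boldsymbol{\varpi}_{\mathsf{u}}^{t+1})
\overset{(b)}{=} \tilde{F}(\boldsymbol{\mathcal{E}}_{\mathsf{u}}^{t},\tilde{\boldsymbol{\varpi}}_{\mathsf{u}}^{t+1};\boldsymbol{\varpi}_{\mathsf{u}}^{t+1}) \\
&\overset{(c)}{\leq} \tilde{F}(\boldsymbol{\mathcal{E}}_{\mathsf{u}}^{t+1},\tilde{\boldsymbol{\varpi}}_{\mathsf{u}}^{t+1};\boldsymbol{\varpi}_{\mathsf{u}}^{t+1})
\overset{(d)}{\leq} F(\boldsymbol{\mathcal{E}}_{\mathsf{u}}^{t+1},\boldsymbol{\varpi}_{\mathsf{u}}^{t+1}),
\end{align*}
where $(a)$ holds because $\boldsymbol{\varpi}_{\mathsf{u}}^{t+1}$ globally maximizes the function $F(\boldsymbol{\mathcal{E}}_{\mathsf{u}}^{t},\cdot)$, which is concave in $\boldsymbol{\varpi}_{\mathsf{u}}$; $(b)$ holds because $\tilde{\boldsymbol{\varpi}}_{\mathsf{u}}^{t+1}$ is the value-preserving maximizer of the quadratic transform at $\boldsymbol{\mathcal{E}}_{\mathsf{u}}^{t}$; $(c)$ holds because $\boldsymbol{\mathcal{E}}_{\mathsf{u}}^{t+1}$ maximizes the surrogate over the box constraint; and $(d)$ is the surrogate upper-bound property. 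Hence $F$ is non-decreasing across iterations. To license $(c)$ as a genuine maximization I would verify that $\tilde{F}$ is concave in $\boldsymbol{\mathcal{E}}_{\mathsf{u}}$: by Lemma~\ref{lemm:UL_SINR_DTDD}, $\mathsf{G}_{\mathsf{u},k}$ is linear in $\mathcal{E}_{\mathsf{u},k}$ and ${\tt EST}_{\mathsf{u},k},{\tt MUI}_{\mathsf{u},k}$ are nonnegative linear combinations of the $\mathcal{E}_{\mathsf{u},k'}$, so the term $2\tilde{\varpi}_{\mathsf{u},k}\sqrt{(1+\varpi_{\mathsf{u},k})\mathsf{G}_{\mathsf{u},k}}$ is concave (square-root of an affine function) and $-\tilde{\varpi}_{\mathsf{u},k}^{2}(\mathsf{G}_{\mathsf{u},k}+\mathsf{I}_{\mathsf{u},k})$ is affine; thus $\tilde{F}$ is concave over the convex box $\{0\leq\mathcal{E}_{\mathsf{u},k}\leq\mathcal{E}_{\mathsf{u}}\}$ and the derivative-zero/KKT update returns the global maximizer.

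For boundedness, I would note that $F(\boldsymbol{\mathcal{E}}_{\mathsf{u}},\boldsymbol{\varpi}_{\mathsf{u}})\leq\max_{\boldsymbol{\varpi}_{\mathsf{u}}'}F(\boldsymbol{\mathcal{E}}_{\mathsf{u}},\boldsymbol{\varpi}_{\mathsf{u}}')=\sum_{k\in\mathcal{U}_{\mathsf{u}}}\log(1+\mathsf{G}_{\mathsf{u},k}/\mathsf{I}_{\mathsf{u},k})$ by the equivalence of~\eqref{eq:equivalent_UL} and~\eqref{eq:UL_R_maximization}. On the feasible box the numerator $\mathsf{G}_{\mathsf{u},k}$ is bounded above (attained at $\mathcal{E}_{\mathsf{u},k}=\mathcal{E}_{\mathsf{u}}$), while the denominator obeys $\mathsf{I}_{\mathsf{u},k}\geq\sigma_{\mathsf{eff.},\mathsf{u},k}^{2}\geq N_{0}\sum_{m\in\mathcal{A}_{\mathsf{u}}}|\omega_{mk}^{*}|^{2}\alpha_{mk}^{2}>0$, so each SINR, and hence the finite sum of logarithms, is bounded above by a constant independent of $t$. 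Together with monotonicity, the monotone convergence theorem gives convergence of the objective.

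The main obstacle is handling the two-layer FP transform correctly: the quantity actually optimized by the inner updates is the surrogate $\tilde{F}$ of~\eqref{eq:Lagrage_dual_UL}, not $F$ of~\eqref{eq:equivalent_UL}, so the proof must pass between the two levels via the value-preservation equality $(b)$ and the upper-bound inequality $(d)$, and getting the directions of these two relations right is exactly what makes the telescoping chain close. Once that bookkeeping is in place, the concavity verifications and the strict positivity of the effective noise are routine.
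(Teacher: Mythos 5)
Your proposal is correct and follows essentially the same route as the paper: both arguments establish convergence by viewing the updates of $\boldsymbol{\varpi}_{\mathsf{u}}$, $\boldsymbol{\tilde{\varpi}}_{\mathsf{u}}$, and $\boldsymbol{\mathcal{E}}_{\mathsf{u}}$ as exact block-coordinate maximizations of the FP surrogate, so that the objective is monotonically non-decreasing and, being bounded above, converges. Your write-up is simply more explicit than the paper's one-line justification, spelling out the sandwich chain between $f(\boldsymbol{\mathcal{E}}_{\mathsf{u}},\boldsymbol{\varpi}_{\mathsf{u}})$ and the quadratic-transform surrogate, the concavity/separability needed to license the clipped closed-form update, and the positive lower bound on $\mathsf{I}_{\mathsf{u},k}$ that gives the finite upper bound.
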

\begin{proof}
First, optimizing $f\left(\boldsymbol{\mathcal{E}}_{\mathsf{u}}, \boldsymbol{\varpi}_{\mathsf{u}}\right)$ is equivalent to optimizing  
$f\left(\boldsymbol{\mathcal{E}}_{\mathsf{u}}, \boldsymbol{\varpi}_{\mathsf{u}},\boldsymbol{\tilde{\varpi}}_{\mathsf{u}}\right)= \sum\nolimits_{k\in\mathcal{U}_{\mathsf{u}}}\ln\left(1+\varpi_{ \mathsf{u},k}\right)-\sum\nolimits_{k\in\mathcal{U}_{\mathsf{u}}}\varpi_{\mathsf{u},k}+f(\boldsymbol{\mathcal{E}}_{\mathsf{u}}, \boldsymbol{\tilde{\varpi}}_{\mathsf{u}})$ due to the equivalence of $\sum\nolimits_{k\in\mathcal{U}_{\mathsf{u}}}\frac{(1+\varpi_{\mathsf{u},k})\mathsf{G}_{\mathsf{u},k}\left(\boldsymbol{\mathcal{E}}_{\mathsf{u}}\right)}{\mathsf{G}_{\mathsf{u},k}\left(\boldsymbol{\mathcal{E}}_{\mathsf{u}}\right)+\mathsf{I}_{\mathsf{u},k}\left(\boldsymbol{\mathcal{E}}_{\mathsf{u}}\right)}$ and $f(\boldsymbol{\mathcal{E}}_{\mathsf{u}}, \boldsymbol{\tilde{\varpi}}_{\mathsf{u}})$. Next, we observe that $f\left(\boldsymbol{\mathcal{E}}_{\mathsf{u}}^{{\tt iter}+1}, \boldsymbol{\varpi}_{\mathsf{u}}^{{\tt iter}+1},\boldsymbol{\tilde{\varpi}}_{\mathsf{u}}^{{\tt iter}+1}\right)\geq f\left(\boldsymbol{\mathcal{E}}_{\mathsf{u}}^{\tt iter}, \boldsymbol{\varpi}_{\mathsf{u}}^{\tt iter},\boldsymbol{\tilde{\varpi}}_{\mathsf{u}}^{\tt iter}\right)$,
because at iteration indexed by ${\tt iter}+1$, each of these variables optimally solves the equivalent \emph{convex surrogate} $f(\cdot)$ while keeping the other two variables  fixed. 
Since $f\left(\boldsymbol{\mathcal{E}}_{\mathsf{u}}, \boldsymbol{\varpi}_{\mathsf{u}},\boldsymbol{\tilde{\varpi}}_{\mathsf{u}}\right)$ has a finite upper bound, the algorithm is globally convergent.
\end{proof}
\begin{rem}
Once the UL transmit powers of all the UEs are decided, the CPU can use this information to refine the SINR optimal weights. Specifically, the CPU  initially finds $\boldsymbol{\omega}_{k}^{\mathsf{opt.}}$ considering equal power allocation. Once the UL transmit powers are optimized, CPU can reoptimize $\boldsymbol{\omega}_{k}^{\mathsf{opt.}}$  based on the received SINRs with $\mathcal{E}_{\mathsf{u},k}$ given by Algorithm~\ref{algo:UL_power_control}. This process can be repeated until convergence of the sum UL SE.
\end{rem}
\begin{rem}
We observe that lines~\ref{algo:UL_varpi},~\ref{algo:UL_varpi_tilde}, and~\ref{algo:UL_E} in Algorithm~\ref{algo:UL_power_control} are independent of the combining scheme, and are applicable even under maximal ratio combining, unlike~\cite{SPAWC_Virtual_Duplex} where the power control algorithm is tied to the specific combining scheme used. More importantly, the FP approach guarantees that the maximum objective of the surrogate and original objective functions are the same, unlike the lower-bound surrogate-based optimization in~\cite{SPAWC_Virtual_Duplex}. 
\end{rem}
\begin{algorithm}[!t]
       \SetNlSty{textbf}{}{:}
	\DontPrintSemicolon
	\KwIn{$\kappa_{jn}, \forall j\in\mathcal{A}_{\mathsf{d}}, n\in\mathcal{U}_{\mathsf{d}}$}
	\KwInt{$\boldsymbol{\mathcal{E}}_{\mathsf{u}}^{0}$, ${\tt iter}=0$}
	\While{$\lvert f\left(\boldsymbol{\mathcal{E}}_{\mathsf{u}}, \boldsymbol{\varpi}_{\mathsf{u}}\right)^{{\tt iter}+1}-f\left(\boldsymbol{\mathcal{E}}_{\mathsf{u}}, \boldsymbol{\varpi}_{\mathsf{u}}\right)^{\tt iter}\rvert\geq \delta_{\mathsf{u}}$}{
		Evaluate: $\varpi_{\mathsf{u},k}^{{\tt iter}+1}=\frac{\mathsf{G}_{\mathsf{u},k}\left(\boldsymbol{\mathcal{E}}_{\mathsf{u}}\right)}{\mathsf{I}_{\mathsf{u},k}\left(\boldsymbol{\mathcal{E}}_{\mathsf{u}}\right)}\lvert_{\boldsymbol{\mathcal{E}}_{\mathsf{u}}=\boldsymbol{\mathcal{E}}_{\mathsf{u}}^{\tt iter}}$\;\label{algo:UL_varpi}
		Evaluate: $\frac{\partial f(\boldsymbol{\mathcal{E}}_{\mathsf{u}}, \boldsymbol{\tilde{\varpi}}_{\mathsf{u}})}{\partial \tilde{\varpi}_{\mathsf{u}, k}}\Bigg\lvert_{\substack{\boldsymbol{\mathcal{E}}_{\mathsf{u}}=\boldsymbol{\mathcal{E}}_{\mathsf{u}}^{\tt iter},\\
				\varpi_{\mathsf{u},k}=\varpi_{\mathsf{u},k}^{\tt iter+1}, \forall k\in\mathcal{U}_{\mathsf{u}}}}=0 \Rightarrow$
		$\tilde{\varpi}_{\mathsf{u},{k}}^{{\tt iter}+1}=\frac{\sqrt{\left(1+\varpi_{\mathsf{u},k}^{{\tt iter}+1}\right)\mathsf{G}_{\mathsf{u},k}\left(\boldsymbol{\mathcal{E}}_{\mathsf{u}}^{\tt iter}\right)}}{\mathsf{G}_{\mathsf{u},k}\left(\boldsymbol{\mathcal{E}}_{\mathsf{u}}^{\tt iter}\right)+\mathsf{I}_{\mathsf{u},k}\left(\boldsymbol{\mathcal{E}}_{\mathsf{u}}^{\tt iter}\right)}$\;\label{algo:UL_varpi_tilde}
		Set:  $\frac{\partial f(\boldsymbol{\mathcal{E}}_{\mathsf{u}}, \boldsymbol{\tilde{\varpi}}_{\mathsf{u}})}{\partial \mathcal{E}_{\mathsf{u},k}}\Bigg\lvert_{\substack{\varpi_{\mathsf{u},k}=\varpi_{\mathsf{u},k}^{\tt iter+1},\\ \tilde{\varpi}_{\mathsf{u},{k}}=\tilde{\varpi}_{\mathsf{u},{k}}^{{\tt iter}+1}}}=0$ to obtain~\eqref{eq:update_E}\; \label{algo:UL_E}

Update: $\mathcal{E}_{\mathsf{u},k}^{{\tt iter}+1}=\min\left\{\eqref{eq:update_E}, \mathcal{E}_{\mathsf{u}}\right\}, \forall k\in\mathcal{U}_{\mathsf{u}}$\;
				Update: ${\tt iter}={\tt iter}+1$\;
	}
	\caption{Uplink Power Control}\label{algo:UL_power_control}
\end{algorithm}
\begin{figure*}[!t]
	\begin{align}
		\mathcal{E}_{\mathsf{u},k}^{{\tt iter}+1}=\frac{\left(\tilde{\varpi}_{\mathsf{u},k}^{{\tt iter}+1}\sqrt{\left(1+\varpi_{\mathsf{u},k}^{{\tt iter}+1}\right)}\right)^{2}(N-\tau_{p})\left(\sum\nolimits_{m\in\mathcal{A}_{\mathsf{u}}}\omega^{*}_{mk}\alpha_{mk}^2\right)^2}{\left[\sum\nolimits_{n\in\mathcal{U}_{\mathsf{u}}}(\tilde{\varpi}_{\mathsf{u},n}^{{\tt iter}+1})^{2}\left\{(N-\tau_{p})\left(\sum\nolimits_{m\in\mathcal{A}_{\mathsf{u}}}\omega^{*}_{mn}\alpha_{mk}^2\right)^2+\sum\nolimits_{m\in\mathcal{A}_{\mathsf{u}}}\lvert\omega^{*}_{mn}\rvert^{2}\alpha_{mn}^2(\beta_{mk}-\alpha_{mk}^2)\right\}\right]^{2}}, \forall k\in\mathcal{U}_{\mathsf{u}}.\label{eq:update_E}
	\end{align}
\end{figure*}

\subsubsection{DL power control}\label{sec:DL_power_control}
Now, in DL also, we can apply FP to optimize the DL power control coefficients. To this end, we first introduce some useful mathematical notation. Let $\mathbf{g}_{\mathsf{d},n}\triangleq\left[\mathbb{E}\left[\mathbf{f}_{1n}^{T}\mathbf{p}_{1n}\right],\mathbb{E}\left[\mathbf{f}_{2n}^{T}\mathbf{p}_{2n}\right],\ldots,\mathbb{E}\left[\mathbf{f}_{|\mathcal{A}_{\mathsf{d}}|n}^{T}\mathbf{p}_{|\mathcal{A}_{\mathsf{d}}|n}\right]\right]^T$ (which equals $(N-\tau_p)\left[\alpha_{1n},\alpha_{2n},\ldots,\alpha_{|\mathcal{A}_{\mathsf{d}}|n}\right]^T$ under ZF precoding.) Also, let $\left[ \mathbf{I}_{nq}\right]_{jj'}\triangleq\mathbb{E}\left[\mathbf{f}_{jn}^{T}\mathbf{p}_{jq}\mathbf{p}_{j'q}^{H}\mathbf{f}_{j'n}^{*}\right], \forall j,j'\in\mathcal{A}_{\mathsf{d}}$. With ZF precoding, $\mathbf{I}_{nq}$ has the following structure:
\begin{align}\label{eq:I_nq}
\hspace{-.3cm}[\mathbf{I}_{nq}]_{jj'}=\begin{cases}
(N-\tau_{p})\alpha^2_{jn}\lvert\langle\boldsymbol{\varphi}_{l(n)},\boldsymbol{\varphi}_{l(q)}\rangle\rvert^2\\\hspace{1cm}+(\beta_{jn}-\alpha^2_{jn}), \text{if}~j=j'\\
(N-\tau_{p})\alpha_{j'n}\alpha_{jn}\lvert\langle\boldsymbol{\varphi}_{l(n)},\boldsymbol{\varphi}_{l(q)}\rangle\rvert^2, \text{otherwise}.
\end{cases}\hspace*{-.5cm}
\end{align}	
Then,  $\mathbb{E}[|\sum\nolimits_{j\in\mathcal{A}_{\mathsf{d}}}\kappa_{jq}\mathbf{f}_{jn}^{T}\mathbf{p}_{jq}|^{2}]$ can be expressed as $\sum\limits_{j\in\mathcal{A}_{\mathsf{d}}}\sum\limits_{j'\in\mathcal{A}_{\mathsf{d}}}\kappa_{jq}\mathbb{E}\left[\mathbf{f}_{jn}^{T}\mathbf{p}_{jq}\mathbf{p}_{j'q}^{H}\mathbf{f}_{j'n}^{*}\right]\kappa_{j'q}=\boldsymbol{\kappa}_{q}^T \mathbf{I}_{nq}\boldsymbol{\kappa}_{q}$.
The effective DL noise, independent of DL power control coefficients, is defined as  $\sigma_{\mathsf{eff.},\mathsf{d}, n}^2={\tt IUE}_{n}+N_{0}$. 
Thus DL SINR becomes $\eta_{\mathsf{d},n}(\mathcal{A}_{\mathsf{s}})=\frac{\mathsf{G}_{\mathsf{d},n}\left(\boldsymbol{\kappa}\right)}{\mathsf{I}_{\mathsf{d},n}\left(\boldsymbol{\kappa}\right)}$,
with $\mathsf{G}_{\mathsf{d}, n}\left(\boldsymbol{\kappa}\right)=\left(\boldsymbol{\kappa}_{n}^T\mathbf{g}_{\mathsf{d},n}\right)^2$ and $\mathsf{I}_{\mathsf{d}, n}\left(\boldsymbol{\kappa}\right)\triangleq \sum\nolimits_{q\in\mathcal{U}_{\mathsf{d}}}\boldsymbol{\kappa}_{q}^T \mathbf{I}_{nq}\boldsymbol{\kappa}_{q}-\left(\boldsymbol{\kappa}_{q}^T\mathbf{g}_{\mathsf{d},n}\right)^2+\frac{1}{\mathcal{E}_{\mathsf{d}}}\sigma_{\mathsf{eff.}, \mathsf{d}, n}^2$. Observe that, substituting $\mathbf{g}_{\mathsf{d},n}$ and~\eqref{eq:I_nq}, we obtain Lemma~\ref{lem:DL_SE_DTDD}. Now, the sum DL SE maximization problem becomes:
\begin{subequations}
\begin{align}\label{eq:DL_SE_opt}
\max_{\kappa_{jn}}\quad&\sum\nolimits_{n\in\mathcal{U}_{\mathsf{d}}}\log\left(1+\frac{\mathsf{G}_{\mathsf{d}, n}\left(\boldsymbol{\kappa}\right)}{\mathsf{I}_{\mathsf{d}, n}\left(\boldsymbol{\kappa}\right)}\right)
\\\mathrm{subject~to}\quad & 
\sum\nolimits_{\substack{ q\in\mathcal{U}_{\mathsf{d}}}}\kappa_{jq}^2\leq1, \forall j\in\mathcal{A}_{\mathsf{d}}.
\end{align}
\end{subequations}
We apply the Lagrange-dual transform~\cite{FP_I} with auxiliary variables $\boldsymbol{\varpi}_{\mathsf{d}}=\left[\varpi_{\mathsf{d}, 1},\varpi_{\mathsf{d}, 2},\ldots,\varpi_{\mathsf{d},|\mathcal{U}_{\mathsf{d}}|}\right]$ to obtain:
\begin{subequations}
\begin{align}\label{eq:DL_SE_opt_equivalent}
\max_{\kappa_{jn}, \boldsymbol{\varpi}_{\mathsf{d}}} \, f(\boldsymbol{\kappa},\boldsymbol{\varpi}_{\mathsf{d}}) \triangleq &\sum\nolimits_{n\in\mathcal{U}_{\mathsf{d}}}\!\!\!\ln\left(1+\varpi_{\mathsf{d}, n}\right)-\sum\nolimits_{n\in\mathcal{U}_{\mathsf{d}}}\varpi_{\mathsf{d}, n}\notag\\&+\sum\nolimits_{n\in\mathcal{U}_{\mathsf{d}}}\frac{(1+\varpi_{\mathsf{d}, n})\mathsf{G}_{\mathsf{d},n}\left(\boldsymbol{\kappa}\right)}{\mathsf{G}_{\mathsf{d},n}\left(\boldsymbol{\kappa}\right)+\mathsf{I}_{\mathsf{d}, n}\left(\boldsymbol{\kappa}\right)}\\\mathrm{subject~to}\quad&
\sum\nolimits_{\substack{ q\in\mathcal{U}_{\mathsf{d}}}}\kappa_{jq}^2\leq1; \forall j\in\mathcal{A}_{\mathsf{d}}.
\end{align}
\end{subequations}
The problems in~\eqref{eq:DL_SE_opt} and in~\eqref{eq:DL_SE_opt_equivalent} are equivalent in the sense that $\boldsymbol{\kappa}$ is a solution of~\eqref{eq:DL_SE_opt} if and only if it is the solution of~\eqref{eq:DL_SE_opt_equivalent}~\cite{FP_I}. We now alternately optimize $\boldsymbol{\kappa}$ and $\boldsymbol{\varpi}_{\mathsf{d}}$.  For fixed $\boldsymbol{\kappa}$, $f(\boldsymbol{\kappa},\boldsymbol{\varpi}_{\mathsf{d}})$ is a concave differentiable function over $\boldsymbol{\varpi}_{\mathsf{d}}$. Thus, $\frac{\partial f(\boldsymbol{\kappa},\boldsymbol{\varpi}_{\mathsf{d}})}{\partial \varpi_{\mathsf{d}, n}}=0$ yields $\varpi_{\mathsf{d}, n}^{\mathsf{opt.}}=\frac{\mathsf{G}_{\mathsf{d},n}\left(\boldsymbol{\kappa}\right)}{\mathsf{I}_{\mathsf{d}, n}\left(\boldsymbol{\kappa}\right)}$. 
For fixed $\boldsymbol{\varpi}_{\mathsf{d}}$, the first and the second terms of $f(\boldsymbol{\kappa},\boldsymbol{\varpi}_{\mathsf{d}})$ are constants. Hence, to optimize $\boldsymbol{\kappa}$, we need to solve $\max_{\boldsymbol{\kappa}}\sum\nolimits_{n\in\mathcal{U}_{\mathsf{d}}}\frac{(1+\varpi_{\mathsf{d},n})\mathsf{G}_{\mathsf{d},n}\left(\boldsymbol{\kappa}\right)}{\mathsf{G}_{\mathsf{d},n}\left(\boldsymbol{\kappa}\right)+\mathsf{I}_{\mathsf{d}, n}\left(\boldsymbol{\kappa}\right)}$,
for which we use FP. To do so, we define  $\mathsf{\bar{G}}_{\mathsf{d},n}\left(\boldsymbol{\kappa}\right)\triangleq(1+\varpi_{\mathsf{d},n})\mathsf{G}_{\mathsf{d},n}\left(\boldsymbol{\kappa}\right),$ and $\mathsf{\bar{I}}_{\mathsf{d},n}\left(\boldsymbol{\kappa}\right)\triangleq \mathsf{G}_{\mathsf{d},n}\left(\boldsymbol{\kappa}\right)+\mathsf{I}_{\mathsf{d}, n}\left(\boldsymbol{\kappa}\right).$ Then, the equivalent problem is to maximize $\sum\nolimits_{n\in\mathcal{U}_{\mathsf{d}}}\frac{\mathsf{\bar{G}}_{\mathsf{d},n}\left(\boldsymbol{\kappa}\right)}{\mathsf{\bar{I}}_{\mathsf{d},n}\left(\boldsymbol{\kappa}\right)}$ subject to $\sum\nolimits_{\substack{ q\in\mathcal{U}_{\mathsf{d}}}}\kappa_{jq}^2\leq1.$
Now, the dual of $\sum\nolimits_{n\in\mathcal{U}_{\mathsf{d}}}\frac{\mathsf{\bar{G}}_{\mathsf{d},n}\left(\boldsymbol{\kappa}\right)}{\mathsf{\bar{I}}_{\mathsf{d},n}(\boldsymbol{\kappa})}$ is $\sum\nolimits_{n\in\mathcal{U}_{\mathsf{d}}}(2\tilde{\varpi}_{\mathsf{d}, n}\sqrt{\mathsf{\bar{G}}_{\mathsf{d},n}(\boldsymbol{\kappa})}-\tilde{\varpi}_{\mathsf{d}, n}^2\mathsf{\bar{I}}_{\mathsf{d},n}(\boldsymbol{\kappa}))$ with new auxiliary variables $\boldsymbol{\tilde{\varpi}}_{\mathsf{d}}=\left[\tilde{\varpi}_{\mathsf{d}, 1},\tilde{\varpi}_{\mathsf{d}, 2},\ldots,\tilde{\varpi}_{\mathsf{d}, |\mathcal{U}_{\mathsf{d}}|}\right]^{T}$~\cite[Corollary 1]{FP_I}.
Substituting, we obtain
\begin{subequations}
\begin{align}\label{eq:DL_auxilliary}
\max_{\boldsymbol{\kappa},\boldsymbol{\tilde{\varpi}}_{\mathsf{d}}}\, f(\boldsymbol{\kappa},\boldsymbol{\tilde{\varpi}}_{\mathsf{d}}) \triangleq &\sum\nolimits_{n\in\mathcal{U}_{\mathsf{d}}}\Big(2\tilde{\varpi}_{\mathsf{d}, n}\sqrt{(1+\varpi_{\mathsf{d},n})\left(\boldsymbol{\kappa}_{q}^T\mathbf{g}_{\mathsf{d},n}\right)^2}\notag\\&\hspace*{-12mm}-\tilde{\varpi}_{\mathsf{d}, n}^2\left(\sum\nolimits_{q\in\mathcal{U}_{\mathsf{d}}}\boldsymbol{\kappa}_{q}^T \mathbf{I}_{nq}\boldsymbol{\kappa}_{q}+\sigma_{\mathsf{eff.}, \mathsf{d}, n}^2\right)\Big),\\\mathrm{subject~to}\quad&
\sum\nolimits_{\substack{ q\in\mathcal{U}_{\mathsf{d}}}}\kappa_{jq}^2\leq1, \forall j \in\mathcal{A}_{\mathsf{d}}.
\end{align}
\end{subequations}
 Observe that $\boldsymbol{\varpi}_{\mathsf{d}}$ is already fixed. Then, for fixed $\boldsymbol{\kappa}$, $f(\boldsymbol{\kappa},\boldsymbol{\tilde{\varpi}}_{\mathsf{d}})$ is strongly concave with respect to $\boldsymbol{\tilde{\varpi}}_{\mathsf{d}}$ and thus, we can set $\frac{\partial f(\boldsymbol{\kappa},\boldsymbol{\tilde{\varpi}}_{\mathsf{d}})}{\partial \tilde{\varpi}_{\mathsf{d}, n}}=0$, leading to $\tilde{\varpi}_{\mathsf{d}, n}^{\text{opt.}}=\frac{\sqrt{1+\varpi_{\mathsf{d},n}}\mathbf{g}_{\mathsf{d},n}^{T}\boldsymbol{\kappa}_{n}}{\sum\nolimits_{q\in\mathcal{U}_{\mathsf{d}}}\boldsymbol{\kappa}_{q}^T \mathbf{I}_{nq}\boldsymbol{\kappa}_{q}+\sigma_{\mathsf{eff.}, \mathsf{d}, n}^2}$; $\forall n\in\mathcal{U}_{\mathsf{d}}$. 
 Finally, we obtain $\boldsymbol{\kappa}^{\text{opt.}}$  given $\boldsymbol{\varpi}_{\mathsf{d}}$ and $\boldsymbol{\tilde{\varpi}}_{\mathsf{d}}$ by solving
\begin{subequations}
\begin{align}
\max_{\boldsymbol{\kappa},\boldsymbol{\tilde{\varpi}}_{\mathsf{d}}}\quad &\sum\nolimits_{n\in\mathcal{U}_{\mathsf{d}}}\left(2\tilde{\varpi}_{\mathsf{d}, n}\sqrt{(1+\varpi_{\mathsf{d},n})\left(\boldsymbol{\kappa}_{n}^T\mathbf{g}_{\mathsf{d},n}\right)^2}\right.\notag\\&\left.\hspace*{-7mm}-\tilde{\varpi}_{\mathsf{d}, n}^2\left(\sum\nolimits_{q\in\mathcal{U}_{\mathsf{d}}}\boldsymbol{\kappa}_{q}^T \mathbf{I}_{nq}\boldsymbol{\kappa}_{q}+\sigma_{\mathsf{eff.}, \mathsf{d}, n}^2\right)\right)\label{eq:QCQP},\\\mathrm{subject~to}\quad& 
\sum\nolimits_{\substack{ q\in\mathcal{U}_{\mathsf{d}}}}\kappa_{jq}^2\leq1, \forall j\in\mathcal{A}_{\mathsf{d}}.
\end{align}
\end{subequations}
We observe that matrix $ \mathbf{I}_{nq}$ is positive semi-definite. Hence, $f(\boldsymbol{\kappa}) \triangleq \sum\nolimits_{n\in\mathcal{U}_{\mathsf{d}}}(2\tilde{\varpi}_{\mathsf{d}, n}\sqrt{(1+\varpi_{\mathsf{d},n})\left(\boldsymbol{\kappa}_{n}^T\mathbf{g}_{\mathsf{d},n}\right)^2}-\tilde{\varpi}_{\mathsf{d}, n}^2(\sum\nolimits_{q\in\mathcal{U}_{\mathsf{d}}}\boldsymbol{\kappa}_{q}^T \mathbf{I}_{nq}\boldsymbol{\kappa}_{q}+\sigma_{\text{eff.}}^2))$ is concave with respect to $\boldsymbol{\kappa}$. The above problem is a QCQP, which can be \emph{optimally} solved via ADMM (see~\cite[Chapter~$5$]{Boyd_ADMM}), yielding a closed form update for $\kappa_{jn}$,  which is also known to be efficient in terms of convergence for large dimensional problems compared to using off-the-shelf convex solvers. This is presented next. 

We reformulate~\eqref{eq:QCQP} with the help of new auxiliary variables $\boldsymbol{\pi}=\left[\pi_{jn}\right]_{j\in\mathcal{A}_d,n\in\mathcal{U}_d}$ as 
\begin{subequations}
\begin{align}
	\min_{\boldsymbol{\kappa},\boldsymbol{\pi}}\quad& \mathds{1}_{\boldsymbol{\kappa}}(\boldsymbol{\pi})-f(\boldsymbol{\kappa}),\label{eq:ADMM_formulation}\\
	\mathrm{subject~to}\quad& \kappa_{jn}=\pi_{jn}, \forall j\in\mathcal{A}_{\mathsf{d}}, n\in\mathcal{U}_{\mathsf{d}},
\end{align}
\end{subequations}
where $\mathds{1}_{\boldsymbol{\kappa}}(\boldsymbol{\pi})$ is defined as 
\begin{align}
	\mathds{1}_{\boldsymbol{\kappa}}(\boldsymbol{\pi})=\begin{cases}
		0,\quad \mathrm{if}\quad \|\boldsymbol{\pi}_{j,:}\|^2\leq 1, \forall j\in\mathcal{A}_{\mathsf{d}}\\
		\infty,\quad \mathrm{otherwise},
	\end{cases}
\end{align}
where $\boldsymbol{\pi}_{j,:}\triangleq\left[\pi_{j1},\pi_{j2},\ldots,\pi_{j\lvert\mathcal{U}_{\mathsf{d}}\rvert}\right]^{T}$.
Here, $	\mathds{1}_{\boldsymbol{\kappa}}(\boldsymbol{\pi})$ is an indicator whether the auxiliary variables satisfy the feasibility constraint $\|\boldsymbol{\pi}_{j,:}\|^2\leq 1$ corresponding to $\boldsymbol{\kappa}$. Essentially, $\boldsymbol{\pi}$ is a  copy of the main
optimization variable $\boldsymbol{\kappa}$ and should satisfy the same constraint. Let $\bar{\pi}_{jn}$ denote the scaled dual variables\footnote{In the ADMM terminology, the variables $\boldsymbol{\kappa}$ and $\boldsymbol{\pi}$  can be considered as the first and second blocks of primal variables, respectively. In ADMM, a set of dual variables are introduced for the equality constraint in~\eqref{eq:ADMM_formulation}, and they are to be updated in each iteration with the primal variables.} corresponding to the equality constraints $\pi_{jn} = \kappa_{jn}, \forall j\in\mathcal{A}_{\mathsf{d}}, n\in\mathcal{U}_{\mathsf{d}}$. Then, the augmented Lagrangian can be written as $\mathcal{L}\left(\boldsymbol{\kappa},\boldsymbol{\pi},\bar{\boldsymbol{\pi}}\right) = \mathds{1}_{\boldsymbol{\kappa}}(\boldsymbol{\pi})-f(\boldsymbol{\kappa}) +\frac{\delta_{\mathsf{p}}}{2}\sum\nolimits_{n\in\mathcal{U}_{\mathsf{d}}}\sum\nolimits_{j\in\mathcal{A}_{\mathsf{d}}}\left(\pi_{jn}-\kappa_{jn}+\bar{\pi}_{jn}\right)^2$, where $\delta_{\mathsf{p}}$ is a penalty parameter. We now update $\boldsymbol{\kappa}_{n}, \forall n\in\mathcal{U}_{\mathsf{d}},$ as the solution of 
\begin{align}\label{eq:sol_kappa}
\arg\min_{\boldsymbol{\kappa}}-f(\boldsymbol{\kappa}) +\frac{\delta_{\mathsf{p}}}{2}\sum\limits_{n\in\mathcal{U}_{\mathsf{d}}}\sum\limits_{j\in\mathcal{A}_{\mathsf{d}}}\left(\pi_{jn}-\kappa_{jn}+\bar{\pi}_{jn}\right)^2.
\end{align}
Upon substituting for $f(\boldsymbol{\kappa})$ in~\eqref{eq:sol_kappa}, we get~\eqref{eq:update_kappa_update}, 
from which we obtain  $\boldsymbol{\kappa}_{n}^{\mathsf{opt.}}$ as given in~\eqref{eq:update_kappa}.
\begin{figure*}
\begin{align}\label{eq:update_kappa_update}
	&\arg\min_{\boldsymbol{\kappa}}\Big(\sum_{n\in\mathcal{U}_{\mathsf{d}}}\boldsymbol{\kappa}_{n}^{T}\Big(\sum_{q\in\mathcal{U}_{\mathsf{d}}}\tilde{\varpi}_{\mathsf{d}, n}^2\mathbf{I}_{nq}+\frac{\delta_{\mathsf{p}}}{2}\boldsymbol{I}_{\lvert\mathcal{A}_{\mathsf{d}}\rvert}\Big)\Big)\boldsymbol{\kappa}_{n}-2\sum_{n\in\mathcal{U}_{\mathsf{d}}}(\tilde{\varpi}_{\mathsf{d}, n}\sqrt{(1+\varpi_{\mathsf{d},n})}\mathbf{g}_{\mathsf{d},n}+\frac{\delta_{\mathsf{p}}}{2}(\boldsymbol{\pi}_{:,n}+\boldsymbol{\bar{\pi}}_{:, n}))^{T}\boldsymbol{\kappa_{n}}.
\end{align}
\end{figure*}
\begin{figure*}
\begin{align}\label{eq:update_kappa}	
\boldsymbol{\kappa}_{n}^{\mathsf{opt.}}=(\sum\nolimits_{q\in\mathcal{U}_{\mathsf{d}}}\tilde{\varpi}_{\mathsf{d}, n}^2\mathbf{I}_{nq}+\frac{\delta_{\mathsf{p}}}{2}\boldsymbol{I}_{\lvert\mathcal{A}_{\mathsf{d}}\rvert})^{-1}\left(\tilde{\varpi}_{\mathsf{d}, n}\sqrt{(1+\varpi_{\mathsf{d},n})}\mathbf{g}_{\mathsf{d},n}+\frac{\delta_{\mathsf{p}}}{2}\left(\boldsymbol{\pi}_{:,n}+\boldsymbol{\bar{\pi}}_{:, n}\right)\right)\in\mathbb{C}^{\lvert\mathcal{A}_{\mathsf{d}}\rvert\times 1}, \forall n\in\mathcal{U}_{\mathsf{d}}.
 \end{align}
\end{figure*}
Here, $\boldsymbol{\pi}_{:,n}\triangleq\left[\pi_{1,n},\pi_{2,n},\ldots,\pi_{\lvert\mathcal{A}_{\mathsf{d}\rvert,n}}\right]^{T}$ and $\boldsymbol{\bar{\pi}}_{:,n}\triangleq\left[\bar{\pi}_{1,n},\bar{\pi}_{2,n},\ldots,\bar{\pi}_{\lvert\mathcal{A}_{\mathsf{d}\rvert,n}}\right]^{T}$. 
Next, we can find the optimal update of the second block of primal variables as $$\arg\min_{\boldsymbol{\pi}}\mathds{1}_{\boldsymbol{\kappa}}(\boldsymbol{\pi}) +\frac{\delta_{\mathsf{p}}}{2}\sum\nolimits_{n\in\mathcal{U}_{\mathsf{d}}}\sum\nolimits_{j\in\mathcal{A}_{\mathsf{d}}}\left(\pi_{jn}-\kappa_{jn}+\bar{\pi}_{jn}\right)^2.$$ 
Equivalently, $\pi_{jn}^{\mathsf{opt.}}$ is the solution of
\begin{subequations}
\begin{align}
	\arg\min_{\pi_{jn}}\quad\quad\quad&\hspace*{-7mm} \frac{\delta_{\mathsf{p}}}{2}\sum\nolimits_{n\in\mathcal{U}_{\mathsf{d}}}\sum\nolimits_{j\in\mathcal{A}_{\mathsf{d}}}\left(\pi_{jn}-\kappa_{jn}+\bar{\pi}_{jn}\right)^2,\label{eq:ADMM_second_primal}\\\mathrm{subject~to}\quad\quad&\sum\nolimits_{n\in\mathcal{U}_{\mathsf{d}}}\pi_{jn}^{2}\leq 1, \forall j\in\mathcal{A}_{\mathsf{d}}.
\end{align}
\end{subequations}
Now,~\eqref{eq:ADMM_second_primal} can be solved independently for each AP index by evaluating $\arg\min_{\pi_{jn}}\quad \frac{\delta_{\mathsf{p}}}{2}\sum\nolimits_{n\in\mathcal{U}_{\mathsf{d}}}\left(\pi_{jn}-\kappa_{jn}+\bar{\pi}_{jn}\right)^2$ for each $j\in\mathcal{A}_{\mathsf{d}}$ subject to the per-AP power constraint $\sum_{n\in\mathcal{U}_{\mathsf{d}}}\pi_{jn}^{2}\leq 1$.
Using the KKT condition, we can show that the optimal solution corresponding to the $j$th DL AP, $\boldsymbol{\pi}_{j}^{\mathsf{opt.}}$, is 
\begin{align}\label{eq:update_pi}
	\boldsymbol{\pi}_{j}^{\mathsf{opt.}}=\min\left\{1,\sqrt{\frac{1}{\|\boldsymbol{\dot{\pi}}_{j}\|^2}}\right\}\boldsymbol{\dot{\pi}}_{j}, \forall j\in\mathcal{A}_{\mathsf{d}},
\end{align}
with $\boldsymbol{\dot{\pi}}_{j}\triangleq\left[(\kappa_{j1}-\bar{\pi}_{j1}),(\kappa_{j2}-\bar{\pi}_{j2}),\ldots,(\kappa_{j\lvert\mathcal{U}_{\mathsf{d}}\rvert}-\bar{\pi}_{j\lvert\mathcal{U}_{\mathsf{d}}\rvert})\right]^{T}$. We summarize the iterative recipe for solving our original problem~\eqref{eq:QCQP} via ADMM approach in Algorithm~\ref{algo:ADMMM}. We note that the $\{\kappa_{jn}\}$  yielded by Algorithm~\ref{algo:ADMMM} are \emph{globally optimal}. We present the overall DL power control recipe in Algorithm~\ref{algo:DL_power_control}. The stopping criterion of the algorithm is decided by the threshold $\delta_{\mathsf{d}}$.  
Also, with ZF precoding, the closed-form update equation for $\tilde{\varpi}^{\tt iter+1}_{n}$ is presented in~\eqref{eq:update_mu}. To update $\boldsymbol{\kappa}$, $\mathbf{I}_{nq}$ needs to be substituted in~\eqref{eq:update_kappa} from~\eqref{eq:I_nq}.
\begin{figure*}
	\begin{align}\label{eq:update_mu}
	\tilde{\varpi}^{\tt iter+1}_{\mathsf{d}, n}=\frac{\sqrt{1+\varpi_{\mathsf{d},n}^{{\tt iter}+1}}\sqrt{(N-\tau_{p})}\sqrt{\mathcal{E}_{\mathsf{d}}}\left(\sum\nolimits_{j\in\mathcal{A}_{\mathsf{d}}}\alpha_{jn}\kappa_{jn}^{\tt iter}\right)}{\sum\limits_{q\in\mathcal{U}_{\mathsf{d}}}\sum\limits_{j\in\mathcal{A}_{\mathsf{d}}}\mathcal{E}_{\mathsf{d}}(\kappa_{jq}^{\tt iter})^2(\beta_{jn}-\alpha_{jn}^2)+(N-\tau_{p})\sum\limits_{ q\in\mathcal{U}_{\mathsf{d}}}\mathcal{E}_{\mathsf{d}}\left(\sum\nolimits_{j\in\mathcal{A}_{\mathsf{d}}}\kappa_{jq}^{\tt iter}\alpha_{jn}\right)^2\lvert\langle\boldsymbol{\varphi}_{l(n)},\boldsymbol{\varphi}_{l(q)}\rangle\rvert^2+\sigma_{\mathsf{eff.}, \mathsf{d}, n}^2}.
	\end{align}
\end{figure*}
\begin{algorithm}[!t]
	\caption{Solving ~\eqref{eq:QCQP} via ADMM}\label{algo:ADMMM}
	\SetNlSty{textbf}{}{:}
		\DontPrintSemicolon
	\KwIn{$\pi_{jn}^{0}, \forall j\in\mathcal{A}_{\mathsf{d}}, n\in\mathcal{U}_{\mathsf{d}}$, $\delta_{\mathsf{ADMM}}>0$}
	\KwInt{$\bar{\pi}_{jn}^{0}=0, \forall j\in\mathcal{A}_{\mathsf{d}}, n\in\mathcal{U}_{\mathsf{d}}$, ${\tt iter}=0$}
	\While{$\|\boldsymbol{\kappa_{n}}-\boldsymbol{\pi}_{:,n}\|\geq \delta_{\mathsf{ADMM}}$}{
	Evaluate: $\boldsymbol{\kappa}_{n}^{{\tt iter}+1}$ using~\eqref{eq:update_kappa} with $\boldsymbol{\pi}_{:,n}^{\tt iter}$ and $\boldsymbol{\bar{\pi}}_{:, n}^{\tt iter}$\;\label{algo:DL_kappa}
	Evaluate: $\boldsymbol{\pi}_{:,n}^{{\tt iter}+1}$ using~\eqref{eq:update_pi} with $\boldsymbol{\dot{\pi}}_{j}\triangleq\left[(\kappa_{j1}^{{\tt iter}+1}-\bar{\pi}_{j1}^{{\tt iter}}),\ldots,(\kappa_{j\lvert\mathcal{U}_{\mathsf{d}}\rvert}^{{\tt iter}+1}-\bar{\pi}_{j\lvert\mathcal{U}_{\mathsf{d}}\rvert}^{{\tt iter}})\right]^{T}$\;\label{algo:DL_pi}
	Update: $\bar{\pi}_{jn}^{{\tt iter}+1}=\pi_{jn}^{\tt iter}-\kappa_{jn}^{{\tt iter}+1}+\bar{\pi}_{jn}^{\tt iter}$\;\label{algo:DL_bar_pi}
	Update: ${\tt iter}={\tt iter}+1$
}
\end{algorithm}

\begin{algorithm}[!t]
	\caption{Downlink Power Control}\label{algo:DL_power_control}
	\SetNlSty{textbf}{}{:}
		\DontPrintSemicolon
	\KwIn{$\mathcal{E}_{\mathsf{u},k}, \forall k\in\mathcal{U}_{\mathsf{u}}$}
	\KwInt{$\kappa_{jn}^{0}, \forall j\in\mathcal{A}_{\mathsf{d}}, n\in\mathcal{U}_{\mathsf{d}}$, ${\tt iter}=0$}
	\While{$\lvert f(\boldsymbol{\kappa},\boldsymbol{\varpi}_{\mathsf{d}})^{{\tt iter}+1}-f(\boldsymbol{\kappa},\boldsymbol{\varpi}_{\mathsf{d}})^{\tt iter}\rvert\geq \delta_{\mathsf{d}}$}{
Evaluate: $\varpi_{\mathsf{d},n}^{\tt iter+1}=\dfrac{\left((\boldsymbol{\kappa}_{q}^{\tt iter})^T\mathbf{g}_{\mathsf{d},n}\right)^2}{\sum\limits_{q\in\mathcal{U}_{\mathsf{d}}}(\boldsymbol{\kappa}_{q}^{\tt iter})^T \mathbf{I}_{nq}\boldsymbol{\kappa}_{q}^{\tt iter}-\left((\boldsymbol{\kappa}_{q}^{\tt iter})^T\mathbf{g}_{\mathsf{d},n}\right)^2+\sigma_{\text{eff.}}^2}$\;\label{algo:DL_varpi_d}
Evaluate: $\tilde{\varpi}^{\tt iter+1}_{n}=\frac{\sqrt{1+\varpi_{\mathsf{d},n}^{\tt iter+1}}\mathbf{g}_{\mathsf{d},n}^{T}\boldsymbol{\kappa}_{n}^{\tt iter}}{\sum\limits_{q\in\mathcal{U}_{\mathsf{d}}}(\boldsymbol{\kappa}_{q}^{\tt iter})^T \mathbf{I}_{nq}\boldsymbol{\kappa}_{q}^{\tt iter}+\sigma_{\text{eff.}}^2}$\;\label{algo:DL_varpi_d_tilde}
Evaluate: $\boldsymbol{\kappa}_{q}^{{\tt iter}+1}$ via solving~\eqref{eq:QCQP} with $\varpi_{\mathsf{d},n}=\varpi_{\mathsf{d},n}^{\tt iter+1}$ and $\tilde{\varpi}_{\mathsf{d}, n}=\tilde{\varpi}^{\tt iter+1}_{\mathsf{d}, n}$\;
Update: ${\tt iter}={\tt iter}+1$\;
}
\end{algorithm}
\begin{prop}\label{prop:DL_power_control}
	The DL power allocation algorithm is convergent in objective since $f(\boldsymbol{\kappa},\boldsymbol{\varpi}_{\mathsf{d}})$ is bounded above and monotonically non-decreasing after each iteration.
\end{prop}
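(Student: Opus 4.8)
The plan is to replicate the structure of the proof of Proposition~\ref{prop:conv_UL}, recognizing Algorithm~\ref{algo:DL_power_control} as block coordinate ascent on a single surrogate objective whose optimal value coincides with that of the original DL sum SE. Concretely, I would first stitch the reformulations together into one function of all variables, $f(\boldsymbol{\kappa},\boldsymbol{\varpi}_{\mathsf{d}},\boldsymbol{\tilde{\varpi}}_{\mathsf{d}}) \triangleq \sum_{n\in\mathcal{U}_{\mathsf{d}}}[\ln(1+\varpi_{\mathsf{d},n})-\varpi_{\mathsf{d},n}] + f(\boldsymbol{\kappa},\boldsymbol{\tilde{\varpi}}_{\mathsf{d}})$, using the objective of \eqref{eq:DL_auxilliary}. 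By the Lagrange-dual and quadratic transforms of~\cite{FP_I} invoked in \eqref{eq:DL_SE_opt_equivalent} and \eqref{eq:DL_auxilliary}, this surrogate is equivalent to the original problem \eqref{eq:DL_SE_opt} in the sense that both the maximizers and the maximum value coincide. Thus it suffices to show that each outer iteration of Algorithm~\ref{algo:DL_power_control} does not decrease $f(\boldsymbol{\kappa},\boldsymbol{\varpi}_{\mathsf{d}},\boldsymbol{\tilde{\varpi}}_{\mathsf{d}})$ and that this function is bounded above.

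Next I would verify that each of the three blocks is updated to its global maximizer while the other two are held fixed. For fixed $\boldsymbol{\kappa}$, $f$ is concave and differentiable in $\boldsymbol{\varpi}_{\mathsf{d}}$, so the first-order condition yields the closed form in line~\ref{algo:DL_varpi_d}, which is the unique maximizer. Likewise, for fixed $\boldsymbol{\kappa}$ and $\boldsymbol{\varpi}_{\mathsf{d}}$, $f$ is strongly concave in $\boldsymbol{\tilde{\varpi}}_{\mathsf{d}}$ (its Hessian is diagonal with strictly negative entries $-2\mathsf{\bar{I}}_{\mathsf{d},n}(\boldsymbol{\kappa})<0$), giving the closed-form maximizer in line~\ref{algo:DL_varpi_d_tilde}. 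Consequently these two steps are individually non-decreasing in the surrogate, exactly as in the UL case.

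The crux, and the main obstacle, is the $\boldsymbol{\kappa}$ update, which---unlike the UL transmit-power update---is not closed form but is computed by the inner ADMM loop (Algorithm~\ref{algo:ADMMM}). Here I would argue that \eqref{eq:QCQP} is a convex program: the constraints $\sum_{q}\kappa_{jq}^2\le 1$ are convex norm-ball constraints, and the objective is concave in $\boldsymbol{\kappa}$ because each $\mathbf{I}_{nq}$ is positive semidefinite (so each term $-\tilde{\varpi}_{\mathsf{d},n}^2\,\boldsymbol{\kappa}_q^T\mathbf{I}_{nq}\boldsymbol{\kappa}_q$ is concave), while the beamforming-gain term is linear once one observes that, since $\mathbf{g}_{\mathsf{d},n}$ and the coefficients $\kappa_{jn}$ are nonnegative, $\sqrt{(\boldsymbol{\kappa}_n^T\mathbf{g}_{\mathsf{d},n})^2}=\boldsymbol{\kappa}_n^T\mathbf{g}_{\mathsf{d},n}$. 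For such a convex QCQP, the ADMM iterations \eqref{eq:update_kappa} and \eqref{eq:update_pi} converge to a global optimum~\cite{Boyd_ADMM}; hence the $\boldsymbol{\kappa}$ update also returns the global block maximizer of the surrogate, so that $f(\boldsymbol{\kappa}^{{\tt iter}+1},\boldsymbol{\varpi}_{\mathsf{d}}^{{\tt iter}+1},\boldsymbol{\tilde{\varpi}}_{\mathsf{d}}^{{\tt iter}+1})\ge f(\boldsymbol{\kappa}^{{\tt iter}},\boldsymbol{\varpi}_{\mathsf{d}}^{{\tt iter}},\boldsymbol{\tilde{\varpi}}_{\mathsf{d}}^{{\tt iter}})$.

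Finally, I would establish the upper bound: because the per-AP power constraints \eqref{eq:per_AP_power} bound $\|\boldsymbol{\kappa}_{j,:}\|$ and the large-scale statistics $\alpha_{jn},\beta_{jn},\epsilon_{nk}$ are finite with $\sigma_{\mathsf{eff.},\mathsf{d},n}^2>0$, each DL SINR $\eta_{\mathsf{d},n}$ in \eqref{eq:DL_SINR_DTDD}, and hence each $\log(1+\eta_{\mathsf{d},n})$, is bounded; since the surrogate shares the same optimal value, $f$ is bounded above. A sequence that is monotonically non-decreasing and bounded above converges, which proves the claim. The only delicate points to pin down rigorously are the concavity of the $\boldsymbol{\kappa}$-subproblem (i.e.\ that the optimum stays in the regime $\boldsymbol{\kappa}_n^T\mathbf{g}_{\mathsf{d},n}\ge 0$) and the global optimality of the ADMM subroutine, both of which reduce to the positive semidefiniteness of $\mathbf{I}_{nq}$ evident from \eqref{eq:I_nq}.
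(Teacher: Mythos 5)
Your proposal is correct and follows essentially the approach the paper intends: the paper omits an explicit proof of this proposition, relying on the same block-coordinate-ascent argument it gives for Proposition~\ref{prop:conv_UL}, together with its earlier assertions that $\mathbf{I}_{nq}$ is positive semidefinite and that Algorithm~\ref{algo:ADMMM} returns the global optimizer of the convex QCQP~\eqref{eq:QCQP}. Your additional care about the $\boldsymbol{\kappa}$-block (global optimality of the inner ADMM and the sign condition $\boldsymbol{\kappa}_n^T\mathbf{g}_{\mathsf{d},n}\ge 0$) fills in exactly the details the paper leaves implicit.
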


We now highlight the key benefits of our DL power control scheme. First of all, the DL power allocation algorithm is precoder and combiner agnostic~(lines~\ref{algo:DL_varpi_d}, \ref{algo:DL_varpi_d_tilde} in Algorithm~\ref{algo:DL_power_control} and lines~\ref{algo:DL_kappa},~\ref{algo:DL_pi}, and~\ref{algo:DL_bar_pi} in Algorithm~\ref{algo:ADMMM} apply to any choice of precoder), similar to our UL power allocation algorithm. This makes our algorithms more widely applicable compared to~\cite{SPAWC_Virtual_Duplex}. Also, as mentioned for UL, most of the works in literature lower bound the original cost function (e.g., the sum SE optimization in~\cite{SPAWC_Virtual_Duplex},~\cite{Joint_UE_NAFD}) by a series of surrogate convex functions and use available general-purpose convex solvers. In contrast, we provide closed-form update equations for all the auxiliary variables, and thanks to FP, our algorithm directly optimizes the original cost function. 
\begin{rem}
One can generalize~\eqref{eq:problem_AP_powercontrol}, for example, to include user priority or fairness guarantees, by considering weighted sum UL-DL SE. The technical development, algorithms, and update equations directly extend to weighted sum UL-DL SE maximization, thanks to the FP-based approach~\cite{FP_I}, at the cost of additional notational bookkeeping. Hence, for simplicity, we do not include it in this paper.
\end{rem}

	 \section{FD CF  MIMO System}\label{sec:FD_CF}
	We now present the SE analysis and power control for an FD-enabled CF MIMO system.
	The analysis for the FD system is similar to that of the DTDD system presented in Sec.~\ref{sec:DTDD_SE}, except that all the APs are now capable of concurrent transmission and reception.  Hence, $\mathcal{A}_{\mathsf{u}}=\mathcal{A}_{\mathsf{d}}=\mathcal{A}$ with $\lvert\mathcal{A}\rvert=M$. Also, each AP suffers from IrAI. We can express the $k$th stream of the UL received signal~(after local combining using $\mathbf{v}_{mk}\in\mathbb{C}^{N_{\tt rx}}$) at the $m$th FD AP as
		\begin{align}\label{eq:s_uk_hat_FD}
			&\hat{s}_{\mathsf{u},mk}=\sqrt{\mathcal{E}_{\mathsf{u},k}}\mathbf{v}_{mk}^{H}\mathbf{f}_{mk}s_{\mathsf{u},k}+\sum\nolimits_{n\in\mathcal{U}_{\mathsf{u}}\backslash k}\sqrt{\mathcal{E}_{\mathsf{u},n}}\mathbf{v}_{mk}^{H}\mathbf{f}_{mn}s_{\mathsf{u},n}\notag\\&+\underbrace{\sqrt{\mathcal{E}_{\mathsf{d}}}\sum\nolimits_{\substack{j\neq m,\\ j= 1}}^{M}\sum\nolimits_{n\in\mathcal{U}_{\mathsf{d}}}\kappa_{jn}\mathbf{v}_{mk}^{H} \tilde{\mathbf{G}}_{mj}\mathbf{p}_{jn}s_{\mathsf{d},n}}_{\text{InAI from all APs except the $m$th AP}}\notag\\&+\underbrace{\sqrt{\mathcal{E}_{\mathsf{d}}}\sum\nolimits_{n\in\mathcal{U}_{\mathsf{d}}}\kappa_{mn}\mathbf{v}_{mk}^{H} \mathbf{G}_{m}^{\mathsf{SI}}\mathbf{p}_{mn}s_{\mathsf{d},n}}_{\text{IrAI of $m$th AP}}+\mathbf{v}_{mk}^{H}\mathbf{w}_{\mathsf{u},mk}.
		\end{align}
		In~\eqref{eq:s_uk_hat_FD}, the third and fourth terms correspond to the InAI and IrAI, as indicated. We see that DL signals from all the APs interfere with the UL signal received at any AP, unlike DTDD, where only the APs scheduled in DL interfere with signals received at the APs scheduled in UL. Then, the weighted received signal from the $k$th UL UE at the CPU becomes
		\begin{multline}
			\hat{s}_{\mathsf{u},k}=\sum\nolimits_{m=1}^{M}\!\!\omega_{mk}^{*}\hat{s}_{\mathsf{u},mk}=\sqrt{\mathcal{E}_{\mathsf{u},k}}\sum\nolimits_{m=1}^{M}\!\!\omega_{mk}^{*}\mathbf{v}_{mk}^{H}\mathbf{f}_{mk}s_{\mathsf{u},k}\\+\sum\nolimits_{n\in\mathcal{U}_{\mathsf{u}}\backslash k}\sqrt{\mathcal{E}_{\mathsf{u},n}}\sum\nolimits_{m=1}^{M}\omega_{mk}^{*}\mathbf{v}_{mk}^{H}\mathbf{f}_{mn}s_{\mathsf{u},n}\\+\sqrt{\mathcal{E}_{\mathsf{d}}}\sum\nolimits_{m=1}^{M}\omega_{mk}^{*}\sum\nolimits_{\substack{j\neq m,\\ j= 1}}^{M}\sum\nolimits_{n\in\mathcal{U}_{\mathsf{d}}}\kappa_{jn}\mathbf{v}_{mk}^{H} \tilde{\mathbf{G}}_{mj}\mathbf{p}_{jn}s_{\mathsf{d},n}\\+\sqrt{\mathcal{E}_{\mathsf{d}}}\sum\nolimits_{m=1}^{M}\omega_{mk}^{*}\sum\nolimits_{n\in\mathcal{U}_{\mathsf{d}}}\kappa_{mn}\mathbf{v}_{mk}^{H} \mathbf{G}_{m}^{\mathsf{SI}}\mathbf{p}_{mn}s_{\mathsf{d},n}\\+\sum\nolimits_{m=1}^{M}\omega_{mk}^{*}\mathbf{v}_{mk}^{H}\mathbf{w}_{\mathsf{u},mk}.
		\end{multline}
		Hence, we can write the UL SINR of the $k$th UE for the FD system as follows:
		\begin{align*}
			\frac{\mathcal{E}_{\mathsf{u},k}\lvert\boldsymbol{\omega_{k}}^{H}\mathbb{E}[\mathbf{u}_{kk}]\rvert^{2}}{\boldsymbol{\omega_{k}}^{H}				
					\begin{pmatrix}
					\sum\nolimits_{i\in\mathcal{U}_{\mathsf{u}}}\mathcal{E}_{\mathsf{u},i}\mathbb{E}[\mathbf{u}_{ki}\mathbf{u}_{ki}^{H}]-\mathcal{E}_{\mathsf{u},k}\mathbb{E}\left[\mathbf{u}_{kk}\right]\mathbb{E}[\mathbf{u}_{kk}^{H}]&\\ +\sum\nolimits_{n\in\mathcal{U}_{\mathsf{d}}}\mathbb{E}[\mathbf{a}_{kn}\mathbf{a}_{kn}^{H}]+\mathbf{N}_{\mathsf{eff.}}
				\end{pmatrix}
				\boldsymbol{\omega_{k}}},
		\end{align*}
		where $\boldsymbol{\omega_{k}}\triangleq[\omega_{1k},\omega_{2k},\ldots,\omega_{Mk}]^T\in\mathbb{C}^{M}$, $\mathbf{u}_{ki}\triangleq[\mathbf{v}_{1k}^{H}\mathbf{f}_{1i},\mathbf{v}_{2k}^{H}\mathbf{f}_{2i},\ldots,\mathbf{v}_{Mk}^{H}\mathbf{f}_{Mi}]^{T}\in\mathbb{C}^{M}$, $[\mathbf{a}_{kn}]_{m}=\sum\nolimits_{\substack{j=1,\\ j\neq m}}^{M}\sqrt{\mathcal{E}_{\mathsf{d}}}\kappa_{jn}\mathbf{v}_{mk}^{H} \tilde{\mathbf{G}}_{mj}\mathbf{p}_{jn}+\sqrt{\mathcal{E}_{\mathsf{d}}}\kappa_{mn}\mathbf{v}_{mk}^{H} {\mathbf{G}}_{m}^{\mathsf{SI}}\mathbf{p}_{mn}$, and $\mathbf{N}_{\mathsf{eff.}}=N_{0}{\tt diag}( \mathbb{E}[\|\mathbf{v}_{1k}\|^2], \ldots,\mathbb{E}[\|\mathbf{v}_{Mk}\|^2])\in\mathbb{C}^{M\times M}$. We note that the FD-SINR expression is similar to the DTDD case~(see~(\eqref{eq:UL_SINR}) except that we have set $\mathcal{A}_{\mathsf{u}}=M$ since all the APs are capable of UL reception, and the addition of IrAI in  $[\mathbf{a}_{kn}]_{m}$.
Hence, to avoid repetition, we present the final results related to the optimal CPU combining weights, the UL and DL SE expressions, and the power control algorithms for the FD system without detailed proofs.\footnote{Our SE expressions match with those in the FD literature under special cases such as perfect channel estimation~\cite{FD_CF_ICC} and equal weight-based combining at the CPU~\cite{FD_CF_ICC, SPAWC_Virtual_Duplex}.}
 \begin{cor}\label{corr:FD_weights}
 	The optimal weighting vector  for the $k$th UL UE at the CPU is $\boldsymbol{\alpha}_{k}^{\mathsf{opt.}}=\mathcal{E}_{\mathsf{u},k}\mathbf{R}_{\alpha}^{-1}\bar{\mathbf{u}}_{k}$, with $\bar{\mathbf{u}}_{k}$$=[\alpha_{\mathsf{u}, 1k}^2,\alpha_{\mathsf{u}, 2k}^2,\ldots,\alpha_{\mathsf{u}, Mk}^2]^{T}$ and $\mathbf{R}_{\alpha}=\sum\limits_{k'\in\mathcal{P}_{l(k)\backslash k}}\mathcal{E}_{\mathsf{u},k'} \bar{\mathbf{u}}_{k'} \bar{\mathbf{u}}_{k'}^{H}+\frac{1}{N_{\mathsf{rx}}-\tau_p}\mathbf{\dot{R}}_{\alpha}+\frac{\mathcal{E}_{\mathsf{d}}}{N_{\mathsf{rx}}-\tau_{p}}\mathbf{\ddot{R}}_{\alpha}$, where $\mathbf{\dot{R}}_{\alpha}$ and $\mathbf{\ddot{R}}_{\alpha}$ are  diagonal matrices with $m$th diagonal entry being
 	\begin{subequations}
 		\begin{align*}
 			&[\mathbf{\dot{R}}_{\alpha}]_{m}=\sum\nolimits_{k'\in\mathcal{U}_{\mathsf{u}}}\hspace{-1mm}\mathcal{E}_{\mathsf{u},k'}\alpha_{\mathsf{u}, mk}^2(\beta_{\mathsf{u}, mk'}-\alpha_{\mathsf{u}, mk'}^2)+N_{0}\alpha_{\mathsf{u}, mk'}^2,\hspace{-6mm}\\&[\mathbf{\ddot{R}}_{\alpha}]_{m}=\sum\nolimits_{j=1, j\neq m}^{M}\hspace{-2.5mm} N_{\mathsf{tx}}\kappa_{jn}^2\zeta_{mj}^{\mathsf{InAP}}\alpha_{\mathsf{u}, mk}^2+N_{\mathsf{tx}}\kappa_{mn}^2\alpha_{\mathsf{u}, mk}^2\zeta_{mm}^{\mathsf{SI}},
 		\end{align*}
 	\end{subequations}
 respectively, for $m=1,2,\ldots, M$.
 \end{cor}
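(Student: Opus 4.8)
The plan is to mirror the derivation of Lemma~\ref{lem:optimal_weights} (the DTDD counterpart), since the FD UL SINR displayed just above Corollary~\ref{corr:FD_weights} has precisely the generalized-Rayleigh-quotient form $\mathcal{E}_{\mathsf{u},k}\lvert\boldsymbol{\omega}_{k}^{H}\mathbb{E}[\mathbf{u}_{kk}]\rvert^{2}/(\boldsymbol{\omega}_{k}^{H}\mathbf{R}_{\alpha}\boldsymbol{\omega}_{k})$, where $\mathbf{R}_{\alpha}$ is the interference-plus-noise covariance collecting every term in the denominator. First I would invoke the maximization identity stated in the footnote of Lemma~\ref{lem:weighted_UL_SINR}: for a fixed numerator direction $\mathbb{E}[\mathbf{u}_{kk}]$ and a positive-definite $\mathbf{R}_{\alpha}$, the quotient is maximized by the vector $\boldsymbol{\alpha}_{k}^{\mathsf{opt.}}\propto\mathbf{R}_{\alpha}^{-1}\mathbb{E}[\mathbf{u}_{kk}]$, and I would fix the scaling constant as $\mathcal{E}_{\mathsf{u},k}$ to keep the weights dimensionless. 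This yields the claimed closed form once $\bar{\mathbf{u}}_{k}=\mathbb{E}[\mathbf{u}_{kk}]$ and the constituent blocks of $\mathbf{R}_{\alpha}$ are evaluated.

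Next I would evaluate those expectations under ZF combining exactly as in the proof of Lemma~\ref{lem:optimal_weights}, with only two bookkeeping changes: (i) the receive array dimension $N$ is replaced by $N_{\mathsf{rx}}$ and the transmit dimension entering the cross-link terms by $N_{\mathsf{tx}}$; and (ii) every AP summation runs over the full set $\{1,\ldots,M\}$, because in FD every AP receives in the UL (i.e.\ $\mathcal{A}_{\mathsf{u}}=\mathcal{A}$). The first moment gives $[\mathbb{E}[\mathbf{u}_{kk}]]_{m}=\alpha_{\mathsf{u},mk}^{2}$, hence $\bar{\mathbf{u}}_{k}=[\alpha_{\mathsf{u},1k}^{2},\ldots,\alpha_{\mathsf{u},Mk}^{2}]^{T}$; the pilot-sharing UEs ($k'\in\mathcal{P}_{l(k)}\backslash k$) produce the coherent rank-one contributions $\mathcal{E}_{\mathsf{u},k'}\bar{\mathbf{u}}_{k'}\bar{\mathbf{u}}_{k'}^{H}$; and the channel-estimation-error plus thermal-noise terms assemble into the diagonal matrix $\tfrac{1}{N_{\mathsf{rx}}-\tau_p}\mathbf{\dot{R}}_{\alpha}$, using $\mathbb{E}[\mathbf{v}_{mk}^{H}\mathbf{v}_{mk}]=\alpha_{\mathsf{u},mk}^{2}/(N_{\mathsf{rx}}-\tau_p)$ together with $\tilde{\mathbf{f}}_{\mathsf{u},mk'}\sim\mathcal{CN}(\mathbf{0},(\beta_{\mathsf{u},mk'}-\alpha_{\mathsf{u},mk'}^{2})\mathbf{I})$ and $\mathbf{w}_{\mathsf{u},mk}\sim\mathcal{CN}(\mathbf{0},N_{0}\mathbf{I})$. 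These steps are verbatim translations of the DTDD case.

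The only genuinely new object is the cross-link covariance $\sum_{n\in\mathcal{U}_{\mathsf{d}}}\mathbb{E}[\mathbf{a}_{kn}\mathbf{a}_{kn}^{H}]$, which now carries both the InAI from the $M-1$ other APs and the IrAI of the receiving AP itself, i.e.\ the $j=m$ self-interference term in $[\mathbf{a}_{kn}]_{m}$. Here I would first argue that this covariance stays diagonal and that the InAI and IrAI contributions add without cross terms: the entries of the $\tilde{\mathbf{G}}_{mj}$ ($j\neq m$) and of $\mathbf{G}_{m}^{\mathsf{SI}}$ are zero-mean, mutually independent across distinct AP indices, and independent of the combiners, precoders, and data, so every off-diagonal entry and every InAI--IrAI cross term vanishes in expectation. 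On the $m$th diagonal entry, the InAI piece reproduces the DTDD value $N_{\mathsf{tx}}\zeta_{mj}^{\mathsf{InAP}}\alpha_{\mathsf{u},mk}^{2}/(N_{\mathsf{rx}}-\tau_p)$, while the IrAI piece follows from the identical calculation with $\mathbf{G}_{m}^{\mathsf{SI}}$ (i.i.d.\ $\mathcal{CN}(0,\zeta_{mm}^{\mathsf{SI}})$) replacing $\tilde{\mathbf{G}}_{mj}$, giving $N_{\mathsf{tx}}\zeta_{mm}^{\mathsf{SI}}\alpha_{\mathsf{u},mk}^{2}/(N_{\mathsf{rx}}-\tau_p)$. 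Weighting the InAI terms by $\mathcal{E}_{\mathsf{d}}\kappa_{jn}^{2}$ and the IrAI term by $\mathcal{E}_{\mathsf{d}}\kappa_{mn}^{2}$ and summing over $j\neq m$ and over $n\in\mathcal{U}_{\mathsf{d}}$ produces exactly $\tfrac{\mathcal{E}_{\mathsf{d}}}{N_{\mathsf{rx}}-\tau_p}\mathbf{\ddot{R}}_{\alpha}$ with the stated diagonal. Collecting the rank-one pilot-sharing terms, $\tfrac{1}{N_{\mathsf{rx}}-\tau_p}\mathbf{\dot{R}}_{\alpha}$, $\tfrac{\mathcal{E}_{\mathsf{d}}}{N_{\mathsf{rx}}-\tau_p}\mathbf{\ddot{R}}_{\alpha}$, and $\mathbf{N}_{\mathsf{eff.}}$ into $\mathbf{R}_{\alpha}$ completes the argument.

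I expect no serious technical obstacle, as the result is a direct specialization of Lemma~\ref{lem:optimal_weights}; the only point demanding care is the bookkeeping that isolates the self-interference of the receiving AP (the $j=m$ term, scaled by $\zeta_{mm}^{\mathsf{SI}}$) from the inter-AP interference of the remaining $M-1$ APs (the $j\neq m$ terms, scaled by $\zeta_{mj}^{\mathsf{InAP}}$), together with verifying that these two effects superpose additively because the corresponding channels are independent and zero-mean.
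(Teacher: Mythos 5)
Your proposal is correct and follows essentially the same route the paper takes: the paper states Corollary~\ref{corr:FD_weights} without detailed proof, explicitly appealing to the Rayleigh-quotient argument of Lemma~\ref{lem:weighted_UL_SINR} and the ZF computations of Lemma~\ref{lem:optimal_weights} with $\mathcal{A}_{\mathsf{u}}=\{1,\ldots,M\}$ and $N\to N_{\mathsf{rx}},N_{\mathsf{tx}}$, and the one genuinely new ingredient you identify --- the IrAI term $\mathbb{E}[|\mathbf{v}_{mk}^{H}\mathbf{G}_{m}^{\mathsf{SI}}\mathbf{p}_{mn}|^{2}]=N_{\mathsf{tx}}\zeta_{mm}^{\mathsf{SI}}\alpha_{\mathsf{u},mk}^{2}/(N_{\mathsf{rx}}-\tau_{p})$ --- is computed in the paper exactly as you describe (in the proof of Corollary~\ref{corr:FD_SE}). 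Your additional remarks on the diagonality of the cross-link covariance and the vanishing of the InAI--IrAI cross terms are accurate and consistent with the paper's treatment.
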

 \begin{cor}\label{corr:FD_SE}
 	The UL and DL SINRs of the FD CF MIMO system are, respectively,
 	\begin{align}\label{eq:UL_SINR_CF_FD}
 		\eta_{\mathsf{u},k}(\boldsymbol{\kappa},\boldsymbol{\mathcal{E}}_{\mathsf{u}})=\frac{(N_{\mathsf{rx}}-\tau_{p})\mathcal{E}_{\mathsf{u},k}(\sum\nolimits_{m=1}^{M}\omega_{mk}^{*}\alpha_{\mathsf{u}, mk}^{2})^2}{\begin{pmatrix}
 				{\tt EST}_{\mathsf{u},k}+{\tt MUI}_{\mathsf{u},k}\\+{\tt IAP}_{k}+N_{0}\sum\nolimits_{m=1}^{M}|\omega_{mk}^{*}|^{2}\alpha_{\mathsf{u}, mk}^{2}
 			\end{pmatrix}},
 	\end{align}
 and 
  \begin{align}
 	\eta_{\mathsf{d},n}(\boldsymbol{\kappa},\boldsymbol{\mathcal{E}}_{\mathsf{u}})=\frac{(N_{\mathsf{tx}}-\tau_{p})(\sum\nolimits_{j=1}^{M}\sqrt{\mathcal{E}_{\mathsf{d}}}\alpha_{\mathsf{d}, jn}\kappa_{jn})^2}{{\tt EST}_{\mathsf{d},n}+{\tt MUI}_{\mathsf{d},n}+{\tt IUE}_{n}+N_{0}},
 \end{align}
 	where ${\tt EST}_{\mathsf{u},k}$, ${\tt MUI}_{\mathsf{u},k}$ and ${\tt IAP}_{k}$ correspond to interferences caused by the channel estimation error, multi-UE interference, and inter-AP interference. For DL, ${\tt EST}_{\mathsf{d},n}$,  ${\tt MUI}^{d}_{n}$ and ${\tt IUE}_{n}$ represent error due to channel estimation, the DL multi-UE interference and the UL UE to DL UE CLI, respectively. These terms can be evaluated as
 	\begin{subequations}
 		\begin{align*}
 			&{\tt EST}_{\mathsf{u},k}= \sum\nolimits_{k'\in\mathcal{U}_{\mathsf{u}}}\hspace*{-1mm}\mathcal{E}_{\mathsf{u},k'}\hspace*{-2mm}\sum\nolimits_{m=1}^{M}|\omega_{mk}^*|^2\alpha_{\mathsf{u}, mk}^2(\beta_{\mathsf{u}, mk'}-\alpha_{\mathsf{u}, mk'}^2),\hspace*{-4mm}\\
 			&{\tt MUI}_{\mathsf{u},k}= (N_{\mathsf{rx}}-\tau_{p})\sum\nolimits_{i\in\mathcal{P}_{l(k)}\backslash k}\mathcal{E}_{\mathsf{u},i}\Big(\sum\nolimits_{m=1}^{M}\omega_{mk}^{*}\alpha_{\mathsf{u}, mi}^{2}\Big)^2,
 			\\
 			&{\tt IAP}_{k}=N_{\mathsf{tx}}\mathcal{E}_{\mathsf{d}}\sum\limits_{n\in\mathcal{U}_{\mathsf{d}}}\sum\limits_{m=1}^{M}\Big(\sum\limits_{\substack{j=1, j\neq m}}^{M}\kappa_{jn}^2\zeta_{mj}^{\mathsf{InAP}}|\omega_{mk}^{*}|^{2}\alpha_{\mathsf{u}, mk}^2\notag\\&\hspace*{1.7cm}+\kappa_{mn}^2|\omega_{mk}^{*}|^{2}\alpha_{\mathsf{u}, mk}^2\zeta_{mm}^{\mathsf{SI}}\Big),\\
 			&{\tt EST}_{\mathsf{d},n}= \sum\nolimits_{q\in\mathcal{U}_{\mathsf{d}}}\sum\nolimits_{j=1}^{M}\mathcal{E}_{\mathsf{d}}\kappa_{jq}^2(\beta_{\mathsf{d}, jn}-\alpha_{\mathsf{d}, jn}^2),\\
 				&{\tt MUI}^{d}_{n}=(N_{\mathsf{tx}}-\tau_{p})\mathcal{E}_{\mathsf{d}}\sum\nolimits_{ q\in\mathcal{P}_{l(n)}\backslash n}\Big(\sum\nolimits_{j=1}^{M}\kappa_{jq}\alpha_{\mathsf{d}, jn}\Big)^2,\label{eq:dl_ncoh_FD}
 			\\&{\tt IUE}_{n}=\sum\nolimits_{{k\in\mathcal{U}_{\mathsf{u}}}}\mathcal{E}_{\mathsf{u},k}\epsilon_{nk}.
 			\end{align*}
 	\end{subequations}
 Thus, the sum UL-DL SE of the FD enabled CF-system can be expressed as $	\mathcal{R}_{\mathsf{s}}(\boldsymbol{\kappa},\boldsymbol{\mathcal{E}}_{\mathsf{u}})=\frac{\tau-\tau_{p}}{\tau}[\sum\nolimits_{k\in\mathcal{U}_{\mathsf{u}}}\log(1+\eta_{\mathsf{u},k}(\boldsymbol{\kappa},\boldsymbol{\mathcal{E}}_{\mathsf{u}}))+\sum\nolimits_{n\in\mathcal{U}_{\mathsf{d}}}\log(1+\eta_{\mathsf{d},n}(\boldsymbol{\kappa},\boldsymbol{\mathcal{E}}_{\mathsf{u}}))]$.
 \end{cor}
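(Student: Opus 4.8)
The plan is to replay the derivations of Lemma~\ref{lemm:UL_SINR_DTDD} and Lemma~\ref{lem:DL_SE_DTDD}, since Corollary~\ref{corr:FD_SE} differs from the DTDD SINRs only through three structural changes: the receive/transmit dimensions become $N_{\mathsf{rx}}$ and $N_{\mathsf{tx}}$ in place of $N$; the UL and DL AP index sets both become $\{1,\ldots,M\}$ because $\mathcal{A}_{\mathsf{u}}=\mathcal{A}_{\mathsf{d}}=\mathcal{A}$; and the inter-AP term $[\mathbf{a}_{kn}]_m$ now carries an extra IrAI contribution through $\mathbf{G}_m^{\mathsf{SI}}$. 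I would begin from the weighted UL signal $\hat{s}_{\mathsf{u},k}$ and the DL received signal $r_{\mathsf{d},n}$, substitute the ZF combiner $\mathbf{v}_{mk}$ and the unit-normalized ZF precoder $\mathbf{p}_{mn}$, and reuse the moment identities $\mathbb{E}[\mathbf{v}_{mk}^{H}\hat{\mathbf{f}}_{\mathsf{u},mk}]=\alpha_{\mathsf{u},mk}^2$ and $\mathbb{E}[\|\mathbf{v}_{mk}\|^2]=\alpha_{\mathsf{u},mk}^2/(N_{\mathsf{rx}}-\tau_p)$, together with~\cite[Lemma~$6$]{WH_Verdu} for the precoder normalization. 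With these substitutions, the beamforming gains, ${\tt EST}_{\mathsf{u},k}$, ${\tt MUI}_{\mathsf{u},k}$, ${\tt EST}_{\mathsf{d},n}$, ${\tt MUI}^{d}_{n}$, and ${\tt IUE}_{n}$ follow verbatim from the DTDD calculations, since none of these terms involves the self-interference channel.

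The only genuinely new computation is the inter-AP term ${\tt IAP}_{k}$. In the FD model $[\mathbf{a}_{kn}]_m$ splits as an InAI sum over the APs $j\neq m$ driven by $\tilde{\mathbf{G}}_{mj}$ plus a single IrAI term driven by $\mathbf{G}_m^{\mathsf{SI}}$. I would evaluate $\mathbb{E}[|[\mathbf{a}_{kn}]_m|^2]$ by expanding the square: because the matrices $\{\tilde{\mathbf{G}}_{mj}\}_{j\neq m}$ and $\mathbf{G}_m^{\mathsf{SI}}$ are mutually independent, zero-mean Gaussian, every cross term --- both among distinct InAI summands and between the InAI sum and the IrAI term --- vanishes in expectation, leaving a sum of individual variances. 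Each variance is computed exactly as the InAI variance in the proof of Lemma~\ref{lem:optimal_weights}, using independence of the interfering channel from the local combiner and precoder to reduce $\mathbb{E}[|\mathbf{v}_{mk}^{H}\mathbf{G}\mathbf{p}_{mn}|^2]$ to a product of second moments of $\mathbf{v}_{mk}$ and $\mathbf{p}_{mn}$ scaled by the channel variance; this yields the InAI summand with $\zeta_{mj}^{\mathsf{InAP}}$ over $j\neq m$ and, with $\zeta_{mj}^{\mathsf{InAP}}$ replaced by $\zeta_{mm}^{\mathsf{SI}}$, the IrAI summand. The same independence-across-APs argument makes $\sum_{n}\mathbb{E}[\mathbf{a}_{kn}\mathbf{a}_{kn}^{H}]$ diagonal, so only the $|\omega_{mk}^{*}|^2$ weights survive, reproducing the stated ${\tt IAP}_{k}$.

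Finally, I would collect numerator and denominator into the ratio~\eqref{eq:UL_SINR_CF_FD} and its DL counterpart, and assemble the sum UL-DL SE exactly as in~\eqref{eq:sum_UL_DL_SE_DTDD} but with $\mathcal{A}_{\mathsf{s}}$ fixed to $\mathcal{A}$ (no scheduling in FD). The main obstacle I anticipate is justifying the independence structure of the IrAI term: one must confirm that $\mathbf{G}_m^{\mathsf{SI}}$ is independent of both the receive-side combiner $\mathbf{v}_{mk}$ and the transmit-side precoder $\mathbf{p}_{mn}$ at the \emph{same} AP $m$ --- which holds because the transmit and receive arrays observe physically distinct, independently estimated channels and the residual self-interference is modeled as an independent Rayleigh term --- and that the InAI/IrAI cross terms genuinely decouple so the two contributions add as variances. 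Once this is established, the remainder is a routine replay of the DTDD proofs, which is why the corollary can be stated without a detailed proof.
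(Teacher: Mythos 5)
Your proposal is correct and follows essentially the same route as the paper: the paper likewise states that all terms carry over from Lemmas~\ref{lemm:UL_SINR_DTDD} and~\ref{lem:DL_SE_DTDD} after the dimension and AP-set substitutions, and devotes its only explicit computation to the new IrAI contribution, evaluating $\mathbb{E}[|\mathbf{v}_{mk}^{H}\mathbf{G}_{m}^{\mathsf{SI}}\mathbf{p}_{mn}|^{2}]$ via $\mathbb{E}[\mathbf{G}_{m}^{\mathsf{SI}}\mathbf{p}_{mn}\mathbf{p}_{mn}^{H}(\mathbf{G}_{m}^{\mathsf{SI}})^{H}]=N_{\mathsf{tx}}\zeta_{mm}^{\mathsf{SI}}\mathbf{I}_{N_{\mathsf{rx}}}$ and $\mathbb{E}[\|\mathbf{v}_{mk}\|^{2}]=\alpha_{mk}^{2}/(N_{\mathsf{rx}}-\tau_{p})$, exactly the decoupling-into-second-moments step you describe. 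Your explicit attention to the independence of $\mathbf{G}_{m}^{\mathsf{SI}}$ from the same-AP combiner and precoder is a point the paper uses implicitly, so nothing is missing.
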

 \begin{proof}
 The proof follows using similar techniques as in Lemmas~\ref{lemm:UL_SINR_DTDD}
and~\ref{lem:DL_SE_DTDD}. We only explain the proof for IrAI. The power of the IrAI with ZF is
		$\mathbb{E}\left[\left|\sqrt{\mathcal{E}_{\mathsf{d}}}\kappa_{mn}\mathbf{v}_{mk}^{H} {\mathbf{G}}_{m}^{\mathsf{SI}}\mathbf{p}_{mn}\right|^{2}\right]=\mathcal{E}_{\mathsf{d}}\kappa_{mn}^2\mathbb{E}\left[{\tt tr}\left\{\mathbf{v}_{mk}^{H} \mathbf{G}_{m}^{\mathsf{SI}}\mathbf{p}_{mn}\mathbf{p}_{mn}^{H}{\mathbf{G}_{m}^{\mathsf{SI}}}^{H}\mathbf{v}_{mk}\right\}\right]=\mathcal{E}_{\mathsf{d}}\kappa_{mn}^2{\tt tr}\left\{\mathbb{E}\left[\mathbf{v}_{mk}^{H}\mathbb{E}\left[ \mathbf{G}_{m}^{\mathsf{SI}}\mathbf{p}_{mn}\mathbf{p}_{mn}^{H}{\mathbf{G}_{m}^{\mathsf{SI}}}^{H}\right]\mathbf{v}_{mk}\right]\right\}=N_{\tt tx}\mathcal{E}_{\mathsf{d}}\kappa_{mn}^2\zeta_{mm}^{\mathsf{SI}}{\tt tr}\left\{\mathbb{E}\left[\|\mathbf{v}_{mk}\|^{2}\right]\right\}=\frac{N_{\tt tx}\mathcal{E}_{\mathsf{d}}\kappa_{mn}^2\zeta_{mm}^{\mathsf{SI}}\alpha^2_{mk}}{N_{\tt rx}-\tau_{p}}$,
		where for the inner expectation, we note that the $i$th diagonal entry, $\mathbb{E}\left[ \mathbf{G}_{m}^{\mathsf{SI}}\mathbf{p}_{mn}\mathbf{p}_{mn}^{H}{\mathbf{G}_{m}^{\mathsf{SI}}}^{H}\right]_{ii}$, can be evaluated as 
		\begin{multline}
		\mathbb{E}\left[ \mathbf{G}_{m}^{\mathsf{SI}}\left[i,:\right]\mathbf{p}_{mn}\mathbf{p}_{mn}^{H}{\mathbf{G}_{m}^{\mathsf{SI}}}^{H}\left[i,:\right]\right]\\={\tt tr}\left\{\mathbb{E}\left[\mathbf{p}_{mn}\mathbf{p}_{mn}^{H}\right]\mathbb{E}\left[{\mathbf{G}_{m}^{\mathsf{SI}}}^{H}\left[i,:\right]\mathbf{G}_{m}^{\mathsf{SI}}\left[i,:\right]\right]\right\}=N_{\tt tx}\zeta_{mm}^{\mathsf{SI}}.\notag
		\end{multline}
		Here, $\mathbf{G}_{m}^{\mathsf{SI}}\left[i,:\right]$ denotes the $i$th row of $\mathbf{G}_{m}^{\mathsf{SI}}$.
		 It is easy to show that the off-diagonal terms evaluate to zero. Thus, $\mathbb{E}\left[ \mathbf{G}_{m}^{\mathsf{SI}}\mathbf{p}_{mn}\mathbf{p}_{mn}^{H}{\mathbf{G}_{m}^{\mathsf{SI}}}^{H}\right] = N_{\tt tx}\zeta_{mm}^{\mathsf{SI}}\mathbf{I}_{N_{\tt rx}}$. Finally, with ZF combining, $\mathbb{E}\left[\|\mathbf{v}_{mk}\|^{2}\right]=\frac{\alpha^2_{mk}}{N_{\tt rx}-\tau_{p}}$.
 \end{proof}
From the expression for ${\tt IAP}_{k}$, we see that the DL signals from \emph{all} the APs interfere with the UL signals of any AP, unlike DTDD. On the other hand,  there is no need for scheduling APs in UL/DL in the FD system.
Thus, we only need to consider power allocation for the FD system. However, it is easy to apply the UL and DL power allocation protocols developed for the DTDD-enabled system in the FD case. As mentioned earlier, instead of the scheduled UL and DL AP subsets, we now have $\mathcal{A}_{\mathsf{u}}=\mathcal{A}_{\mathsf{d}}=\{1,2,\ldots, M\}$. This only changes the limits of the summations in the beamforming gain and the interference terms. Secondly, due to IrAI, for the sub-problem of UL power allocation, the effective noise additionally includes the power of IrAI. To elaborate, in the FD case, we have $\mathsf{G}_{\mathsf{u},k}(\boldsymbol{\mathcal{E}}_{\mathsf{u}})=(N-\tau_{p})\mathcal{E}_{\mathsf{u},k}\left(\sum\nolimits_{m=1}^{M}\omega_{mk}^{*}\alpha_{mk}^{2}\right)^2,$ 
$		\mathsf{I}_{\mathsf{u},k}(\boldsymbol{\mathcal{E}}_{\mathsf{u}})={\tt EST}_{\mathsf{u},k}+{\tt MUI}_{\mathsf{u},k}+\sigma_{\mathsf{eff.},\mathsf{u},k}^2$, and $
		\sigma_{\mathsf{eff.},\mathsf{u},k}^2\triangleq\left({\tt IAP}_{k}+N_{0}\sum\nolimits_{m=1}^{M}|\omega_{mk}^{*}|^{2}\alpha_{mk}^{2}\right).$
Recall that $\sigma_{\mathsf{eff.},\mathsf{u},k}^2$ is the power of the effective noise, which does not depend on the UL transmit powers. Here, ${\tt IAP}_{k}$ also includes the IrAI power, unlike DTDD. Thus, the original structure of the problem, as described for DTDD in Section~\ref{sec:UL_power_control}, does not change. Similar arguments can also be made for the case of DL. Thus, the algorithms~(namely Algorithm~\ref{algo:UL_power_control}, Algorithm~\ref{algo:ADMMM}, and Algorithm~\ref{algo:DL_power_control}) derived for DTDD directly apply to the FD system. We omit the details for brevity.

\section{Numerical Results}\label{sec:numerical}
We consider that the UEs are dropped uniformly at random locations in a $1$~km$^2$ square area. We take $5,000$ random channel instantiations for Monte Carlo averaging. We consider $50\%$ of the UEs to have UL data demands. The APs are deployed on a uniform rectangular grid for better coverage. The large scale fading between the $m$th AP and the $k$th UE is modeled as $\beta_{mk}=10^{\frac{\mathsf{PL}_{mk}+\sigma _{\mathsf{sh.}}z_{mk}}{10}}$, where the path-loss $\mathsf{PL}_{mk}$ follows the three-slope model in~\cite{FD_CF_ICC}, $\sigma _{\mathsf{sh.}}=6$ dB, and $z_{mk}\sim\mathcal{N}(0,1)$. The system bandwidth and noise figure are taken as $20$ MHz and $9$ dB, respectively, which gives a noise variance of $-92$ dBm. The coherence interval consists of  $200$ channel uses~\cite{making_cellfree}.
The pilot SNR is taken as $20$ dB. We set the algorithm parameters $\delta_{\mathsf{u}}$, $\delta_{\mathsf{d}}$, $\delta_{\mathsf{ADMM}}$ and $\delta_{\mathsf{p}}$ to $0.001$.  Other parameters, such as the number of APs, UEs, UL/DL data transmit powers, etc., are mentioned in the plots. 

In Fig.~\ref{fig:UL_power_control}, we compare the cumulative distribution function~(CDF) of the achievable UL SE with different power allocation schemes. We observe that FP-based power control added with weighted combing at the CPU~(see $\mathsf{WC+FP}$) uniformly outperforms only FP-based power control~(see $\mathsf{FP}$ without $\mathsf{WC}$) and only weighted combing at the CPU~(see $\mathsf{WC}$). This underlines the need for weighted combing along with UL power control rather than applying each individually. Further, we compare the proposed $\mathsf{WC+FP}$ scheme with the estimated channel variance and large-scale fading-dependent power control scheme proposed by Nikbakht et al. in~\cite{Nikbakht, Nikbakht_2}, and we observe almost $4$-fold improvement in $90\%$-likely UL SE rendered by our algorithm. Also, we verify the correctness of our derived closed-form expression for weighted combing given in Lemma~\ref{lem:optimal_weights}~(see $\mathsf{WC:}$ Lemma $5$) with that of Lemma~\ref{lem:weighted_UL_SINR}~(see $\mathsf{WC:}$ Lemma $5$), and see that SE achieved by derived weights matches the theoretical SE.

\begin{figure}[!]
\centering
	\centering
	\includegraphics[width=0.8\linewidth]{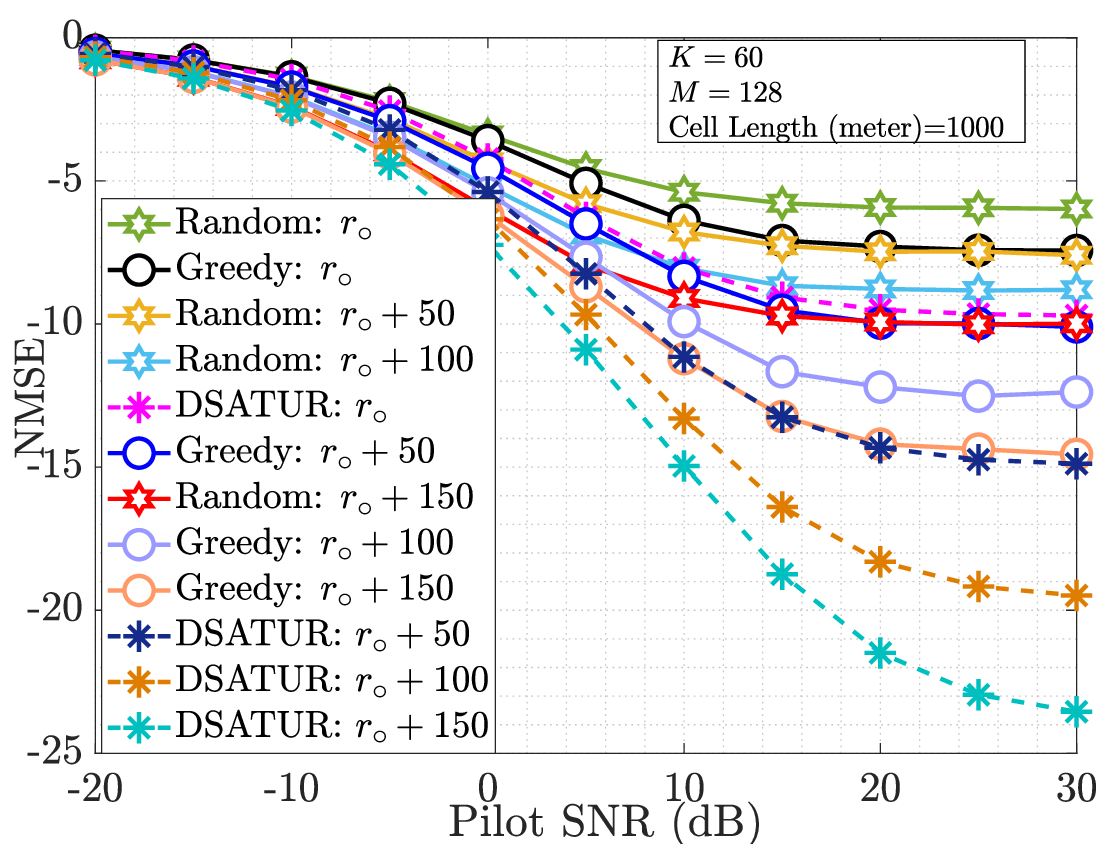}
	\caption{Comparison of Algorithm~\ref{algo:pilot_allocation} with existing literature in terms of NMSE.}\label{fig:SNR_p_vs_NMSE}
\end{figure}
\begin{figure}[!]
	\centering
	\includegraphics[width=0.8\linewidth]{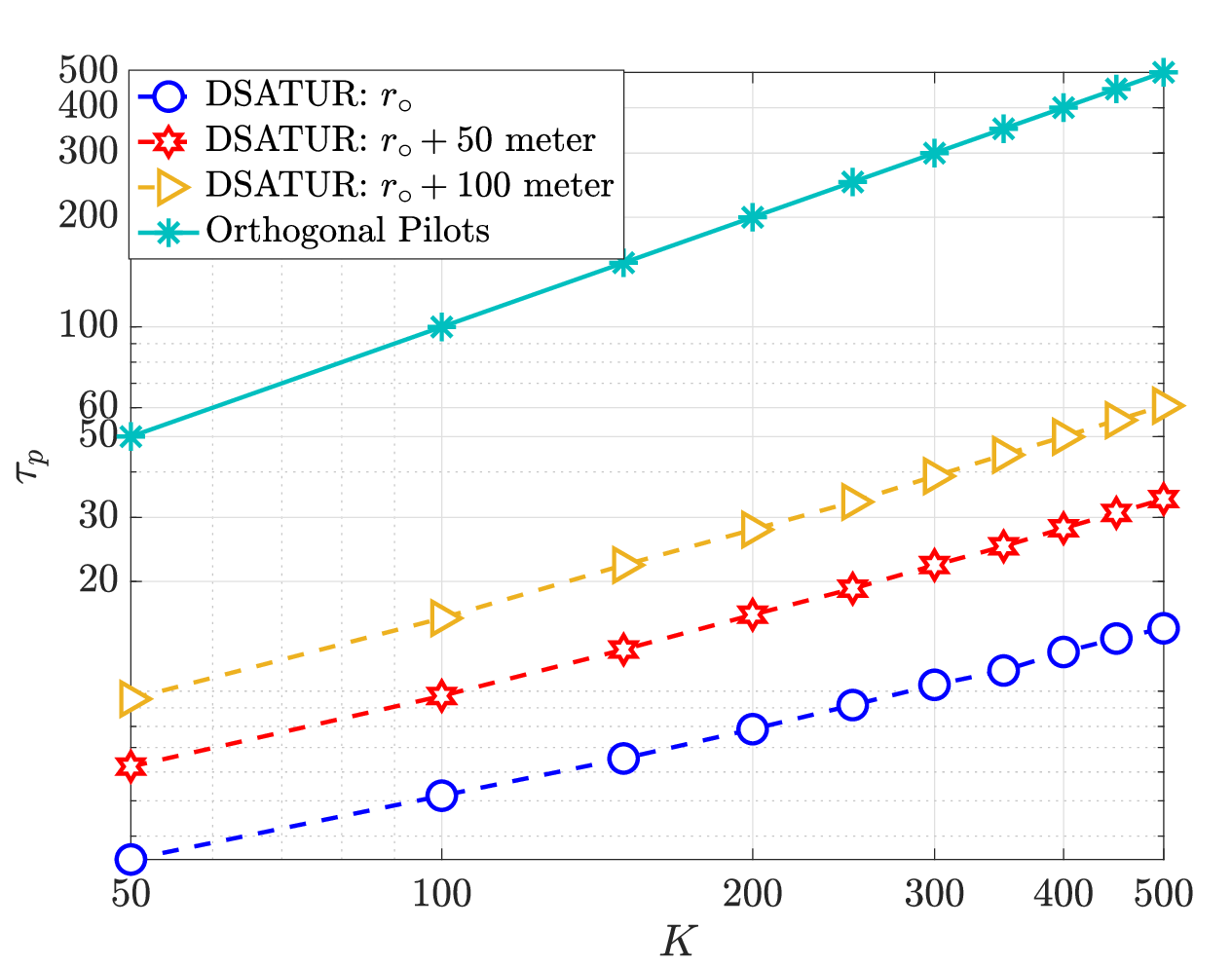}
	\caption{Scaling of pilot length with number of UEs~($K$).}\label{fig:fig_tau_p_vs_K}
\end{figure}

First, we numerically evaluate the efficacy of the proposed pilot allocation algorithm. The system model closest to our work is~\cite{DTDD_TCoM}, and thus, we compare the proposed algorithm with the greedy pilot allocation presented in ~\cite{DTDD_TCoM}. We also compare against a random pilot allocation scheme.\footnote{For both the greedy and random allocation schemes, we take the pilot length returned by our algorithm and assign that many pilots across the UEs.} In Fig.~\ref{fig:SNR_p_vs_NMSE}, we compare the performance of the pilot allocation scheme in terms of the NMSE in channel estimation at the APs as a function of pilot SNR. The pilot allocation returned by Algorithm~\ref{algo:pilot_allocation} leads to considerably lower NMSE than the greedy method. Also, as we increase the value of $r_{\mathsf{o}}$, the cardinality of $\mathcal{U}_{k}$ in~\eqref{eq:U_k} increases. This leads to a higher value of pilot length, and hence, the NMSE decreases even further. 
Fig.~\ref{fig:fig_tau_p_vs_K} illustrates the minimum pilot length required for given UE density and $r_{\mathsf{o}}$. Even with a large number of UEs, say $500$, the pilot length required is only \emph{one-tenth} that required for allocating fully orthonormal pilots. 

Next, to demonstrate the benefit of pilot length optimization, in Fig.~\ref{fig:fig_SE_vs_K}, we plot the sum UL-DL SE versus the number of UL and DL UEs. We see that, with orthogonal pilots, the pre-log factor $\frac{\tau-\tau_p}{\tau}$ can substantially degrade the SE as $\tau_p$ becomes comparable to $K$.  This trade-off is evident in Fig.~\ref{fig:fig_SE_vs_K}; when the UE load is much smaller than the coherence duration, the proposed scheme performs similarly to orthogonal pilot allocation. However, there is a dramatic improvement in the sum UL-DL SE by our proposed algorithm compared to fully orthogonal pilot allocation at a higher number of UEs. This underlines the utility of pilot reuse in a CF-MIMO system and the necessity of an algorithm that optimizes the pilot length and simultaneously reduces pilot contamination via systematic AP and UE clustering. Further, in Fig.~\ref{fig:fig_CDF_pilot_allocations}, we plot the CDFs achieved via different pilot allocations and observe that our proposed algorithm uniformly outperforms existing iterative and greedy methods~\cite{Heng_Liu_TVT, Lozano_Pilot, Location_Pilot}.

\begin{figure}
		\centering
		\includegraphics[width=0.85\linewidth]{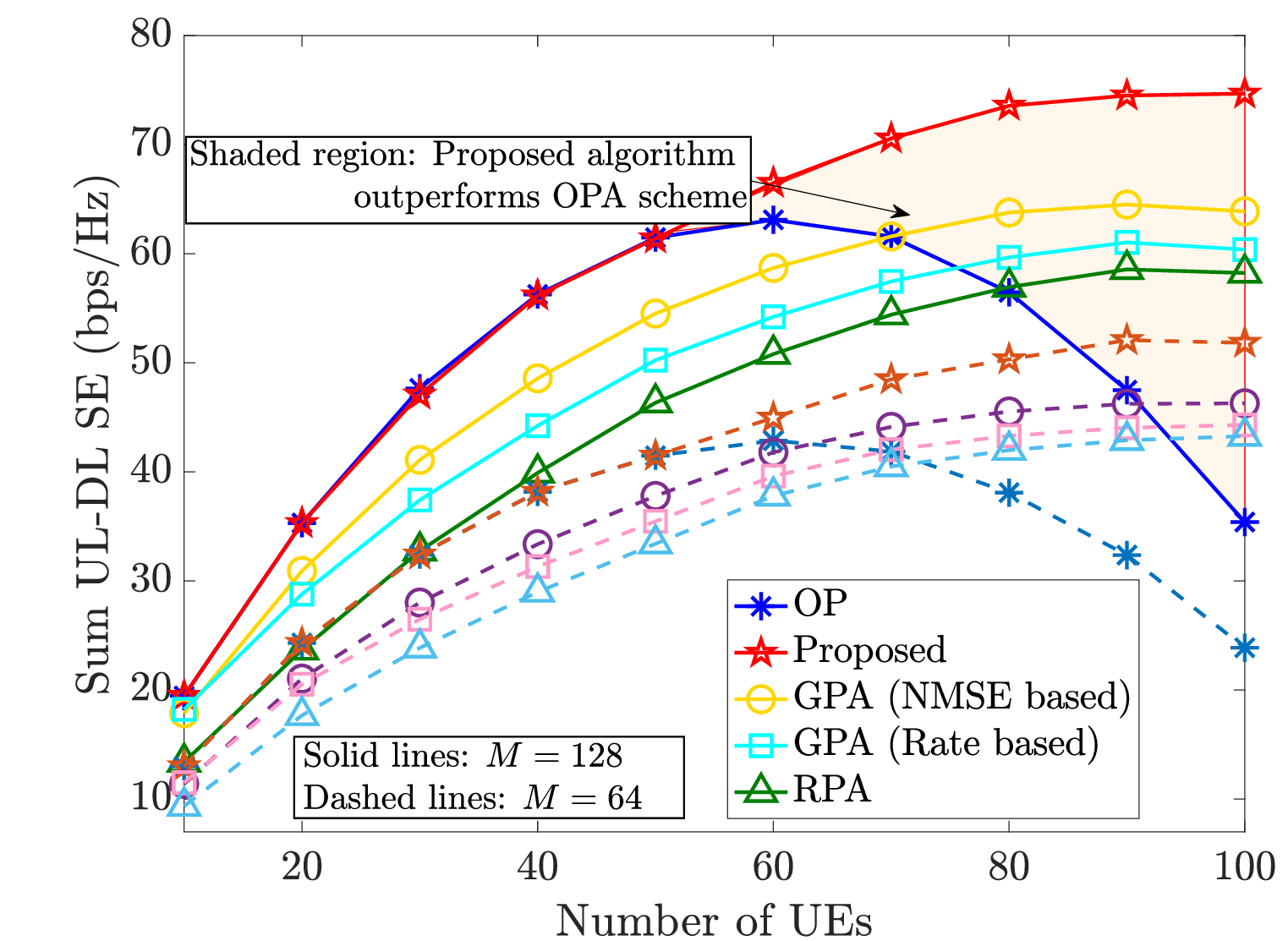}
		\caption{Sum UL-DL SE vs. the number of UEs. (OP: orthogonal pilots across UEs, Proposed: Algorithm~\ref{algo:pilot_allocation}, RPA: Random pilot assignment. NMSE-based and rate-based greedy assignment correspond to~\cite{DTDD_TCoM} and~\cite{cell_free_small_cells}, respectively.)}\label{fig:fig_SE_vs_K}
	\end{figure}
	\begin{figure}
	\centering
		\includegraphics[width=0.85\linewidth]{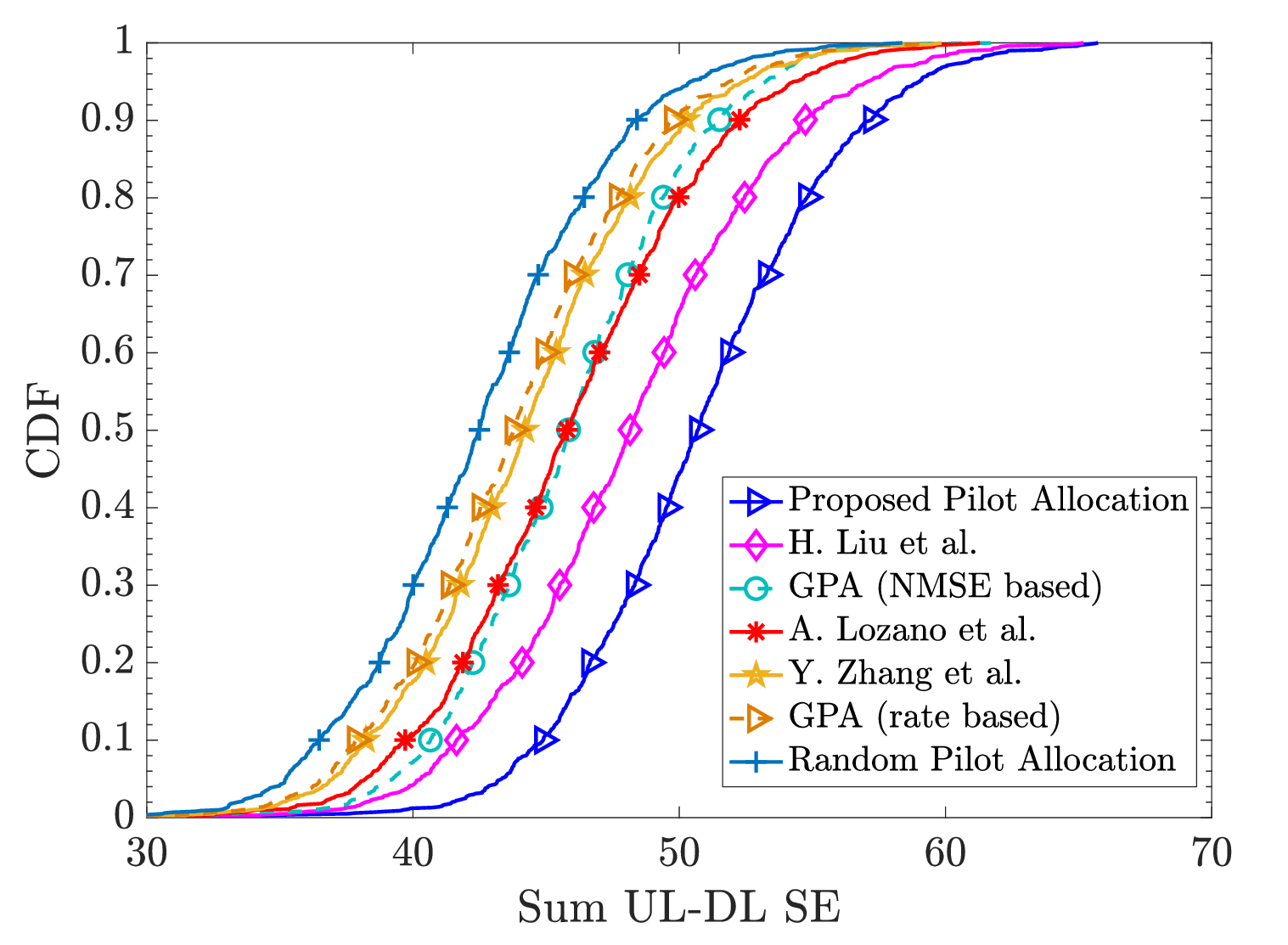}
		\caption{CDF of the sum UL-DL SE with $K=80$ and $M=64$. The legends H. Liu et al., A. Lozano et al., and Y. Zhang et al. correspond to the methods proposed by the authors in~\cite{Heng_Liu_TVT}~\cite{Lozano_Pilot}, and~\cite{Location_Pilot}, respectively.}\label{fig:fig_CDF_pilot_allocations}
	\end{figure}

\begin{figure}[!t]
	\centering
	\includegraphics[width=0.85\linewidth]{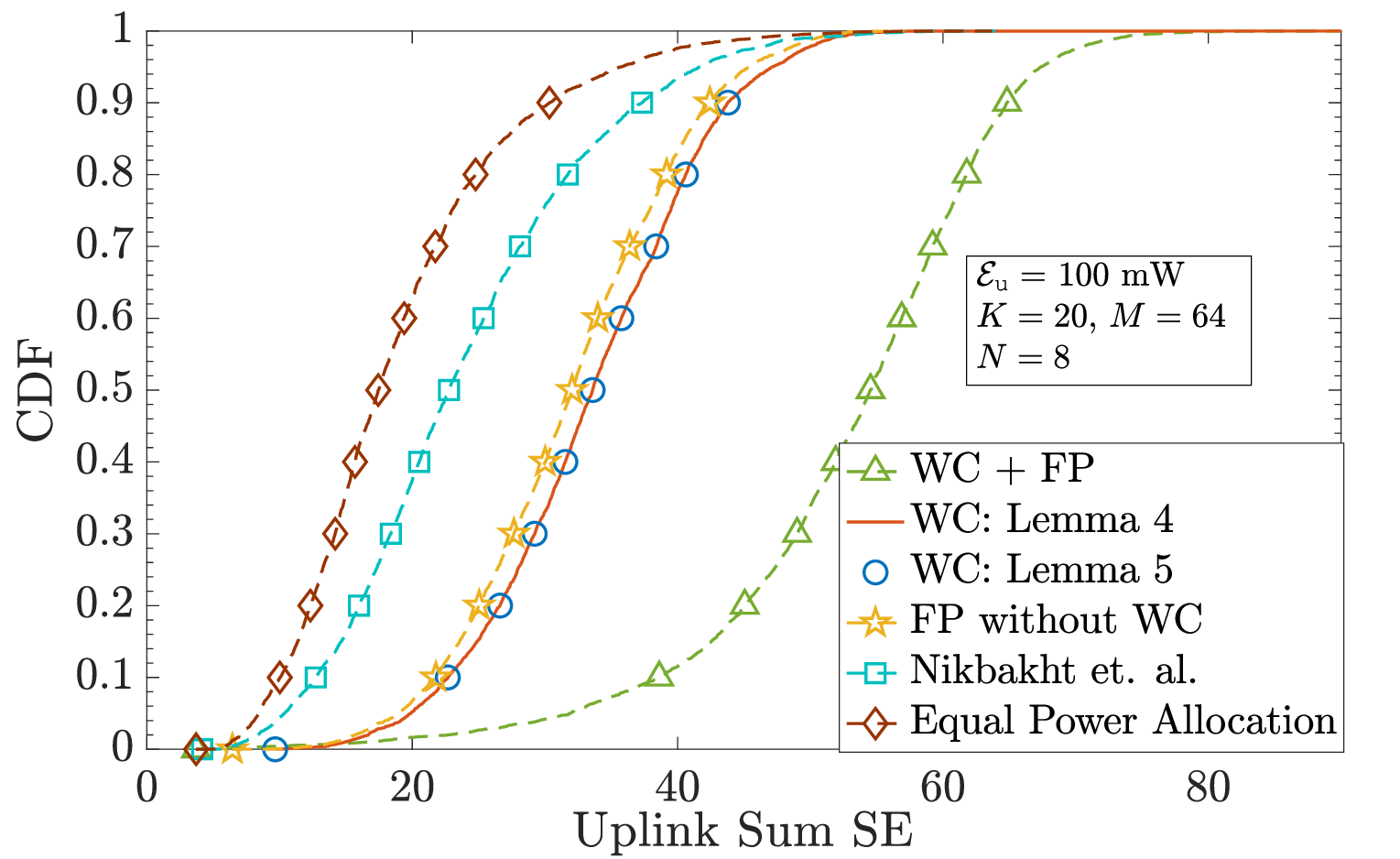}
	\caption{UL sum SE under the proposed power control algorithm and comparison with existing approaches~\cite{Nikbakht, Nikbakht_2}. Optimal weighting at the CPU along with FP-based power control yields the best performance.}\label{fig:UL_power_control}
\end{figure}

\begin{figure}[!t]
	\centering
	\includegraphics[width=\linewidth]{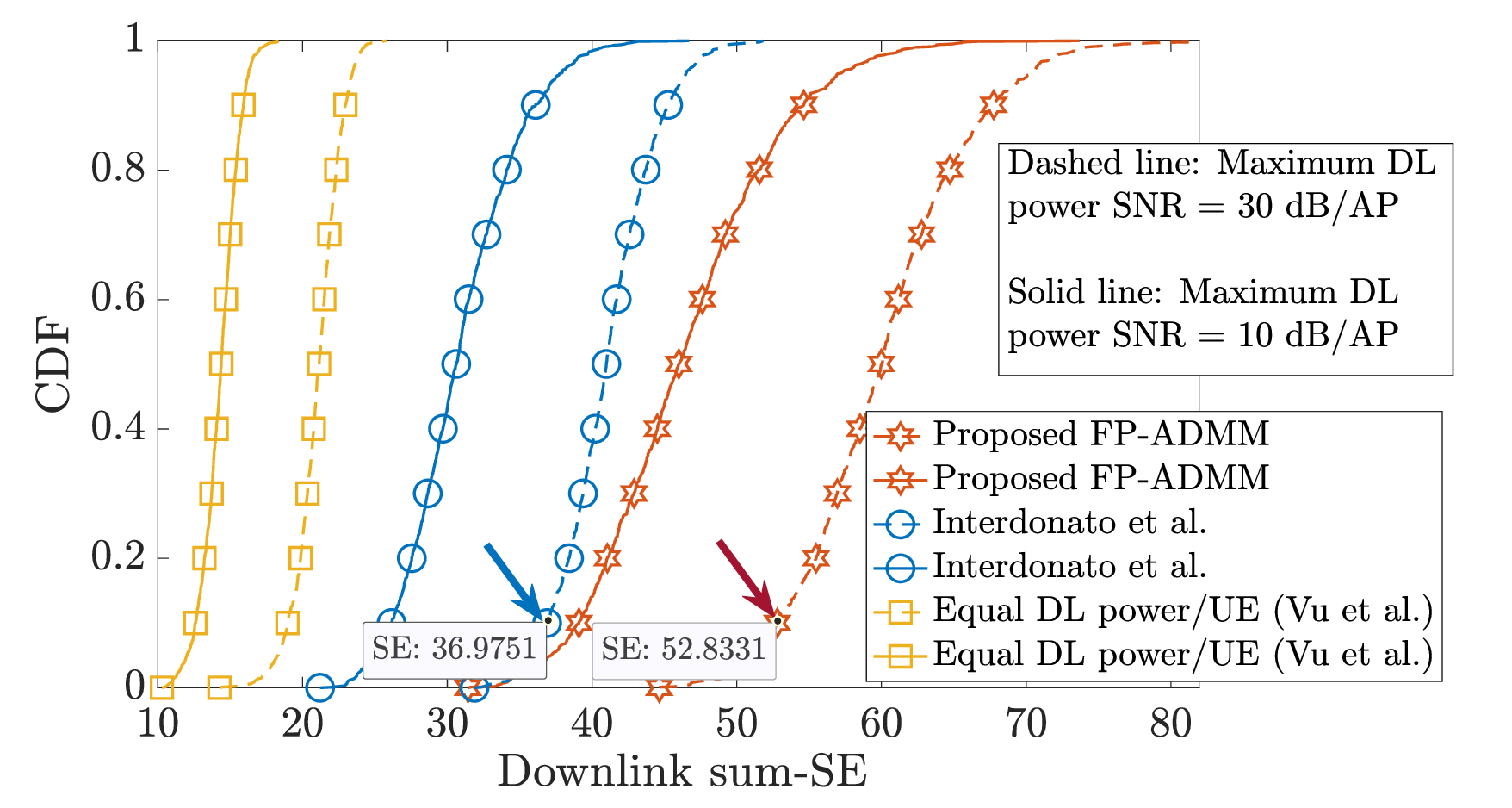}
	\caption{DL sum SE under the proposed power control algorithm and comparison with an existing approach~\cite{Interdonato}. This figure illustrates the improvement in DL sum SE that can be attained via our algorithm.}\label{fig:DL_power_control}
\end{figure}

In Fig.~\ref{fig:DL_power_control}, we compare the CDFs of the achievable DL SE with our proposed FP and ADMM-based algorithm with equal power allocation and the scalable DL power allocation algorithm proposed by Interdonato et al. in~\cite{Interdonato}. We observe almost $43\%$ improvement in the $90\%$ sum DL SE attained by our proposed algorithm compared to the method in~\cite{Interdonato}. Further, compared to the equal power allocation scheme for DL outlined by Vu et al. in~\cite{FD_CF_ICC}, Algorithm~\ref{algo:DL_power_control} procures almost $5$-fold improvement in the SE. Also, as we increase the maximum DL transmit power budget per AP, the DL SE uniformly improves for all schemes.

We now compare the performances of DTDD and FD CF MIMO. In particular, we assume the InAI and IrAI are well suppressed, at $-40$ dB.  Figure~\ref{fig:CDF_DTDD_FD} compares the sum SEs under the two duplexing schemes considering different AP and antenna densities. We observe that an FD system with $(N_{\mathsf{tx}}=N_{\mathsf{rx}}=N, M=64)$ offers only $6\%$ improvement in the $90\%$-likely sum SE compared to a DTDD enabled CF-system with $N$ antennas per $64$ APs. However, the former system has double the antenna density compared to the DTDD CF system. If we consider the same antenna density in the two systems, then the $90\%$-likely sum SE of the DTDD CF system is $21\%$ more than that of the FD system (see $(N_{\mathsf{tx}}=N_{\mathsf{rx}}=N/2, M=64)$). This is because, in DTDD, the APs are scheduled based on the local UL/DL load in its vicinity, and hence, if there is more UL load near to one or a set of APs, those APs are scheduled in UL, which in turn leads to a beamforming gain in UL that scales with $N$. On the other hand, in the FD system, the beamforming gain scales with $N/2$. Recall that although in the FD system, all the APs are FD enabled, APs far away from the UEs contribute minimally to the overall sum SE. Thus, scheduling APs based on the localized traffic load is more beneficial.

In Fig.~\ref{fig:CDF_IrAI}, we illustrate the effect of IrAI on the performance of the FD CF system and contrast it with the DTDD CF system. When the IrAI is $-20$ dB, FD uniformly outperforms the DTDD system even when the FD system has double the antenna density. However, as the IrAI strength increases, the sum SE of FD starts to deteriorate, achieving a $40\%$ lower sum SE than DTDD when IrAI is $10$ dB. Thus, the performance of FD is highly dependent on the level of IrAI suppression, while DTDD completely obviates the need for IrAI suppression and offers similar performance to that of an FD system having double the antenna density and low IrAI. 
	
In Fig.~\ref{fig:fig_SE_vs_SI}, we further inspect the variation of the sum SE over a wide range of IrAI for the FD system and compare the performance to the DTDD system. Even with double antenna density, the FD system can perform very poorly when IrAI strength becomes more pronounced~(see the shaded region).
	\begin{figure*}
		\centering
		\begin{subfigure}{0.47\textwidth}
	\centering
\includegraphics[width=0.9\linewidth]{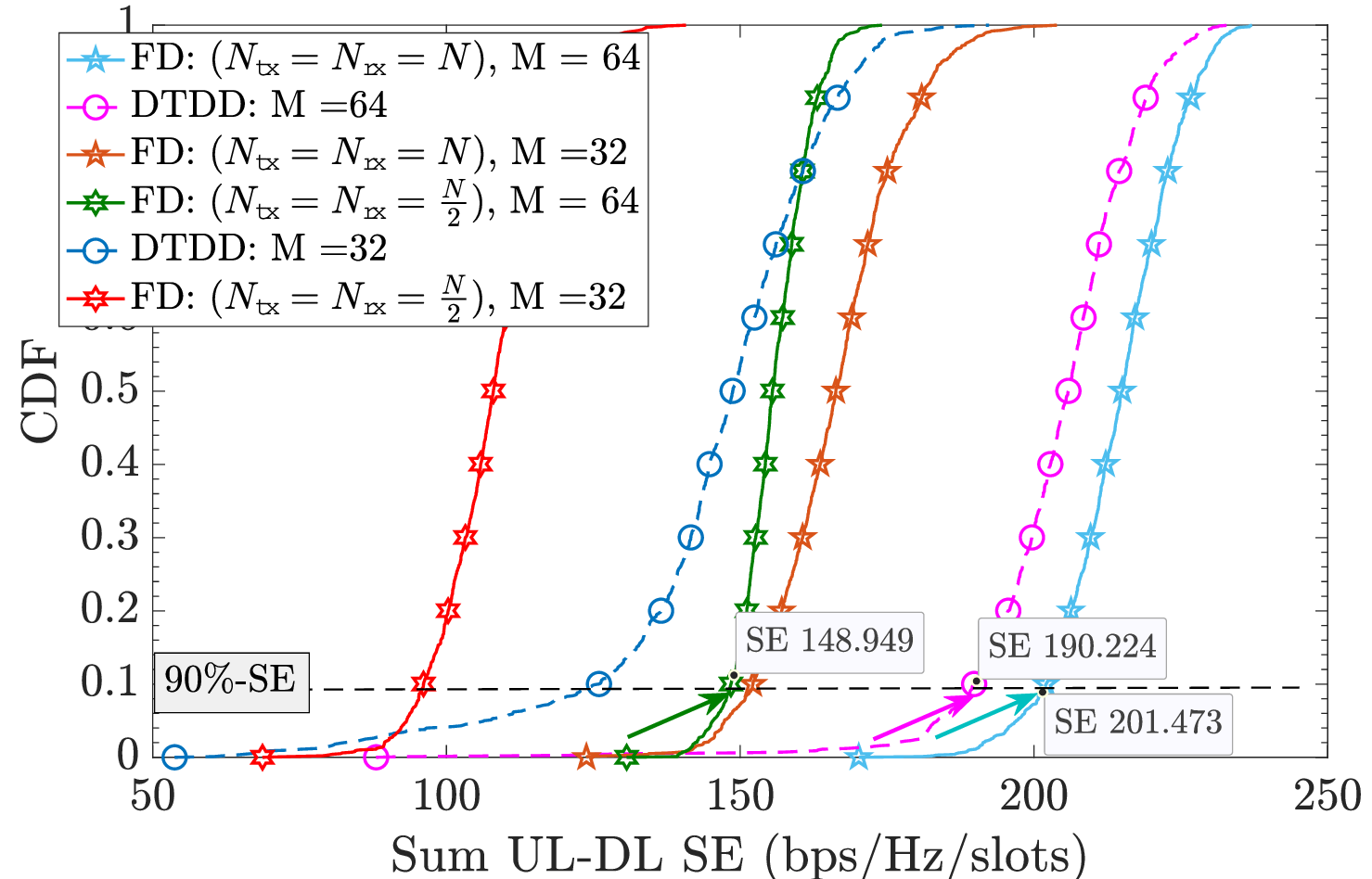}
\caption{Performance comparison of DTDD and FD systems with various antenna and AP densities. We consider $K=40$. Each HD AP is equipped with $N= 8$ antennas. InAI and IrAI strengths are taken as $-40$ dB.}\label{fig:CDF_DTDD_FD}
		\end{subfigure}\hfill
			\begin{subfigure}{0.47\textwidth}
				\centering
			\includegraphics[width=0.9\linewidth]{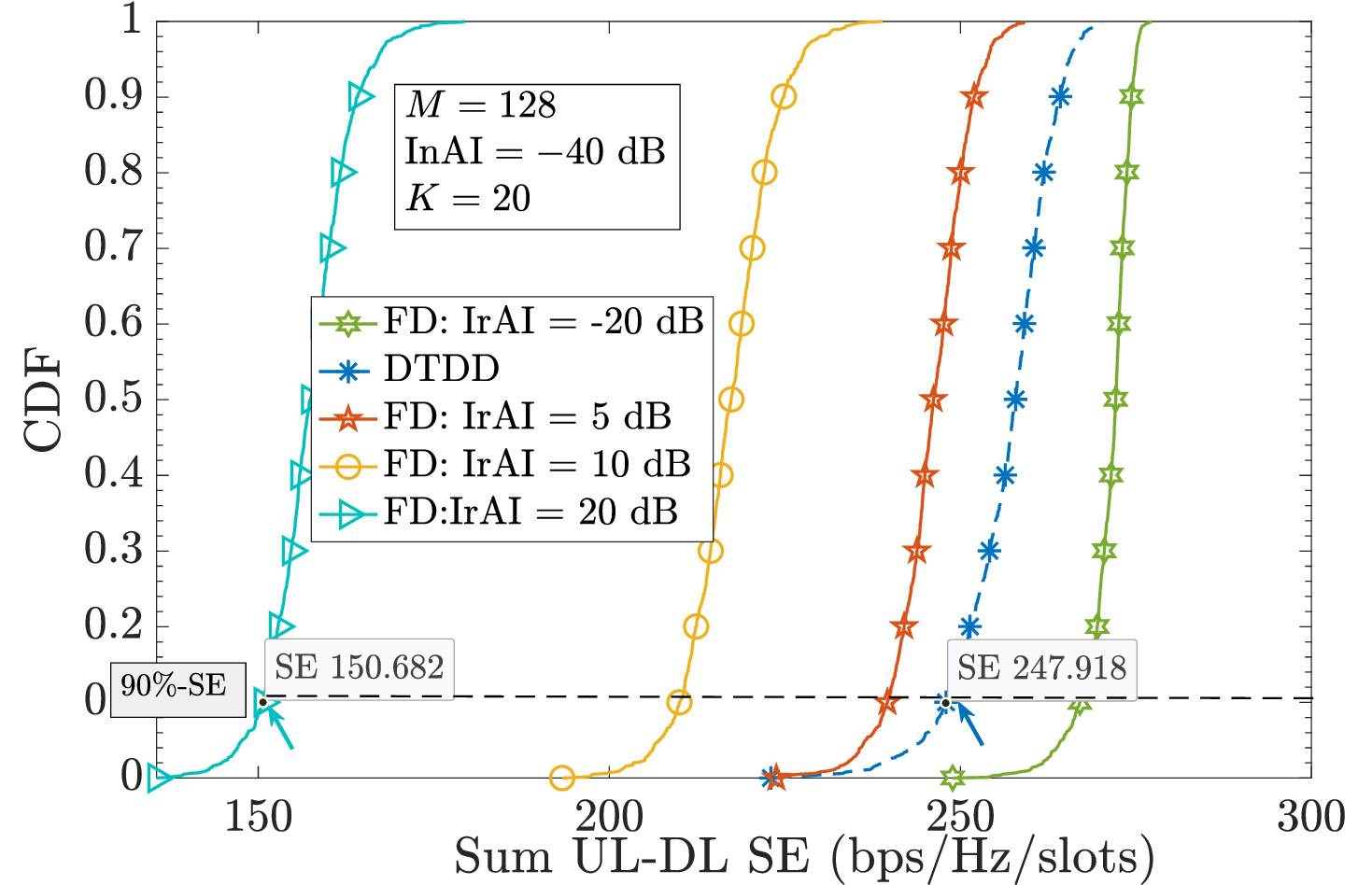}
			\caption{Effect of intra-AP interference~(IrAI) on the performance of the FD system while InAI is maintained the same for both DTDD and FD systems.}\label{fig:CDF_IrAI}
		\end{subfigure}
		\caption{Performance of DTDD compared to FD system for different antenna and AP densities with ZF combiner and precoder.}\label{fig:figs_DTDD_FD}
	\end{figure*}
	
\begin{figure}[!]
			\centering
			\includegraphics[width=0.9\linewidth]{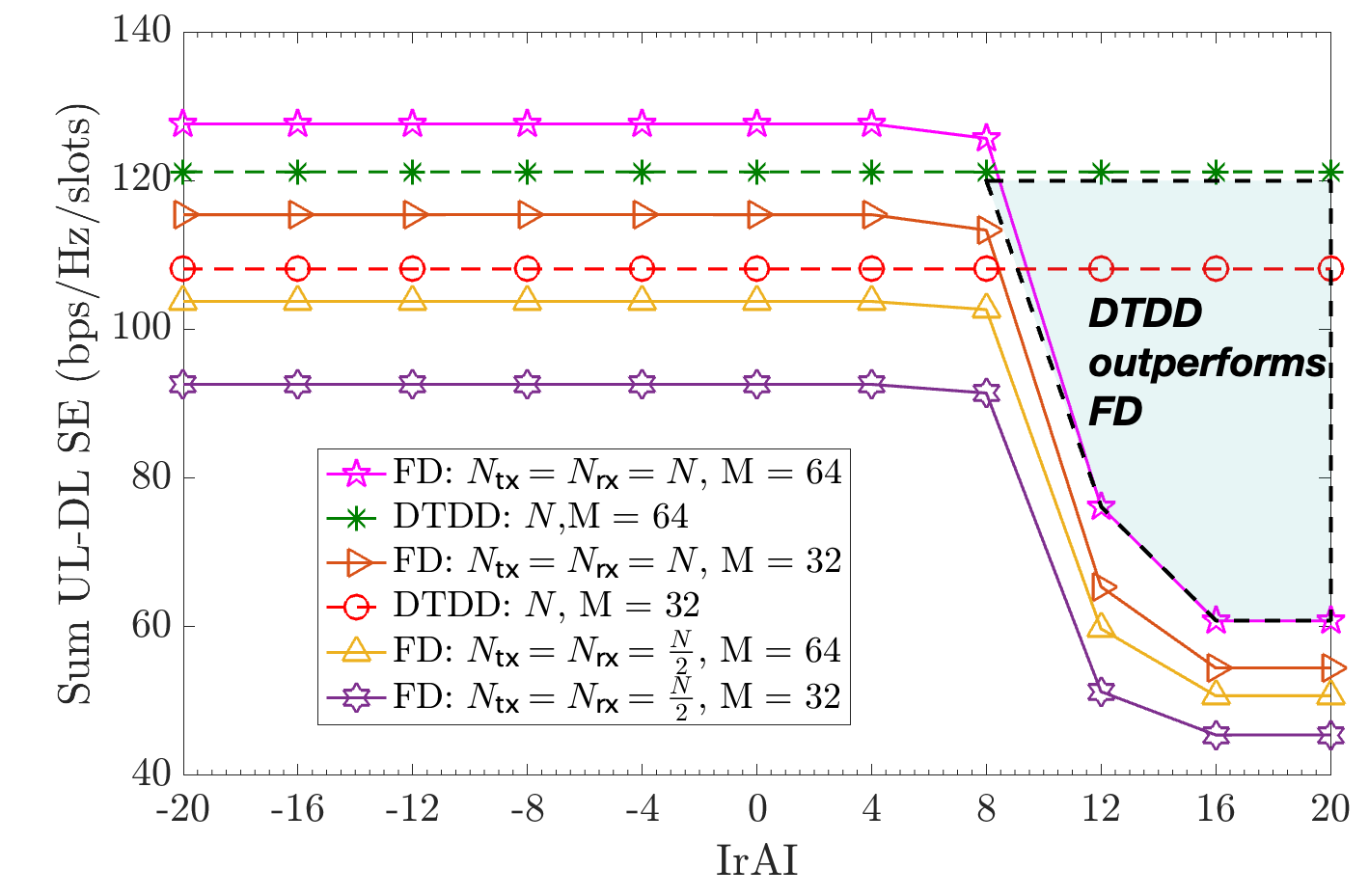}
			\caption{Sum UL-DL SE as a function of IrAI. DTDD can outperform FD even though the latter has double the antenna density}\label{fig:fig_SE_vs_SI}
		\end{figure}

	In Fig.~\ref{fig:InAI}, we illustrate the effect of InAI on the sum SE. An FD system with $(M = 64, N_{\mathsf{tx}} = N_{\mathsf{rx}} = N=8)$ with  IrAI $-20$ dB outperforms DTDD with $(M = 64, N = 8)$ (i.e., half the antenna density compared to the FD system) when the InAI is no more than $\approx 11$ dB above the noise floor. However, beyond an InAI of $11$~dB, the sum SE of the FD system degrades compared to the DTDD system. This is because, in the FD system, all APs cause InAI, while in DTDD, only the DL-scheduled APs cause inter-AP interference.
	
	\begin{figure}[!]
		\centering
		\includegraphics[width=0.9\linewidth]{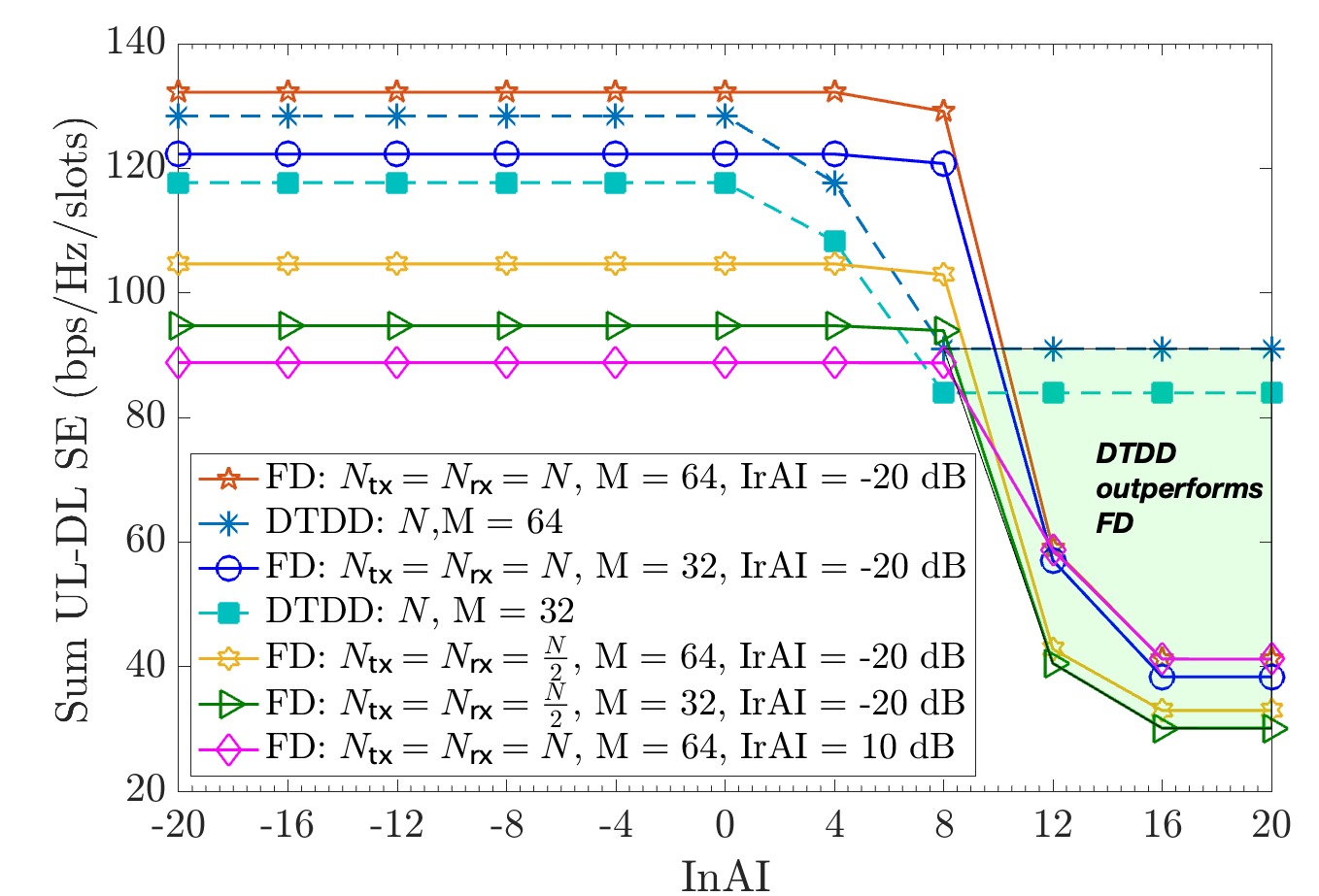}
		\caption{Effects of inter-AP interference~(InAI) on the sum UL-DL SE. We observe that DTDD is more resilient to InAI.}\label{fig:InAI}
	\end{figure}
	
	Finally, in Fig.~\ref{fig:fig_CDF_MMSE_SE}, we plot the CDF of sum UL-DL SE of the DTDD and FD systems with MMSE combining in the UL and RZF precoding in the DL, and illustrate the effects of both power control and IrAI on the sum UL-DL SE. We obtain substantial benefits by applying the proposed power control algorithms compared to equal power allocation. 
	This illustrates the applicability of the algorithms developed here to different precoder and combining schemes. 
	Also, with similar antenna density, DTDD uniformly outperforms FD, even with MMSE and RZF. This is because of the additional degrees of freedom DTDD offers in terms of UL and DL AP scheduling and, consequently, mitigating the effects of InAI better than the FD system.
	
	\begin{figure}
		\centering
		\includegraphics[width=0.9\linewidth]{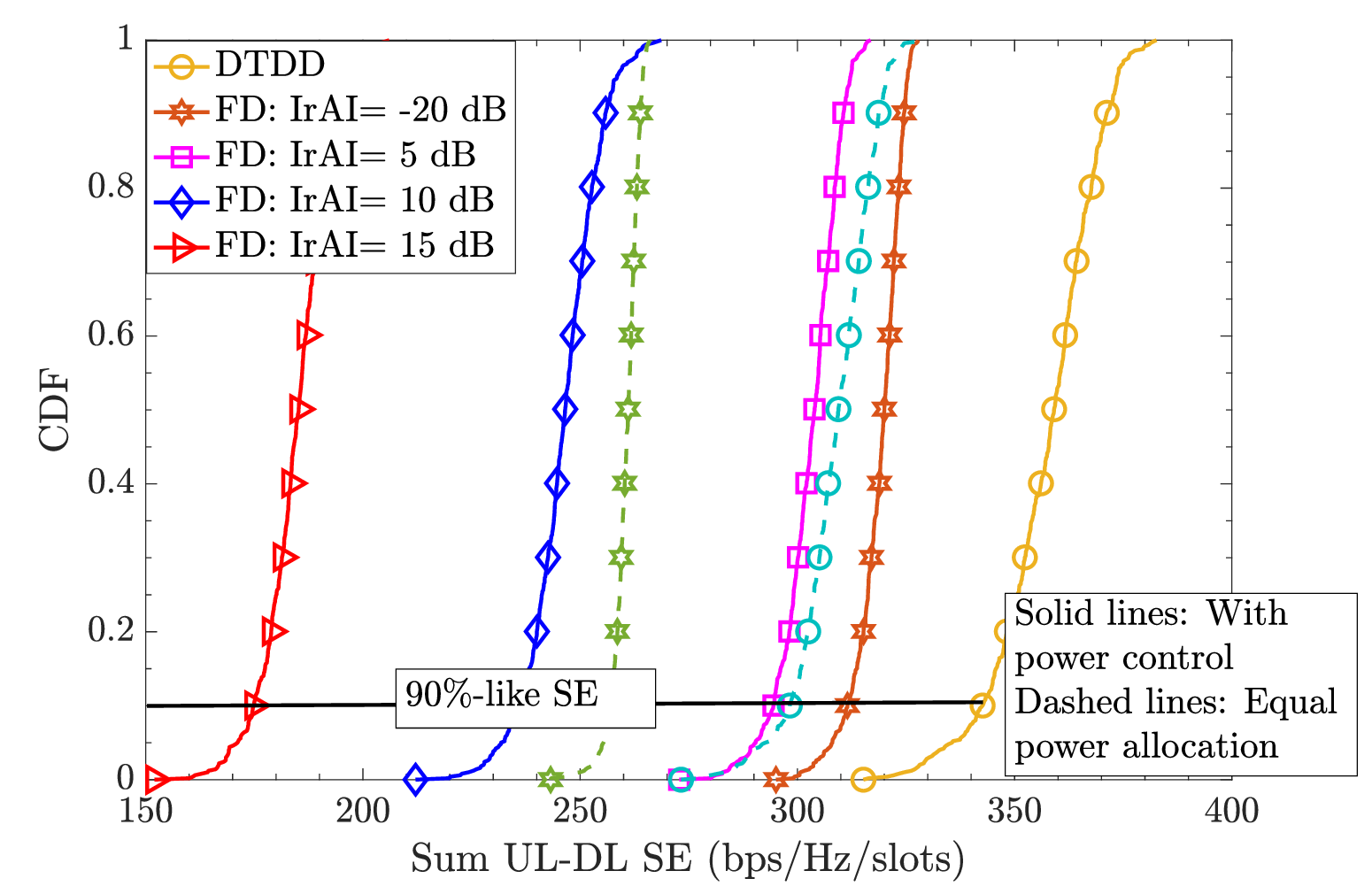}
		\caption{Effect of IrAI on the sum UL-DL SE with MMSE and RZF. InAI strength is taken as $-40$ dB, and we consider $K=20, M= 64, N =8, N_{\mathsf{tx}}=N_{rx}=N/2$~(i.e., similar antenna density.)}\label{fig:fig_CDF_MMSE_SE}
	\end{figure}

\section{Conclusion}
In this paper, we presented a comparative study of DTDD and FD in CF systems. 
First, we developed a novel graph coloring-based pilot allocation algorithm that ensures no contamination in the received signals at the APs in the vicinity of every UE while minimizing the required pilot length. Then, we optimized the UL and DL power allocation and AP scheduling for DTDD, to maximize the sum UL-DL SE. We solved this NP-hard and non-convex problem by decoupling it into AP-scheduling, UL, and DL power allocation sub-problems. We developed FP-based UL/DL power allocation algorithms and proved the convergence of the sub-problems to local optima. Further, we provided closed-form update equations using the Lagrange dual transform and ADMM for the sub-problems, making them easy to implement. We numerically illustrated the superiority of the algorithms over existing methods. Finally, we saw that DTDD outperforms FD when the two systems have a similar antenna density. This happens because DTDD can schedule the APs in UL or DL based on the localized traffic load and achieve better array gain for a given antenna density and InAI suppression. Thus, we conclude that although both DTDD and FD enable the CF system to serve UL and DL UEs concurrently, DTDD is preferable because it can meet and even outperform FD without requiring the use of IrAI cancellation hardware. Fairness guarantees under the two duplexing schemes are a good direction for future work.
\ifCLASSOPTIONcaptionsoff
\newpage
\fi
\bibliographystyle{IEEEtran.bst}
\bibliography{CF.bib}

\end{document}